%%%%%%%%%%%%%%%%%%%% author.tex %%%%%%%%%%%%%%%%%%%%%%%%%%%%%%%%%%%
%
% sample root file for your "contribution" to a contributed volume
%
% Use this file as a template for your own input.
%
%%%%%%%%%%%%%%%% Springer %%%%%%%%%%%%%%%%%%%%%%%%%%%%%%%%%%

% RECOMMENDED %%%%%%%%%%%%%%%%%%%%%%%%%%%%%%%%%%%%%%%%%%%%%%%%%%%
%

%\documentclass[graybox]{LatexPackages/svmult}
\documentclass[draft]{article} 
\usepackage{fullpage} 
\usepackage{multicol} 
\usepackage{mathptmx} 
\usepackage{amsmath}
\usepackage{amssymb}
% ADDED
\usepackage{amsthm} 
\newtheorem{theorem}{Theorem}
\newtheorem{definition}[theorem]{Definition}
\newtheorem{proposition}[theorem]{Proposition}
\newtheorem{corollary}[theorem]{Corollary}

% FIN ADDED 
\usepackage[pdftex]{graphicx}
\usepackage{LatexPackages/amssymb} 
\usepackage{LatexPackages/csquotes} 
\usepackage{LatexPackages/tikz-proofnets} 
\usetikzlibrary{positioning} %% Sylvain: allows for using right= 2em of
\usetikzlibrary{decorations.markings} %% Sylvain: allows for arrow tips not at the end of the arrow
\tikzset{vertex/.append style={inner sep=1.5pt}} %% Sylvain: enlarge the circles for vertices
%\usepackage[external]{forest} %%Sylvain: pour accélérer la compilation
%\tikzexternalize %%Sylvain: pour accélérer la compilation
\usepackage{subfigure}

\usepackage{tabulary} %% Sylvain: to get line width tables.
\usepackage{LatexPackages/prooftree} 

\usepackage{skull} 
\usepackage{url} 

\newcommand\un{\mathbf{1}} 

\renewcommand\sp{\ifmmode{\scriptscriptstyle SP}\else{\sc sp}\fi} 

\newcommand\ael{{\sc ae}}

\newcommand\edge{\!\!\--\!\!} 
 
\newcommand\fle{\!\rightarrow\!} 

\newcommand\aretearc[2]{\stackrel{#1}{#2}}
\newcommand\arete[1]{\aretearc{#1}{\longleftrightarrow}} 
\newcommand\aretevers[1]{\aretearc{#1}{\longrightarrow}} 
\newcommand\aretede[1]{\aretearc{#1}{\longleftarrow}}

%

%%%changes the acm bib to eliminate all_caps in names %%%
\usepackage{etoolbox}
\apptocmd{\thebibliography}{\renewcommand{\sc}{}}{}{}

\usepackage{LatexPackages/stmaryrd} 

\newcommand\restr\restriction 

\newcommand\dicogpn{{\sc dicog-PN}} 
\newcommand\dicogr{{\sc dicog-RS}}

\newcommand\flag{\,^<\!\!|} 
\newcommand\pa{\bindnasrepma} 
\newcommand\pasp{\widehat{\bindnasrepma}} 
\newcommand\bef{\mathbin{\triangleleft}} 

\newcommand\befsp{\mathbin{\widehat{\triangleleft}}} 
 %% Sylvain: pour que dans les liens ce soit le même symbole
\renewcommand\llprecedes{\bef} %% Sylvain: pour que dans les liens ce soit le même symbole
\newcommand\lts\ts 
%%%%%%%% renewcommand 
\newcommand\ts{\mathbin{\otimes}} 
\newcommand\tssp{\mathbin{\widehat{\otimes}}} 

\newcommand\simpa{\stackrel{\cdot\,\cdot}{\sim}}
\newcommand\simbef{\stackrel{\rightarrow}{\sim}}
\newcommand\simts{\stackrel{\--}{\sim}}

\newcommand\tablecoh[5]
{
\begin{array}{l||c|c|c|}
#1 #3 #2 & \sincoh & = & \scoh \\ \hline \hline 
\sincoh & \sincoh & \sincoh & #4 \\ \hline 
= & \sincoh & = & \scoh \\ \hline 
\scoh & #5  & \scoh & \scoh \\ \hline 
\end{array} 
}

\newtheorem{criterion}{\textbf{Correctness criterion}}
\newtheorem{interrogation}{\textbf{Question}}

\DeclareMathAccent{\arc}{\mathord}{letters}{"7E}

\newcommand\seq\vdash 

%\documentclass{svmult}

% choose options for [] as required from the list
% in the Reference Guide

%\usepackage{type1cm}        % activate if the above 3 fonts are
                            % not available on your system
%
\usepackage{makeidx}         % allows index generation
\usepackage{graphicx}        % standard LaTeX graphics tool
                             % when including figure files
%\usepackage{multicol}        % used for the two-column index
%\usepackage[bottom]{footmisc}% places footnotes at page bottom

%\usepackage{newtxtext}       %
%\usepackage{newtxmath}       % selects Times Roman as basic font

% see the list of further useful packages
% in the Reference Guide

\makeindex             % used for the subject index
                       % please use the style svind.ist with
                       % your makeindex program

%%%%%%%%%%%%%%%%%%%%%%%%%%%%%%%%%%%%%%%%%%%%%%%%%%%%%%%%%%%%%%%%%%%%%%%%%%%%%%%%%%%%%%%%%

%\newcommand\labelfig[1]{\label{#1}\fbox{\textsf{#1}}}
%\newcommand\labelmn[1]{\label{#1}\marginpar{#1}}

%\setcounter{page}{299}

%% Sylvain: ajout de commandes
\newcommand\vinf{\mathit{vinf}}
\newcommand\np{\mathit{np}}
\newcommand\NEG{\mathit{neg}}
\newcommand\obj{\mathit{obj}}
\newcommand\NE{\mathit{ne}}
\newcommand\pas{\mathit{pas}}

\setcounter{tocdepth}{2}

\title
%[Pomset Logic: the other approach to non commutativity in logic]
{Pomset Logic\\ the other approach to non commutativity in logic}
% Use 
%\titlerunning{Pomset Logic: the other approach to non commutativity in logic} 
%for an abbreviated version of
% your contribution title if the original one is too long
\author{Christian Retor\'e
--- LIRMM Univ Montpellier \& CNRS} 
\date{\footnotesize Near-final version of Retoré, C. (2021). Pomset Logic. In: Casadio, C., Scott, P.J. (eds) Joachim Lambek: The Interplay of Mathematics, Logic, and Linguistics. Outstanding Contributions to Logic, vol 20. Springer, Cham.  
\url{https://doi.org/10.1007/978-3-030-66545-6_9}} 

\begin{document}

% Use \authorrunning{Short Title} for an abbreviated version of
% your contribution title if the original one is too long

%\institute{Christian Retor\'e \at LIRMM, Univ Montpellier, CNRS,  Montpellier, France\newline \email{christian.retore@umontpellier.fr}

%\and Name of Second Author \at Name, Address of Institute \email{name@email.address}

%
% Use the package "url.sty" to avoid
% problems with special characters
% used in your e-mail or web address
%
\maketitle
\begin{abstract}Thirty years ago, I introduced a non-commutative variant of classical linear logic, called \emph{pomset logic}, issued from a particular categorical interpretation of linear logic known as  coherence spaces. In addition to the usual commutative multiplicative connectives of linear logic, pomset logic includes a non-commutative connective, "$\bef$" called \emph{before},  associative and self-dual: $(A\bef B)^\perp=A^\perp \bef  B^\perp$. The conclusion of a pomset logic proof is a Partially Ordered Multi\textsc{set}  of formulas.  
Pomset logic enjoys a proof net calculus with cut-elimination, denotational semantics, and  faithfully  embeds sequent calculus. 
\newline\indent 
The study of pomset logic has reopened with recent results on handsome proof nets, on its sequent calculus,  or on its following calculi like deep inference by Guglielmi and Stra{\ss}burger. Therefore, it is high time we published a thorough presentation of pomset logic, including  published and unpublished material, old and new results. 
\newline\indent 
Pomset logic (1993) is a non-commutative variant of linear logic (1987) 
as for Lambek calculus (1958!) and it can also be used as a grammatical formalism.     
Those two calculi are quite different, but we hope that the algebraic presentation we give here, with formulas as algebraic terms and with a semantic notion of proof (net) correctness, better matches Lambek's view of what a logic should be.
\end{abstract} 

\vfill 

\begin{multicols}{2} 
\small 
\tableofcontents 
\end{multicols} 

\vfill 

\normalsize 
\newpage 
%\small 
%\tableofcontents
%\normalsize 
%\newpage 

\section{Presentation}
\label{pres}

Lambek use to refer to his \emph{logic} \cite{Lam58} with the words \emph{syntactic calculus} thus expressing his preference for algebra, thereafter confirmed with his move from categorial grammars to pregroup grammars,  which are not a logical system. 
Up to the invention of linear logic in the  late 80s,  Lambek calculus was a rather isolated logical system, despite some study of frame semantics, which are typical of substructural logics.

Linear logic \cite{Gir87} arose from the study of the denotational semantics of system F, itself arising from the study of ordinals \cite{Gir86}. 
For interpreting systems F (second order lambda calculus) with variable types, one needed to refine the categorical interpretation of simply typed lambda calculus with Cartesian Closed Categories. In order to quantify over types Girard considered the category of coherence spaces,  initially called qualitative domains,  with stable maps, which preserve directed joins and pullbacks. 
A finer study of coherence spaces led Girard to discompose the arrow type construction in to two steps: one is to contract several object of type $A$ into one (modality/exponential !) and the other one being linear implication (noted $\multimap$) which rather corresponds to a change of state than to a consequence relation. 

Linear logic was first viewed as a proof system (sequent calculus or proof nets) which is well interpreted by coherence spaces. The initial article \cite{Gir87} also included the definition of phase semantics, that resembles frame semantics developed for the Lambek calculus. 
It was not long before the connection between linear logic and Lambek calculus was found: after some early remarks by Girard, Yetter \cite{Yet90} observed the connection at the semantic level, while Abrusci \cite{Abru91} explored the syntactic, proof theoretical connection, while \cite{Ret96tal} explored proof nets and completed the insight of \cite{Roo92}. Basically Lambek calculus is non commutative intuitionistic multiplicative logic, the order between the two restrictions, intuitionistic and non commutative, being independent. An important remark, that I discussed with Lamarche in \cite{LR96}, says that non commutativity requires linearity in order to get a proper logical calculus.

Around 1988,  my PhD advisor Jean-Yves Girard pointed to my attention  a binary non commutative connective $\bef$ in coherence spaces. In coherence spaces, this connective has intriguing properties: 
\begin{itemize} 
\item $\bef$ is self dual $(A\bef B)^\perp\equiv (A^\perp \bef B^\perp)$, without swapping the two components   --- by $X \equiv Y$ we mean that there is a    pair of canonical invertible linear maps between $X$ and $Y$. 
\item $\bef$  is non commutative  $(A\bef B) \not\equiv (B\bef  A)$
\item $\bef$ is associative $((A\bef B)\bef C)\equiv (A\bef (B\bef C))$; 
\item it lies  in between the commutative conjunction $\ts$ and disjunction $\pa$ 
there is a canonical linear map   from $(A\ts B)$ to $(A\bef B)$ an one from $(A\bef B)$ to $(A\pa B)$  --- remember that coherence spaces validate the \textsc{mix} rule $(A\ts B)\multimap (A\pa B)$; 
\end{itemize} 
I designed a proof net calculus with this connective, in which a sequent, that is the conclusion of a proof,  is a partially ordered multiset of formulas. This proof net calculus  enjoys cut-elimination and a sound and faithful
(coherence) semantics, in the sense that having an interpretation (which is preserved under cut elimination) is the same as being syntactically correct cf. Theorem \ref{semcorrect}. 
I proposed a  version of sequent calculus that easily translates into those proof nets and enjoys cut-elimination as well \cite{Ret93}.  However despite many attempts by me and others (Sylvain Pogodalla, Lutz Stra{\ss}burger) over many years we did not find a sequent calculus that would be complete w.r.t. the proof nets. Later on, Alessio Guglielmi, soon joined by Lutz Stra{\ss}burger, designed the calculus of structures, a term calculus more flexible than sequent calculus (deep inference)  with the \emph{before} connective \cite{Gug94,Gug99,GugStr01}, a system that is quite close to dicograph rewriting \cite{Ret98roma,Ret99romarr}  They tried to see whether one of their systems called SBV was equivalent to pomset logic; here we include in Section \ref{sbv} a new result saying deep pomset  i.e. a rewrite view of pomset logic proof nets corresponds to  SBV as well as  an old result of us saying that SBV tautologies are  provable formulas of pomset proof nets. 

As a reviewer of my habilitation \cite{RetoreHDR} Lambek wrote: 
\begin{quotation} \it 
He constructs a model of linear logic using graphs, which is new to me. His most original contribution is probably the new binary connective which he has added to his non commutative version of linear logic, although I did not find where it is treated in the sequent calculus. (J. Lambek, Dec. 3, 2001) 
\end{quotation} 
I deliberately omitted my work on sequent calculus in my habilitation manuscript, because  none of  the sequent calculi I experimented with  was complete w.r.t. pomset proof nets which are "perfect", i.e. enjoy all the expected proof theoretical properties. In addition, by that time, I did not yet have a counter example to my proposal of a sequent calculus, the one in Figure  \ref{counterex} of Section \ref{sequentialisation} was found ten years later with Lutz Stra{\ss}burger. 

However, very recently, Slavnov found a sequent calculus that is complete w.r.t. pomset proof nets \cite{Slavnov2019}.  The structure of the decorated sequents that Slavnov uses is rather complex\footnote{A decorated sequent according to Slavnov is  a multiset of pomset formulas $A_1,\ldots,A_n$ with 
$p\leq n/2$ binary relations $(R_k)$ for $1\leq k\leq p$ between sequences of length $p\leq n/2$ of formulas  from $\Gamma$; those relations are such that whenever $(B_1,\ldots,B_k R_k (C_1,\ldots,C_k)$ the two sequences $(B_1,\ldots,B_k)$ and $(C_1,\ldots,C_k)$ have no common elements and 
$(B_1,\ldots,B_k) R_k (C_1,\ldots,C_k)$ entails $(B_{\sigma(1)},\ldots,B_{\sigma(k)}) R_k (C_{\sigma(1)},\ldots,C_{\sigma(k)})$ for any permutation $\sigma$ of $\{1,\ldots,k\}$ -- those relations correspond to the existence of disjoint paths in the proof nets from $B_i$ to $C_i$.}
and the connective $\bef$ is viewed as the identification of two dual connectives one being more like a $\ts$ and the other more like a $\pa$. As this work is not mine I shall not say much about it, but Slavnov's work really  sheds new light on pomset logic.  Given the complexity of Slavnov sequent calculus,  it is enjoyable to provide a simple sequent calculus for pomset logic, even though it does not generate all proof nets of pomset logic. 

Pomset logic and the Lambek calculus systems share some properties: 
\begin{itemize} 
\item They both are linear calculi; 
\item They both handle non commutative connective(s) and structured sequents; 
\item They both have a sequent calculus; 
\item They both enjoy cut-elimination; 
\item They both have a complete sequent calculus (regarding  pomset logic the complete sequent calculus is quite new);  
\item They both can be used as a grammatical system. 
\end{itemize} 

However Lambek calculus and pomset logic are quite different in many respects: 
\begin{itemize} 
\item Lambek calculus is naturally an intuitionistic calculus while pomset logic is naturally a classical calculus --- although in both cases variants of the other kind can be defined, cyclic linear logic of Yetter is a classical version of Lambek calculus \cite{Yet90}, and the LLMS system of Reddy is a term calculus for the semantics of higher-order imperative languages, and   it can be considered as an intuitionistic  version of pomset logic \cite{Red93}.\footnote{This model of higher-order imperative programming with "before" strongly inspired the author subsequent model \cite{Red96}, although it is not really a logical system anymore.}  
\item Lambek calculus is a restriction of the usual multiplicative linear logic  according to which the connectives are no longer commutative, while pomset logic is an extension of usual commutative multiplicative linear logic with a non commutative connective. 
\item Lambek calculus deals with totally ordered multisets of hypotheses while pomset logic deals with partially ordered multisets of formulas. As grammatical systems, pomset logic allows relatively free word order, while Lambek calculus only deals with linear word orders. 
\item Lambek calculus has an elegant truth-value interpretation within the subsets of a monoid (frame semantics, phase semantics), while there is not such a notion for pomset logic. 
\item Lambek calculus has no simple concrete  interpretation of proofs up to cut elimination  (denotational semantics) while coherence semantics faithfully interprets the proofs of pomset logic. 
\end{itemize}

This list shows that those two comparable systems also have many differences. However, the presentation of Pomset logic provided by the present article make Lambek calculus and pomset logic rather close on an abstract level. As he often said, Lambek did not like standard graphical or geometrical presentation of linear logic like proof nets. He told me several times that moving from geometry to algebra has been a great progress in mathematics and solved many issues, notably in geometry, and that proof net study was going the other way round.  I guess this is related to what he said about Theorem \ref{semcorrect} in the present paper:  

\begin{quotation} \it 
It seems that this ingenious argument avoids the complicated long trip condition of Girard. It constitutes a significant original contribution to the subject. (J. Lambek,Dec 3 2001) 
\end{quotation}

This paper is a mix (!) of easy to access published work, \cite{FR94,Ret96tal,BGR97,Ret97tlca,Ret97,Ret96entcs,Ret03tcs}  research reports and more confidential publications \cite{Ret93,Ret93b,Ret93b,LR95,Ret94rc,LeRe96,LR96,LeRe97,Ret98roma,Ret99romarr,Ret99rr,RetoreHDR,PogoReto04}, unpublished material between 1990 and 2020, that are all  presented in the same and rather new unified perspective; the presented material can be divided into four topics: 
\begin{description}
\item[\bf proof nets] 
handsome proof nets both for MLL Lambek calculus and pomset logic, and other work on proof nets \cite{FR94,LR96,Ret96tal}, 
\item[\bf combinatorics] 
(di)cographs and \sp\ orders  \cite{Ret93b,Ret93,Ret96entcs,BGR97,Ret98roma,Ret99romarr,Ret99rr,RetoreHDR,Ret03tcs,RetoreHDR,Ret03tcs,PogoReto04}, 
\item[\bf coherence semantics]  \cite{Ret93,Ret94rc,Ret97,RetoreHDR}, 
\item[\bf grammatical applications]  of pomset logic to computational linguistics \cite{LR95,LeRe96,LeRe97,Ret98roma,Ret99romarr,RetoreHDR}.
\end{description}
The contents of the present article is divided into six sections as follows: 
\begin{enumerate}
 \setcounter{enumi}{1}
 \item As a starter, we offer a glimpse of pomset logic, an informal tour which summarizes the most important constructions and results of pomset logic. 
     \item  We then present results on series parallel partial orders, cographs and dicographs that subsumes those two notions and present dicograph either as \sp\ pomset of formulas or as dicographs of atoms, and explain the guidelines for finding a sequent calculus. This combinatorial part is a prerequisite for the subsequent sections. 
    \item Proof nets without links, the so called handsome proof nets, are presented as well as the cut elimination for them. 
    \item The semantics of proof nets, preserved under cut elimination and equivalent to their syntactic correctness is then presented. 
    \item Then the sequentialisation "the quest" of a complete sequent calculus is discussed and we provide an example of a proof net that does not derive from any simple sequent calculus. We do not present Slavnov sequent calculus which is quite complicated and thoroughly presented in his recent paper \cite{Slavnov2019}. 
 \item We then present proofs in an algebraic manner, à la depp inference, with deduction rules as term rewriting and show the correspondence btween this view and the calculus of structures known as SBV. 
    \item Finally we explain how one can design grammars by associating words with  partial proof nets of pomset logic. 
\end{enumerate}

\section{A glimpse of pomset logic} 
\label{glimpse}

When asked for a presentation of a different sort of logic, the preferred way of most readers is to provide them with a sequent calculus. Hence we shall give a simple sequent calculus which is a subcalculus of the sequents that pomset logic is able to derive.

The formulas of pomset logic are defined from atoms (propositional variables or their negation) by means of the usual commutative multiplicative connectives $\pa$ and $\ts$ together with the new non commutative multiplicative  connective $\bef$ (before)--- the three of them are associative.  

\newcommand\FFF{\mathcal{F}}
\newcommand\PPP{\mathcal{P}} 

$$\FFF\ ::=\ \PPP\ |\ \PPP^\perp\ |\ \FFF\ts \FFF\ |\ \FFF\pa \FFF\ |\ \FFF\bef \FFF$$

It is assumed that formulas are always in negative normal form: negation only apply to propositional variables; this is possible and  standard when negation is involutive and satisfies the De Morgan laws:

$$
\begin{array}{rcl} 
(A^\perp)^\perp&=&A\\ 
(A\pa B)^\perp&=&(A^\perp\ts B^\perp)\\ 
(A\bef B)^\perp&=&(A^\perp\bef B^\perp)\\ 
(A\ts B)^\perp&=&(A^\perp\pa B^\perp)\\ 
\end{array}
$$

Sequents of pomset logic are right handed and they are partially ordered multisets of formulas (pomsets of formulas). We assume those partial orders are described by operation from the one point order. Although we shall be much more precise in the next section (Section \ref{dicographs}) about partial orders, we need to define two operations on partial orders, at least informally. Given two partially ordered multisets of formulas, $\Gamma$ and $\Delta$, let us define two orders whose domain is the disjoint unions of the two domains and which preserve order on each domain: 
\begin{itemize}
    \item 
$\{\Gamma,\Delta\}$ their parallel composition: any two formulas one of them in  $\Gamma$ and the other one in $\Delta$ cannot be compared. This operation is associative and commutative. 
\item 
$\langle \Gamma ; \Delta \rangle$ their series composition: any formula in $\Gamma$ is smaller than any formula in $\Delta$. This operation is associative, but non commutative. 
\end{itemize}

The expression $\Gamma[X]$
denotes any pomset including a propositional variable $X$, 
and given a pomset $\Delta$ the expression $\Gamma[\Delta]$ denotes the pomset obtained by substituting in  $\Gamma[X]$ the formula $X$ with the pomset $\Delta$  as a term.

\begin{figure} 
\begin{center} 
\input{pomset_orders_intuitive}
\end{center} 
\caption{A simple sequent calculus for pomset logic} 
\label{simpleseqcalc} 
\end{figure} 

\begin{figure} 
$$
\begin{prooftree}
\[ 
\[ %1
\[
\seq \{a, a^\perp\} 
\justifies 
\seq a \pa a^\perp  
\] 
\qquad 
\[
\seq \{b,b^\perp\} 
\justifies 
\seq b\pa b^\perp  
\] 
\justifies 
\seq (a \pa a^\perp)\ts (b\pa b^\perp) 
\]%1
\qquad \qquad 
\seq c,c^\perp 
\justifies 
\seq \langle (a \pa a^\perp)\ts (b\pa b^\perp) ; \{c,c^\perp\} \rangle 
\using dimix 
\]
\justifies 
\seq \{\langle (a \pa a^\perp)\ts (b\pa b^\perp) ; c \rangle,  c^\perp\}  
\using entropy 
\end{prooftree} 
$$ 
\caption{Example a proof in pomset logic in the simple sequent calculus of Figure \ref{simpleseqcalc}.} 
\label{asequentproof} 
\end{figure}

This sequent calculus extends classical multiplicative linear logic. Orders can be "weakened" until the discrete order is reached. When $dimix$ is not used (hence $entropy$ cannot be used either) this calculus is MLL.

As sequent calculus is best suited for classical logic,  as intuitionistic logic fits in well with natural deduction, multiplicative linear logic is better expressed with proof nets, and this is even more striking in the pomset logic case. Nevertheless, for pedagogical reason 
% modification 27 sept 
we give
% fin de modification 27 sept 
a simple sequent calculus for pomset logic, which does not encompass all proof nets to be later defined. 

There is an elegant proof net calculus where to map the sequent calculus proofs, defined in Section \ref{PNling}  identifying the sequent calculus proofs that are essentially similar, like the ones obtained one from the other by commuting rules. In addition to the par and time links, one needs a link for before. Although a Danos Regnier criterion is absolutely possible, it is unnatural for this calculus, for which it is easier to use edge bicoloured graphs (blue and red) with undirected B edges and R edges, some of them being being directed. Here are the links: 

\begin{figure} 
\begin{center} 
\begin{tabulary}{\linewidth}{|c|C|C|C|C|C|}\hline 
 & Axiom & Par $\pa$ & Before $\bef$ & Times $\ts$ & Cut \\ \hline  
Premisses & None & $A$ and $B$ & $A$ and $B$ &$A$ and $B$ & $K$ and $K^\perp$ \\ \hline 
RnB link &  
 \begin{tikzpicture}[PS]  
    \node[vertex,label=-90:$a$] (a) {};
    \node[vertex,label=-90:$\llneg{a}$, right=2em of a] (acomp) {};
      \draw[axiom, controls=+(90:0.5) and +(90:0.5)] (a) to (acomp);
 \end{tikzpicture}
&
\begin{forest} proof structure, commutative, for tree={s sep=2.5em}
  [,label=0:$A\pa B$
  [,par
  [,label=180:$A$,name=leftA]
  [,label=0:$B$,name=rightB]
  ]
  ]
\end{forest} 
& 
 \begin{forest} proof structure, commutative, for tree={s sep=2.5em}
  [,label=0:$A\bef B$
  [,precedes
  [,label=180:$A$,name=leftA,]
  [,label=0:$B$,name=rightB]
  ]
  ]
\end{forest}
&
\begin{forest} proof structure, commutative, for tree={s sep=2.5em}
  [,label=0:$A\otimes B$
  [,times
  [,label=180:$A$,name=leftA,]
  [,label=0:$B$,name=rightB]
  ]
  ]
\end{forest}
&
 \begin{forest} proof structure, commutative, for tree={s sep=2.5em}
  [,fill=black
  [,cut %% Sylvain: replacing "times" -> do you want the times symbol inside?
  [,label=180:$A$,name=leftA,]
  [,label=0:$A^\perp$,name=rightB]
  ]
  ]
\end{forest}
\\ \hline 
Conclusion(s) & $a$ and $a^\perp$ & $A\pa B$ & $A\bef B$ & $A\ts B$ & None \\ \hline 
\end{tabulary} 
\end{center} 
\caption{The links of pomset logic as edge bicoloured graphs. In a proof structure,  the conclusion of a link is the premisse of at mots one link, and each premisse of a link is the conclusion of exactly one link. A formula that is not the premisse of any link is said to be a conclusion of the proof structure. Cuts are conclusions  $K\ts K^\perp$, they never can be the premisse of any link.} 
\label{rnblinks} 
\end{figure}

Proof nets are defined as the simple graphs defined from those links for which blue edges  define a perfect matching and without elementary circuits (directed cycles without twice the same vertex) alternating the B (axioms and formulas) and the R edges (connectives).

\begin{figure} 
\begin{center} 
\begin{forest} proof structure, with phantom node, commutative
[,phantom
  [,fit=band
  [,precedes
  [
  [,times
  [
   [,par  [,label=180:$a$,name=a] [,label=0:$\llneg{a}$,name=acomp] ]
    ]
  [
   [,par  [,label=180:$b$,name=b] [,label=0:$\llneg{b}$,name=bcomp]]
   ]
  ]
  ] 
  [,label=0:$c$,name=c]
  ]
  ]
    [,l*=3,label=0:$\llneg{c}$,name=ccomp]
  ]
\draw[axiom, controls=+(90:0.5) and +(90:0.5)] (a) to (acomp);
\draw[axiom, controls=+(90:0.5) and +(90:0.5)] (b) to (bcomp);
\draw[axiom, controls=+(90:0.5) and +(90:0.5)] (c) to (ccomp);
\end{forest}
\end{center}
\label{aproofnetwlinks} 
\caption{The proof net corresponding to the sequent calculus proof in Figure \ref{asequentproof}} 
\end{figure}

However, there is a much more interesting view of multiplicative proof nets, the so-called \emph{handsome proof nets} that I first introduced for usual multiplicative linear logic \cite{Ret96entcs,Ret99rr} do not have  links, as we shall see in Section \ref{handsomepn}. A handsome proof net  a graph which does not depend from the associativity and commutativity of the connectives. The logical formula is the R graph, the axioms linking atoms are the B edges  and the  criterion is: every alternating elementary circuit contains a chord (that is an edge directed or not linking two points on the circuit but not itself in the circuit).\footnote{In a proof net with links there cannot be chords on alternate elementary cycles. hence  this criterion when applied to proof net with links is the one we defined above for proof nets with links.}

\begin{figure} 
\begin{center} 
\begin{tikzpicture}[PS,scale=1.5]
\tikzset{vertex/.append style={fill=black}}
  %% La commande \vertex construit un sommet avec le label donné entre
  %% crochet (n'accepte pas de commande} et lui donne ce nom.
  %% Si en plus option 'neg', on rajoute un \perp au label et un
  %% 'comp' au nom du sommet
  %% Si on veut donner un autre label, rajouter l'option (on est en
  %% mode math par défaut, donc ne pas rajouter les $
  \vertex[orientation=180,on circle]{a};
  \vertex[orientation=120,neg,on circle]{a};
  \vertex[orientation=60,on circle]{b};
  \vertex[orientation=0,neg,on circle]{b};
  \vertex[orientation=-60,on circle]{c};
  \vertex[orientation=-120,neg,on circle]{c};

  \draw[Blink] (a) to (acomp);
  \draw[Blink] (b) to (bcomp);
  \draw[Blink] (c) to (ccomp);

  %% \rlink construit un lien rouge orienté
  \llop{\cutlink}{a,acomp}{b,bcomp}
  \llop{\rlink}{a,acomp,b,bcomp}{c}
%% Pour mettre la flèche pas tout au bout 
%%  \llop{\rlink[-,decoration={markings,mark=at position .95 with {\arrow[>=stealth]{>}}},postaction={decorate}]}{a,acomp,b,bcomp}{c}

  %% \cutlink construit un lien rouge non orienté

\end{tikzpicture}
\end{center} 
\label{ahandsomeproofnet} 
\caption{The handsome proof net corresponding to the proof net in Figure \ref{aproofnetwlinks} and to the sequent calculus proof in Figure \ref{asequentproof}} 
\end{figure}

This proof net calculus enjoys cut-elimination. Further more, cuts take part to the order on the conclusions, which might be view as the encoding of a strategy to reduce them, see Section \ref{handsomecuts}. 

Some graph rewriting rules preserve the correctness of handsome proof nets so we develop in section \ref{sbv} a notion of derivation of proof nets from axioms 
$(a_1\pa a_1^\perp) \ts \cdots \ts (a_n\pa a_n^\perp)$
which are themselves proof nets: this kind of derivation is called deep pomset logic; of course as shown in \cite{Ret98roma,Ret99romarr} rewriting preserves correctness. Such a rewrite view of pomset logic was just suggested at the end of \cite{Ret99rr}, but it was later developed by Gugliemi and Stra{\ss}burger in \cite{GugStr01} within the calculus of structures and deep inference. We here prove that their SBV calculus corresponds to deep pomset logic. Of course we would like to know whether the rewriting from axioms produces all correct proof nets, but we do not know. 

Pomset logic  is easily interpreted in coherence spaces. Proofs (proof nets) are interpreted as elements of the coherence space associated with the conclusion, in such away that this interpretation of proofs is preserved by cut-elimination,  see Section \ref{cohInt}. 

Furthermore, the fact that the interpretation of a proof net as a set of "experiments" is a semantic object (a clique of the corresponding coherence space) is equivalent to the correctness of the proof net,  see Section \ref{cohInt}.

\section{Structured sequents as dicographs of formulas}
\label{orders} 

In order to draw a distinction between $A\bef B$ and $B\bef A$, we need some structure on the formulas in a sequent, i.e. on multisets of formulas, and operations on partial orders. This is a pendant to what happen with cyclic linear logic \cite{Yet90}: formulas are organised in a total cyclic order, and binary rules need operations to combine cyclic orders.  

\subsection{Looking for structured sequents} 
\label{structsequents}  

The formulas of pomset logic we consider are defined from atoms (propositional variables or their negation) by means of the usual commutative multiplicative connectives $\pa$ and $\ts$ together with the new non commutative connective $\bef$ (before)--- the three of them are associative.  
As said above, as De Morgan laws allow, it is assumed that formulas are always in negative normal form.

We want to deal with series parallel partial orders of formulas: $O_1 \pa O_2$ corresponds to parallel composition of partial orders (disjoint union) 
and $O_1 \bef O_2$ corresponds to the series composition of partial orders (every formula in the first partial order $O_1$ is lesser than every formula in the second partial order $O_2$). 
Thus, a formula written with $\pa$ and $\bef$ corresponds to a partial order between its atoms. Unsurprisingly,  we firstly need to study a bit partial orders defined with series and parallel composition. 

However, what about the multiplicative, conjunction namely the $\ts$ connective? It is commutative, but it is distinct from $\pa$. In order to include $\ts$ in this view, where formulas are binary relations on their atoms,  we  consider,  the more general class of irreflexive binary relations that are obtained by $\pa$ parallel composition, $\bef$ series composition and $\ts$ symmetric series compositions, which basically consists in adding the relations of $R_1\bef R_2$
and the ones of $R_2\bef R_1$. The relations that are defined using $\pa$, $\ts$, $\bef$ are called directed cographs or dicographs for short. 

If only $\pa$ and $\ts$ are used the relations obtained are cographs. They have already been quite useful for studying MLL,  see e.g. Theorem \ref{rewMLL} thereafter. 

Before defining pomset logic, we need a presentation of directed cographs.

\subsection{Directed cographs or dicographs} 
\label{dicographs} 

An irreflexive relation  $R\subset P^2$
may be viewed as a graph with vertices $P$ and with both directed edges and undirected edges but without loops. Given an irreflexive relation $R$ let us call its directed part (its arcs)  $\arc R=\{(a,b)\in R| (b,a)\not\in R\}$ and its symmetric part (its edges) $\bar R=\{(a,b)\in R| (b,a)\in R\}$. 
It is convenient to note $a\edge b$ for the edge or pair of arcs $(a,b),(b,a)$ in $\bar R$ 
and to denote $a\fle b$ for $(a,b)$ in $R$ when $(b,a)$ is not in $R$.

\begin{definition}[dicograph]
We consider the class of directed cographs, dicographs for short, which is the smallest class of binary irreflexive relations containing the empty relation on the singleton sets and closed under the following operations defined on pairs of cographs with disjoint domains $E_1$ and $E_2$ yielding a binary relation on $E_1 \uplus E_2$
\begin{itemize} 
\item symmetric series composition $R_1\tssp R_2=R_1\uplus R_2 \uplus (E_1\times E_2) \uplus (E_2\times E_1)$
\item directed series composition $R_1\befsp R_2=R_1\uplus R_2 \uplus (E_1\times E_2)$
\item parallel composition $R_1\pasp R_2=R_1\uplus R_2$
\end{itemize} 
When directed series composition is not used, the graph is said to be a \emph{cograph}. 
When symmetric series composition is not used, the graph is said to be a  \emph{series-parallel partial order} (\sp-order)
\end{definition}

Whenever there are no directed edges (a.k.a.  arcs) the dicograph is a cograph ($\befsp$ is not used). Cographs are characterised by the absence of $P_4$ as many people (re)discovered including us \cite{Ret96entcs}, see e.g. \cite{Kel85}. The graph $P_4: a\edge b \edge c \edge d$ is a path of length $4$, and $P4$-free means that the restriction of the graph to four distinct points never is $P4$: either it contains another edge, or does not contains the three consecutive edges. 

Whenever there are only directed edges (a.k.a.  arcs) the dicograph is an \sp\ order ($\tssp$ is not used) and there are characterised as $N$-free partial orders --- as rediscovered in \cite{Ret93}, see e.g. \cite{Moh85}. The finite order $N$ is  $a <b, c<b, c<d$, and $N$-free means that the restriction of the order to four distinct points never is $N$: either it contains another order relation, or it does not contains the three order relations of $N$. 

We characterised the class of directed dicographs as follows \cite{BGR97,Ret98roma,Ret99romarr}: 

\begin{theorem}
An irreflexive binary relation $R$ is a dicographs if and only if: 
\begin{itemize} 
\item $\arc R$ is N-free ($\arc R$ is an \sp\ order).  
\item $\bar R$  is $P_4$-free ($\bar R$ is a cograph). 
\item Weak transitivity:\footnote{The definition we give here is ours \cite{BGR97}  from 1997. In 1999, Guglielmi found an alternative but equivalent definition \cite{Gug99}.}  \newline 
forall $a,b,c$ in the domain of $R$\newline 
if $(a,b)\in \arc R$ and $(b,c)\in R$ then $(a,c)\in R$ and \newline 
if $(a,b)\in R$ and $(b,c)\in \arc R$ then $(a,c)\in R$
\end{itemize} 
A dicograph can be described with a term in which each element of the domain appears exactly once. This term is written with the three binary operators $\tssp$,  $\pasp$ and $\befsp$ and for a given dicograph this term is unique up to the associativity of the three operators, and to the commutativity of the first two, namely $\pasp$ and $\tssp$. 
\end{theorem}

\begin{definition} 
The \textbf{dual} (or negation) $R^\perp$ of a dicograph $R$ on $P$ is defined as follows: points are given a $\perp$ superscript,
$\arc R^\perp=\arc R$ and  $(\overline{R^\perp})=(P^2 \setminus \bar R) \setminus \{(x,x)| x\in P\}$ or $(a^\perp)=(a)^\perp$, $(a^\perp)^\perp=a$, $(X\tssp Y)^\perp =(X^\perp\pasp Y^\perp)$, $(X\pasp Y)^\perp =(X^\perp\tssp Y^\perp)$, $(X\befsp Y)^\perp =(X^\perp\pasp Y^\perp)$. 
\end{definition} 

\begin{definition}[equivalent poitns in a dicograph]  
Two points $a$ and $b$ of $P$ are said to be \textbf{equivalent} w.r.t. a relation whenever for all $x\in P$ with $x\neq a, b$ 
one as $(x,a)\in R \Leftrightarrow (x,b)\in R$ and $(a,x)\in R \Leftrightarrow (b,x)\in R$. There are three kinds of equivalent points: 
\begin{itemize} 
\item 
Two points $a$ and $b$ in a dicograph are said to be \textbf{freely equivalent} in a dicograph (notation $a \simpa b$) whenever the term can be written (using associativity of $\pasp$ and $\befsp$ and the commutativity of $\pasp$ and $\tssp$)  $T[a\pasp b]$. 
In other words, $a \sim b$,  $(a,b)\not\in R$, $(b,a)\not\in R$. 
\item 
Two points $a$ and $b$ in a dicograph are said to be \textbf{arc equivalent}  in a dicograph (notation $a \simbef b$) whenever the term can be written (using associativity of $\pasp$, $\tssp$ and $\befsp$ and the commutativity of $\pasp$ and $\tssp$)  $T[a\befsp b]$. 
In other words, $a \sim b$,  $(a,b)\in R$, $(b,a)\not\in R$. 
\item 
Two points $a$ and $b$ in a dicograph are said to be \textbf{edge equivalent} in a dicograph (notation $a \simts b$) whenever the term can be written (using associativity of $\pasp$ and $\befsp$ and the commutativity of $\pasp$ and $\tssp$)  $T[a\tssp b]$. 
In other words, $a \sim b$,  $(a,b)\in R$, $(b,a)\in R$. 
\end{itemize} 
\end{definition} 

\subsection{Dicograph inclusion and (un)folding} 
\label{dicographfoldunfold} 

The order on a multiset of formulas, can be viewed as a set of contraints. Hence, when a sequent is derivable with an \sp\ order $I$ it is also derivable with a sub \sp\ order $J\subset I$ --- we named this structural rule entropy \cite{Ret93}. All but one ($\ts\pa4$) of the following transformations of a dicograph into a subdicograph  (w.r.t. inclusion)  preserve provability. Hence we need to characterise the inclusion of a dicograph into another and possibly to view the inclusion as a computational process that can be performed step by step. Fortunately, in \cite{BGR97} we characterised the inclusion of a dicograph in another dicograph by a rewriting relation: 

\begin{theorem} 
A dicograph $R'$ is included into a dicograph $R$ if and only if the term $R$ 
rewrites to the term $R'$ using the rules of Figure \ref{subdicograph} — up to the associativity of $\tssp$, $\befsp$ and $\pasp$, and to the commutativity of $\pasp$ and $\tssp$. 
\end{theorem}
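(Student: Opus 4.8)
The plan is to prove the two implications separately, the left-to-right ``soundness'' direction being by far the easier. For soundness, I would check rule by rule that every rewrite rule of figure~\ref{subdicograph} rewrites a term into one that denotes a \emph{sub}-relation on the \emph{same} domain (the rules never move, duplicate or delete atoms), and that such rules can only have this shape --- their soundness is exactly what constrains the figure to contain ``connective-weakening'' rules (turning a $\tssp$ into a $\befsp$, or a $\befsp$ into a $\pasp$) and ``restructuring'' rules (distributivity-like rules that pull a $\pasp$-component out of a $\tssp$ or a $\befsp$). Then I would invoke monotonicity of $\pasp$, $\tssp$ and $\befsp$ in each argument with respect to inclusion, so that firing a rule deep inside a term still yields a sub-relation (the in-context/congruence rule is harmless), and close by transitivity of $\subseteq$ over the finitely many steps. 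The equational quotient (associativity of all three operators, commutativity of $\pasp$ and $\tssp$) is transparent here since these equations preserve the denoted relation. I would also note that the rewriting terminates trivially --- each step strictly shrinks a finite relation --- so ``$R$ rewrites to $R'$'' is a well-founded notion and the induction below is legitimate.

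\textbf{Completeness: the set-up.} The real work is the converse: given dicographs with $R'\subseteq R$ (necessarily on a common finite domain $P$, since the terms mention each atom exactly once), I would build a rewriting $R\to^* R'$ by induction on $|R|-|R'|$, the number of arcs and edges to be deleted. If this is $0$ then $R=R'$ and the empty rewriting suffices. Otherwise it is enough to produce a \emph{single} dicograph $R_1$ with $R'\subseteq R_1\subsetneq R$ that $R$ reaches by \emph{one} rule application (possibly inside a context), and then apply the induction hypothesis to $R'\subseteq R_1$.

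\textbf{Completeness: the first step.} To locate that first step I would compare the top-level modular decompositions of the two terms, using three facts already available: the term of a dicograph is unique up to the stated equations, so a dicograph on $\ge 2$ points has a well-defined maximal top-level decomposition $S_1\star\cdots\star S_m$ under a single operator $\star\in\{\pasp,\tssp,\befsp\}$ with prime factors $S_i$; dicographs are hereditary ($N$-freeness, $P_4$-freeness and weak transitivity all restrict to induced sub-relations), so the restriction of $R$ to any union of blocks is again a dicograph; and $R'\subseteq R$ forces the top-level cut of $R'$ to be compatible with that of $R$. The easy case is when these decompositions align --- the prime factors of $R'$ sit block-wise inside those of $R$ with matching operators --- and then I would descend by the in-context rule into a factor where $R$ and $R'$ still differ and recurse. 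The remaining case is misalignment: either some operator occurring in $R$ is strictly stronger, across the relevant cut, than what $R'$ needs, and then one weakening step ($\tssp$ to $\befsp$, or $\befsp$ to $\pasp$) deletes at least one pair while staying $\supseteq R'$; or two top-level factors of $R$ are bound by $\tssp$ or $\befsp$ whereas $R'$ lets a $\pasp$-component between them split off, which is precisely the premise of a restructuring rule, and applying it removes the superfluous pairs while keeping $R'\subseteq R_1$.

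\textbf{The hard part.} The delicate point --- and where I expect the real argument to live --- is exactly this misalignment case: one must choose \emph{which} rule and \emph{at which position} in the term to fire so as to make progress without overshooting, i.e. without erasing a pair that $R'$ still contains (crudely weakening the outermost operator typically removes too much). The fix should be to go to the coarsest level at which $R$ and $R'$ genuinely disagree and argue --- using heredity together with weak transitivity, which governs how deleting one arc forces further deletions, and $N$-freeness / $P_4$-freeness, which guarantee that the sub-term there really is a $\befsp$-, $\tssp$- or $\pasp$-composition that a rule can act on --- that at that level a single weakening or a single $\pasp$-extraction is simultaneously an instance of a rule in figure~\ref{subdicograph} and compatible with $R'\subseteq R$. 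Everything else (soundness, termination, and the bookkeeping around the equational quotient) I expect to be routine.
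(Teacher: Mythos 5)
Your soundness half is fine and is indeed routine: each of the eleven rules replaces a subterm by one denoting a subrelation on the same atoms, the three operations are monotone in each argument with respect to inclusion, and the equational quotient preserves the denoted relation. Note, for what it is worth, that the paper itself does not prove this theorem at all; it imports it from the cited RTA'97 paper of B\'echet, de Groote and Retor\'e, so the comparison here is with that reference rather than with an argument in the text.

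For completeness, however, there is a genuine gap. Your reduction of the converse to a ``single-step lemma'' --- whenever $R'\subsetneq R$ are dicographs on the same domain, some rule of figure~\ref{subdicograph} applies to $R$ at some position and yields an $R_1$ with $R'\subseteq R_1\subsetneq R$ --- is the right skeleton, but that lemma \emph{is} the theorem (by your own induction), and your sketch stops exactly where its content begins. Two of your intermediate assertions are themselves unproved key lemmas: (i) that $R'\subseteq R$ ``forces the top-level cut of $R'$ to be compatible with that of $R$'' (it is not obvious; e.g.\ if $R=A\tssp B$, a top-level $\pasp$-block of $R'$ can a priori meet both $A$ and $B$, since its elements may still be related inside that block); and (ii) that in the misaligned case ``one weakening step deletes at least one pair while staying $\supseteq R'$''. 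The latter fails as stated: if $R=A\tssp B$ and $R'$ retains some but not all cross pairs between $A$ and $B$, weakening the top-level $\tssp$ to $\befsp$ or $\pasp$ overshoots, and the whole point of the $4$-ary and $3$-ary rules is that one must first exhibit decompositions $A=X\mathbin{\widehat{*}}Y$ and $B=U\mathbin{\widehat{*}}V$ \emph{matched to which cross pairs survive in $R'$} before any rule instance applies. Proving that such matched decompositions always exist --- using weak transitivity, $N$-freeness and $P_4$-freeness of $R'$ --- is where essentially all the work of the cited proof lies, and your proposal only names the ingredients without carrying out the case analysis. As it stands the argument establishes soundness and correctly frames the induction, but does not establish completeness.
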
 

\begin{figure} 
\label{subdicograph} 
$$
\begin{array}{r@{\ \ }ccccccc @{\ \ }c @{\ \ }cccccccc} 
rule\ name & \multicolumn{7}{c}{dicograph} & \rightsquigarrow & \multicolumn{7}{c}{dicograph'} \\[1.5ex]  
\hline\\[-1ex]  
\skull\quad \tssp\pasp4 & (X &\pasp& Y)  &\tssp& (U &\pasp& V) & \rightsquigarrow &(X &\tssp& U) &\pasp& (Y &\tssp& V)\\[1.5ex]  
\ts\pa3&  (X &\pasp& Y)  &\tssp& U & &  & \rightsquigarrow &(X &\tssp& U) &\pasp& Y && \\[1.5ex]  
\ts\pa2  &   && Y  &\tssp& U &&  & \rightsquigarrow &&& U&\pasp& Y && \\[1.5ex]  
\hline
\\[-1ex]   
\ts{\bef}4 & (X&\befsp& Y)  &\tssp& (U &\befsp& V) & \rightsquigarrow &(X &\tssp& U) &\befsp& (Y&\tssp& V)\\[1.5ex]  
\ts{\bef}3l& (X&\befsp& Y)  &\tssp& U & & & \rightsquigarrow &(X &\tssp& U) &\befsp& Y & & \\[1.5ex]   
\ts{\bef}3r &&& Y &\tssp& (U &\befsp& V) & \rightsquigarrow &  && U &\befsp& (Y &\tssp& V)\\[1.5ex]  
\ts{\bef}2  & && Y  &\tssp& U &&  & \rightsquigarrow & && U &\befsp& Y && \\[1.5ex]  
\hline\\[-1ex]   
\bef\pa4 & (X&\pasp& Y)  &\befsp& (U &\pasp& V) & \rightsquigarrow &(X &\befsp& U) &\pasp& (Y&\befsp& V)\\[1.5ex]  
\bef\pa3l& (X&\pasp& Y)  &\befsp& U & & & \rightsquigarrow &(X &\befsp& U) &\pasp& Y & & \\[1.5ex]  
\bef\pa3r &&& Y &\befsp& (U &\pasp& V) & \rightsquigarrow &  && U &\pasp& (Y &\befsp& V)\\[1.5ex]  
\bef\pa2  & && Y  &\befsp& U &&  & \rightsquigarrow & && U &\pasp& Y && \\[1.5ex]  
\end{array} 
$$ 
\caption{A complete rewriting system for dicograph inclusion. Beware that the first rule $\tssp\pasp4$ marked with a $\skull$ symbol is wrong when the rewriting rule is viewed as a linear implication on formulas: $ (X \pasp Y)  \tssp (U \pasp V)  \not\multimap (X \tssp U) \pasp (Y \tssp V)$ although all other rewriting rules are correct when viewed as linear implications.   } 
\end{figure}

\subsection{Folding and unfolding pomset logic sequents} 
\label{foldunfoldsequent} 

A structured sequent of pomset logic (resp. of MLL) is a multiset of formulas of pomset logic (resp. of MLL)  with the connectives $\bef,\ts,\pa$ endowed with a dicograph. 

\begin{definition}[Folding/Unfolding]  
On such sequents one may define ``folding" and ``unfolding" 
which transform a dicograph of formulas into another dicograph of formulas by combining two equivalent formulas $A$ and $B$ of the dicograph into one formula  $A*B$ (folding) or by splitting one compound formula $A*B$ into its two  immediate subformulas $A$ and $B$ with $A$ and $B$ equivalent in the dicograph.  More formally: 
\begin{description} 
\item[\bf Folding]  Given a multiset of formulas $X_1,\ldots, X_n$ endowed with a dicograph $T$,
\begin{description} 
\item[$\pa$] if $X_i\simpa X_j$ in $T$ rewrite $T[X_i\pasp X_j]$ into $T[(X_i\pa X_j)]$ --- in the multiset, the two formulas $X_i$ and $X_j$ have been replaced with a single  $X_i\pa X_j$. 
\item[$\bef$] if $X_i\simbef X_j$ in $T$ rewrite $T[X_i\befsp X_j]$ into $T[(X_i\bef X_j)]$ --- in the multiset, the two formulas $X_i$ and $X_j$ have been replaced with a single formula $X_i\bef X_j$. 
\item[$\ts$] if $X_i\simts X_j$ in $T$ rewrite $T[X_i\tssp X_j]$ into $T[(X_i\ts X_j)]$ --- in the multiset, the two formulas $X_i$ and $X_j$ have been replaced with a single formula $X_i\ts X_j$. 
\end{description}
\item[\bf Unfolding] is the opposite: 
\begin{description} 
\item[$\pa$] turn $T[(X_i\pa X_j)]$  into $T[X_i\pasp X_j]$  --- in the multiset, the formula  $X_i\pa X_j$ has been replaced with two formulas $X_i$ and $X_j$ with $X_i\simpa X_j$
\item[$\bef$] turn $T[(X_i\bef X_j)]$  into $T[X_i\befsp X_j]$  --- in the multiset, the formula $X_i\pa X_j$ has been replaced with two formulas $X_i$ and $X_j$ with $X_i\simbef X_j$
\item[$\ts$] turn $T[(X_i\ts X_j)]$  into $T[X_i\tssp X_j]$  --- in the multiset, the formula $X_i\pa X_j$ has been replaced with two formulas $X_i$ and $X_j$ with $X_i\simbef X_j$ 
\end{description}
\end{description} 
\end{definition} 

\subsection{A sequent calculus attempt with \sp\ pomset of formulas} 
\label{seqattempts}

Now let us try to extend multiplicative linear logic with a non commutative multiplicative self dual connective (rather than to restrict existing connective to be non commutative), and let us also try to deal with partially ordered multisets of formulas, with $A\bef B$ corresponding to "the subformula $A$ (a resource) comes before the subformula $B$ (another resource)". 

That way one may think of an order on computations: 
\begin{verse} 
a cut between $(A\bef B)^\perp$ and $A^\perp \bef  B^\perp$ reduces to two smaller cuts $A{\--}cut{\--}A^\perp$ and $B{\--}cut{\--}B^\perp$ with the cut on $A$ being  prior to the cut on $B$, while 

a cut between $(A\pa B)^\perp$ and $A^\perp \ts B^\perp$ reduces to two smaller cuts $A{\--}cut{\--}A^\perp$ and $B{\--}cut{\--}B^\perp$ with the cut on $A$ being  in parallel  with the cut on $B$. 
\end{verse} 
This makes sense when linear logic proofs are viewed as programs and cut-elimination as computation. 

Doing so one may obtain a sequent calculus using partially ordered multisets of formulas as in \cite{Ret93} 
but if one wants a sequent with several conclusions that are partially ordered to be equivalent to a sequent with a unique conclusion, one has to only consider \sp\ partial orders of formulas, as defined in Subsection \ref{dicographs} with  parallel composition noted $\pasp$ and series composition noted $\befsp$. 

If we want all formulas in the sequent to be ordered 
the  calculus should handle right handed sequents i.e. be classical.\footnote{Lambek calculus is intuitionistic  and when it is turned into a  classical systems, formulas are endowed with a cyclic order,\cite{Yet90,Abru91,LR96}, i.e. a ternary relation which is not an order and which is quite complicated when  partial --- see the "seaweeds" that first appeared in \cite{Ruet97} and subsequently  used by Abrusci and  Ruet, \cite{AR99} and by de Groote and Lamarche \cite{dGLam2002}.} 

As seen above, we can represent this \sp-ordered multiset of formulas endowed with an \sp\ order by an \sp\ term whose points  are the formulas and such a term is unique up to the commutativity of $\pasp$ and the associativity of $\pasp$ and $\befsp$. 

% modification 27 sept CECI:  
This simple calculus is just here to suggest how one may inductively construct proofs in a well-established framework. 
One should not be too demanding, we mainly ask for this calculus to only handle \sp\ orders on conclusions\footnote{An alternative rule for $\ts$ with \sp\ orders is to apply $\ts$ rule between two minimum formulas in their order component, and to have cut between two formulas one of which is isolated in its ordered sequent. These alternative rules are  trickier and up to our recent investigation this trickier sequent calculus does not enjoy better properties than the  simple seuqent calculus  given in Figure \ref{sequentsporders}.} so  they can be represented with a formula, and to yield correct proof nets only. This sequent calculus is much more restricted than the rules of sequent calculus in \cite{Ret97tlca} or \cite{Ret93}, dealing with general partial orders, and whose aim was to provide a complete calculus yielding all the proof nets, while today Slavnov proposed such a complete sequent calculus \cite{Slavnov2019}.

In this sequent calculus, cuts  are conclusions of the form $K\ts K^\perp$ and they can be part of the order on conclusions: a cut $\Gamma[((A\ts B)\ts (A^\perp\pa B^\perp)]$ reduces into two cuts that are equivalent and parallel in the order  $\Gamma[((A\ts A^\perp)\pasp (B\ts B^\perp)]$, while a cut $\Gamma[((A\bef B)\ts (A^\perp\bef B^\perp)]$
 reduces into two cuts that are equivalent but come one before the other $\Gamma[((A\ts A^\perp)\befsp (B\ts B^\perp)]$.
 This will be better explained in the proof net section.

\begin{figure} 
\begin{center} 
\input{pomset_orders.tex}
\end{center} 
\caption{Sequent calculus on \sp\ pomset or formulas; called \sp-pomset sequent  calculus}
\label{sequentsporders} 
\end{figure}

\section{Proof nets} 
\label{pn} 

This section presents proof structures and nets (the correct proof structures), in an abstract and algebraic 
manner, without links nor trip conditions: such proof structures and nets are called handsome proof structures and nets.  Basically proof nets consists in a dicograph $R$ of atoms representing the conclusion formula, and axioms that are disjoint pairs of dual atoms constituting a partition $B$ of the atoms of $R$. 
The proof net can be viewed as an edge bi-coloured graph:  the dicograph is represented by $R$ arcs and edges (Red and Regular in the pictures),  while the axioms $B$ (Blue and Bold in the picture). 
In such a setting, the correctness criterion expresses some kind of orthogonality between $R$ and $B$. A proof net can also be viewed as a term, axioms being denoted by indices used exactly twice on dual atoms.

\subsection{Handsome pomset proof nets} 
\label{handsomepn} 

In fact, proof nets have (almost) been defined above! 

\begin{definition} 
A pomset logic \emph{handsome proof structure} or \dicogpn\ is a graph $G= (V, B, R)$ with two kind of edges: 
\begin{itemize} 
\item V: $\{a_1,\ldots,a_n,a_1^\perp,\ldots,a_n^\perp\}$ i.e. vertices consist in a finite number $n$ of  pairs of dual atoms,  --- some of which may have the same name, hence they should be distinguish with a superscript.
\item B edges (B stands for Bold or Blue) are undirected and link a vertex to its dual. B edges define a perfect matching of G that is to say no two B edges are adjacent and every vertex is incident to a B edge .
\item R edges (R for Regular or Red) are dicograph over the vertices. There are no loops, somme R edges are directed and are called arcs, some are undirected. The dicograph can be described by a term, but when terms that only differ because of the associativity of $\pa,\bef,\ts$ and the commutativity of $\pa,\ts$, the proof structures $\pi$ and $\pi'$ are \emph{equal}. 
\end{itemize} 
\end{definition} 

Of course, not all proof structures are correct, for instance $(\{a,a^\perp\}, B=\{a,a^\perp\}, R=a \tssp a^\perp)$ is incorrect. 

\begin{criterion} 
\label{criterionHandsome} 
A handsome proof structure is said to be a handsome proof net or to be correct whenever every elementary circuit (directed cycle) of alternating edges in $R$ and in $B$ contains a chord --- an edge or arc connecting two points of the circuit but not itself nor its reverse in the circuit. 
In short, every \ael\ circuit contains a chord. Observe that this chord cannot be in $B$, hence it is in $R$, and it can either be an $R$ arc or an $R$ edge. 
\end{criterion}

\begin{theorem}[Nguy\^en] 
Recently it was established that checking whether a proof structure satisfies the above correctness  criterion is coNP complete \cite{nguyen2019proof}. 
\end{theorem}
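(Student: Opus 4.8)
\emph{Proof plan.}\quad The statement splits into a coNP upper bound and a matching coNP-hardness reduction, and this is the order in which I would carry it out. For the \textbf{upper bound}, first I would show that the complementary problem --- given a \dicogpn, does it contain \emph{some} chordless \ael\ circuit? --- is in NP. A chordless \ael\ circuit is an elementary circuit, hence a sequence of at most $2n$ atoms, a certificate of polynomial size; and in polynomial time one checks that consecutive atoms along the sequence are joined alternately by an edge of $B$ and by an arc of $R$ (traversed in its direction) or an edge of $R$, that the sequence closes into a directed cycle, and that no other $R$-arc or $R$-edge joins two of its non-consecutive vertices (no chord). Hence deciding the correctness criterion is in coNP.

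\textbf{For the lower bound}, the plan is to reduce an NP-complete problem to the complementary problem: from an instance $\phi$ of a suitable NP-complete problem --- it is natural to use $3$-SAT, or a more combinatorial problem directly phrased in terms of directed alternating circuits --- I would build in polynomial time a \dicogpn\ $\pi_\phi$ such that
\[ \pi_\phi \text{ has a chordless \ael\ circuit} \quad\Longleftrightarrow\quad \phi \text{ is satisfiable}, \]
so that $\pi_\phi$ is a proof net precisely when $\phi$ is unsatisfiable; together with the upper bound this gives coNP-completeness. One should note that such hardness cannot already live in the commutative fragment: when $R$ is a cograph ($\bef$ not used) the criterion reduces to ordinary acyclicity/correctness for MLL(+MIX), which is polynomial-time decidable, so the reduction must genuinely exploit the $R$-arcs produced by $\bef$ and the fact that \ael\ circuits must traverse those arcs in a fixed direction.

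\textbf{The delicate point} --- and the one that shapes the whole construction --- is that $\pi_\phi$ must be a bona fide handsome proof structure, i.e.\ its relation $R$ must be a dicograph (N-free directed part, $P_4$-free symmetric part, weak transitivity). So the reduction has to stay inside the class of dicographs, which I would do by presenting $\pi_\phi$ as a pomset formula over $\pa,\ts,\bef$ together with its axiom matching $B$, assembled from three kinds of sub-dicographs:
\begin{itemize}
\item a \emph{variable gadget} for each propositional variable, a two-sided directed ``corridor'' an \ael\ circuit may thread through on one of two sides, the chosen side recording a truth value; the self-duality $(A\bef B)^\perp = \n{A}\bef\n{B}$ is what lets such a two-sided directed gadget sit inside a dicograph;
\item a \emph{clause gadget} for each clause, wired to the three relevant literal occurrences so that the circuit can pass through it without picking up a chord only if it enters via an occurrence the clause makes true;
\item \emph{consistency wiring} forcing all occurrences of a literal to agree with the side chosen in the corresponding variable gadget, by arranging that any disagreement produces a chord.
\end{itemize}
From a satisfying assignment one then reads off a chordless \ael\ circuit threading every variable gadget on the chosen side and every clause gadget through a true literal; conversely a chordless \ael\ circuit of $\pi_\phi$ yields a consistent assignment, and chord-freeness at the clause gadgets certifies that every clause is satisfied.

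\textbf{The hard part} is \emph{chord control}. One must make $R$ dense (many arcs and edges) away from the corridors, so that every stray alternating circuit acquires a chord and only the intended circuits survive, while keeping $R$ sparse exactly along the corridors so that the intended circuits stay chordless --- all of this while respecting weak transitivity and N-/$P_4$-freeness, which link remote parts of $R$ and so must be checked globally. The bulk of the proof is therefore a careful case analysis of the possible \ael\ circuits of $\pi_\phi$ and of which $R$-arcs or $R$-edges can act as chords for them; the tension between the global dicograph constraint and the local sparseness required along the corridors is the step I expect to dominate the argument.
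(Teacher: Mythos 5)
The paper does not actually prove this statement: it is imported wholesale from Nguy\^en's work \cite{nguyen2019proof}, so there is no internal argument to compare yours against. Judged on its own terms, your proposal establishes only half of the theorem. The membership half is fine and essentially complete: a chordless \ael\ circuit is a polynomial-size certificate for the complementary problem, and verifying alternation, direction of the $R$-arcs, elementarity and chord-freeness is plainly polynomial, so correctness is in coNP.

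The hardness half, however, is a declaration of intent rather than a proof. You correctly identify the crux --- the reduction must produce a genuine handsome proof structure, i.e.\ an $R$ that is a dicograph (N-free directed part, $P_4$-free symmetric part, weak transitivity) together with a perfect matching $B$ of dual atoms --- but you never exhibit the variable, clause or consistency gadgets, nor verify that they assemble into a dicograph, nor carry out the case analysis showing that unintended \ael\ circuits always acquire chords. Weak transitivity is a global closure condition: adding a single arc to make one corridor work forces further arcs elsewhere, which can create chords on the very circuits you need to keep chordless, so it is not evident that gadgets of the shape you describe exist at all. Since this is precisely the content of the theorem (the upper bound being routine), the proposal as written has a genuine gap; to close it you would either have to construct and verify the gadgets explicitly or follow \cite{nguyen2019proof}, where the reduction is actually carried out.
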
 

\begin{figure}
    \centering
\subfigure[Two incorrect handsome proof structures (chordless \ae-circuit: {$a,c^\perp,c,a^\perp,a$} in both cases)]{
\hfill
\hspace*{1cm}
\tikzset{vertex/.append style={inner sep=2pt}} %% Sylvain: enlarge the circles for vertices
 \begin{tikzpicture}[PS]
  \vertex[orientation=180,on circle]{a};
    \vertex[orientation=120,neg,on circle]{a};
   \vertex[orientation=60,on circle]{b};
   \vertex[orientation=0,neg,on circle]{b};
   \vertex[orientation=-60,on circle]{c};
   \vertex[orientation=-120,neg,on circle]{c};
   \blink{c}; 
   \blink{b};
   \blink{a};
    \draw[Rlink,->] (a) -- (b);
    \draw[Rlink,->] (acomp) -- (b);
    \draw[Rlink,->] (c) -- (b);
    \draw[Rlink,->] (ccomp) -- (b);
   \draw[Rlink,->] (a) -- (bcomp);
    \draw[Rlink,->] (acomp) -- (bcomp);
    \draw[Rlink,->] (c) -- (bcomp);
    \draw[Rlink,->] (ccomp) -- (bcomp);
   %  \draw[Rlink] (a) -- (c);
    \draw[Rlink] (a) -- (ccomp);
     \draw[Rlink] (acomp) -- (c);
%    \draw[Rlink] (acomp) -- (ccomp);   
 \end{tikzpicture} 
\hspace{1cm}
 \begin{tikzpicture}[PS]
  \vertex[orientation=180,on circle]{a};
    \vertex[orientation=120,neg,on circle]{a};
   \vertex[orientation=60,on circle]{b};
   \vertex[orientation=0,neg,on circle]{b};
   \vertex[orientation=-60,on circle]{c};
   \vertex[orientation=-120,neg,on circle]{c};
   \blink{c}; 
   \blink{b};
   \blink{a};
    \draw[Rlink,->] (a) -- (b);
    \draw[Rlink,->] (acomp) -- (b);
    \draw[Rlink,->] (c) -- (b);
    \draw[Rlink,->] (ccomp) -- (b);
   \draw[Rlink,->] (a) -- (bcomp);
    \draw[Rlink,->] (acomp) -- (bcomp);
    \draw[Rlink,->] (c) -- (bcomp);
    \draw[Rlink,->] (ccomp) -- (bcomp);
   %  \draw[Rlink] (a) -- (c);
    \draw[Rlink,->] (a) -- (ccomp);
     \draw[Rlink,<-] (acomp) -- (c);
%    \draw[Rlink] (acomp) -- (ccomp);   
 \end{tikzpicture}
\hspace*{1cm}
\hfill}\\
\subfigure[Two correct handsome  proof structures
(i.e. two correct hansome proof nets)]{
\hspace*{1cm}
\hfill
\begin{tikzpicture}[PS]
  \vertex[orientation=180,on circle]{a};
    \vertex[orientation=120,neg,on circle]{a};
   \vertex[orientation=60,on circle]{b};
   \vertex[orientation=0,neg,on circle]{b};
   \vertex[orientation=-60,on circle]{c};
   \vertex[orientation=-120,neg,on circle]{c};
   \blink{c}; 
   \blink{b};
   \blink{a};
    \draw[Rlink,->] (a) -- (b);
    \draw[Rlink,->] (acomp) -- (b);
    \draw[Rlink,->] (c) -- (b);
    \draw[Rlink,->] (ccomp) -- (b);
   \draw[Rlink,->] (a) -- (bcomp);
    \draw[Rlink,->] (acomp) -- (bcomp);
    \draw[Rlink,->] (c) -- (bcomp);
    \draw[Rlink,->] (ccomp) -- (bcomp);
     \draw[Rlink] (a) -- (c);
    \draw[Rlink] (a) -- (ccomp);
     \draw[Rlink] (acomp) -- (c);
    \draw[Rlink] (acomp) -- (ccomp);   
 \end{tikzpicture} 
\hspace{1cm}
 \begin{tikzpicture}[PS]
  \vertex[orientation=180,on circle]{a};
    \vertex[orientation=120,neg,on circle]{a};
   \vertex[orientation=60,on circle]{b};
   \vertex[orientation=0,neg,on circle]{b};
   \vertex[orientation=-60,on circle]{c};
   \vertex[orientation=-120,neg,on circle]{c};
   \blink{c}; 
   \blink{b};
   \blink{a};
    \draw[Rlink,->] (a) -- (b);
    \draw[Rlink,->] (acomp) -- (b);
    \draw[Rlink,->] (c) -- (b);
    \draw[Rlink,->] (ccomp) -- (b);
   \draw[Rlink,->] (a) -- (bcomp);
    \draw[Rlink,->] (acomp) -- (bcomp);
    \draw[Rlink,->] (c) -- (bcomp);
    \draw[Rlink,->] (ccomp) -- (bcomp);
   %  \draw[Rlink] (a) -- (c);
    \draw[Rlink,->] (a) -- (ccomp);
     \draw[Rlink,->] (acomp) -- (c);
%    \draw[Rlink] (acomp) -- (ccomp);   
 \end{tikzpicture}
 \hspace*{1cm}
 \hfill
} 
\caption{Two incorrect handsome proof structures on top, two correct handsome proof structures (or proof nets) underneath.} 
\end{figure}

\begin{theorem}\label{rewPreservePN} 
Given a proof net $(V, B, R)$ if $R\rightsquigarrow R'$ (so $R'\subset R$) using rewriting  rules of Figure \ref{subdicograph} \underline{except $\ts\pa4$}  then $(B,R')$ is a proof net as well, i.e. all the rewrite rule preserve the correctness criterion on page \pageref{criterionHandsome}.
\end{theorem} 

\begin{proof} 
See \cite{Ret98roma,Ret99romarr}. 
\end{proof} 

\begin{corollary}
\label{cutasrew} 
If a proof structure $(V,B,R)$ with 
$$R=t[K(x_1,...,x_k)\tssp K^\perp(x_1^\perp,...,x_k^\perp)]$$ is correct, so is the proof structure $(V,B,R')$ with $$R'=t[(x_1\tssp x_1^\perp)\pasp \cdots \pasp (x_k\tssp x_k^\perp)]$$
In other words, rewriting a cut of a correct handsome  proof net into smaller cuts preserves the correctness criterion. 
\end{corollary} 

\begin{proof} We proceed by induction on the size of the cut-formula $K$ to show that $R'$ is actually obtain by rewriting not using $\ts\pa4$, and given Theorem \ref{rewPreservePN} the result holds. 

If $K$ contains no connective, then $k=1$ and there is nothing to prove. 

\medskip 

If $K=K_1\tssp K_2$ then $K^\perp=K_1^\perp\pasp K_2^\perp$ and\\   
\centerline{$(K_1\tssp K_2) \tssp (K_1^\perp\pasp K_2^\perp)= (K_1\tssp  (K_1^\perp\pasp K_2^\perp))\tssp K_2$}\\ 
which rewrites to $(  (K_1\tssp K_1^\perp)\pasp K_2^\perp)\tssp K_2$ by $\ts\pa3$
\\ 
which rewrites to $(  (K_1\tssp K_1^\perp)\pasp (K_2^\perp\tssp K_2))$ by $\ts\pa3$. 
\medskip

If $K=K_1\bef K_2$ then $K^\perp=K_1^\perp\befsp K_2^\perp$ and  $(K_1\bef K_2) \tssp (K_1^\perp\pasp K_2^\perp)$
\\ 
which rewrites to $(  (K_1\tssp K_1^\perp)\befsp (K_2 K_2^\perp)$ by $\ts\bef5$
\\
which rewrites to $(  (K_1\tssp K_1^\perp)\pasp (K_2^\perp\tssp K_2))$ by $\bef\pa2$. 

\medskip

In both cases we end up with a similar situation with $K_1, K_2$ having one connective less. 
\end{proof}

\begin{proposition} 
\label{tspa4incorrect}
In some cases $\ts\pa4$ does not preserve correctness, i.e. turns a correct proof net  into an incorrect proof structure. 
\end{proposition}

\begin{proof} 
Consider the following example: 
$B=\{a\edge a^\perp, b\edge b^\perp\}$
$R=\bar B=(a\pa a^\perp)\ts(b\pa b^\perp)=  \{a\edge b,a\edge b^\perp,
a^\perp\edge b^\perp, 
a^\perp\edge b\}$.  

Using $\ts\pa4$, $R=\bar B$ rewrites to\newline  $R'=(a\ts b)\pa (a^\perp\ts b^\perp)=\{a\edge b,a^\perp\edge b^\perp\}$, 
and the proof net 
$(B,R')$ contains the chordless \ae\ circuit $(a,a^\perp)\in B,(a^\perp,b^\perp)\in R', (b^\perp,b)\in B, (b,a)\in R$. 
\end{proof} 

Observe that it does \emph{not} mean that every correct proof net $(V,B,R)$ with axioms $B$ can be obtained from $(V,B,R=\bar B)$ by the allowed rewrite rules (all but $\ts\pa4$) where $\bar B$ is 
$\tssp_i (a_i\pasp a_i^\perp)$. Indeed, since $R\subset \bar B$ it is known that $\bar B\rightsquigarrow R$ but one cannot tell whether $\ts\pa4$ has been used.

Indeed, as shown above $\ts\pa4$ does not preserve correctness  but it may happen, and all the rules (including $\ts\pa4$)
do not preserve incorrectness, i.e. may turn an incorrect proof structure into a correct proof net.

\subsection{Cut and cut-elimination} 
\label{handsomecuts} 

What about the cut rule? This calculus has no rules in the standard sense, in particular no binary rules that would combine a $K$ and a $K^\perp$. 
A cut is a tensor $K\ts K^\perp$ which may not be a strict subformula of some other formula. 

So a cut in this setting simply is a symmetric series composition $K \tssp K^\perp$ in a dicograph whose form is $T \pasp (K \tssp K^\perp)$.  Assume the atoms of $K$ are $\{a_1,\ldots,a_n\}$ so atoms of $K^\perp$ are $\{a_1^\perp,\ldots,a_n^\perp\}$. 
Cut-elimination consist in suppressing  all edges and arcs 
between two atoms of $K$, all edges and arcs between two atoms of $K^\perp$, and all edges 
$a_i,a_j^\perp$ with $i\neq j$ --- so the only edges incident to $a_i$ are $a_i,a_i^\perp$ (call those edges atomic cuts) and $a_i x$ with $x$ neither in $K$ not in $K^\perp$. This yields a correct proof net because of Corollary \ref{cutasrew}.

If, in  this graph, an atom $a$ is in the $B$ relation with an $a^\perp$ in $K\cup K^\perp$, then  the result of cut elimination is the closest point not in $K$ nor in $K'$ reached by an alternating sequence of $B$-edges  and elementary cuts starting from $a$ -- observed that this point is necessarily named $a^\perp$, that we call its \emph{cut neighbour}.

To obtain the proof resulting from cut-elimination suppress all the atoms of $K$ and $K^\perp$
as well as the incident arcs and edges and connect  every atom to its cut neighbour with a $B$ edge. 

The B edges from an atom to its cut neighbour can be obtained step by step by replacing a path of a B edge and R edge and a B edge with a B edge, and this preserves correctness (it is related to the rule $a{\uparrow}$ of SBV discussed in Section \ref{PomsetMLLrewriting}. 

This final step leads us to the cut elimination theorem for handsome pomset proof nets: 

\begin{theorem} 
Cut elimination preserves the correctness criterion of \dicogpn\ proof nets and consequently the f \dicogpn\ proof nets enjoy cut-elimination. 
\end{theorem}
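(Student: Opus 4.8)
I would argue by contraposition, lifting a chordless \ael\ circuit of the cut-free net to one of the original net, and I would organise cut elimination on a cut $K\tssp K^\perp$ (sitting in a context $T\pasp(K\tssp K^\perp)$) into elementary steps by recursion on $K$. When $K=K_1\befsp K_2$ (so $K^\perp=K_1^\perp\befsp K_2^\perp$), the step $(K_1\befsp K_2)\tssp(K_1^\perp\befsp K_2^\perp)\rightsquigarrow(K_1\tssp K_1^\perp)\befsp(K_2\tssp K_2^\perp)$ is exactly the rule $\ts\bef4$ of Figure~\ref{subdicograph}, hence preserves the correctness criterion by Theorem~\ref{rewPreservePN}; the entropy rule $\bef\pa2$, also covered by Theorem~\ref{rewPreservePN}, then lets one turn every $\befsp$ that has appeared between two sub-cuts into a $\pasp$. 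In this way one is reduced to two kinds of step: elimination of an \emph{atomic} cut $a\tssp a^\perp$ that sits in parallel position $T\pasp(a\tssp a^\perp)$, and the reduction step attached to a $\ts$- or $\pa$-headed cut.

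For an atomic cut in parallel position the argument is routine. Eliminating it deletes $a,a^\perp$ — so the underlying dicograph becomes its restriction to the surviving atoms, which is exactly $T$ because of the parallel composition — and joins, by a new axiom edge $u\edge\bar u$, the $B$-partner $u$ of $a$ to the $B$-partner $\bar u$ of $a^\perp$; these partners lie outside the cut, for otherwise that axiom edge together with the atomic-cut edge $a\edge a^\perp$ would be a chordless \ael\ circuit in the net being reduced (correct by the inductive hypothesis). Given a chordless \ael\ circuit $\gamma'$ of the reduced net, lift it to $(B,R)$ by replacing each occurrence of $u\edge\bar u$ by the alternating path $u\,B\,a\,R\,a^\perp\,B\,\bar u$ whose middle edge is the atomic cut; this yields an \ael\ circuit $\gamma$ of $(B,R)$. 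A chord of $\gamma$ would be an $R$-edge or arc joining two of its points off the circuit: joining two surviving points it would be a chord of $\gamma'$ (excluded); joining a surviving point to $a$ or to $a^\perp$ it would contradict the parallel composition $T\pasp(a\tssp a^\perp)$; joining $a$ to $a^\perp$ it could only be the atomic-cut edge, which lies on $\gamma$. So $\gamma$ is a chordless \ael\ circuit of $(B,R)$, contradicting its correctness.

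The main obstacle is the remaining step, $(A\tssp B)\tssp(A^\perp\pasp B^\perp)\rightsquigarrow(A\tssp A^\perp)\pasp(B\tssp B^\perp)$ (and its mirror image for a $\pa$-headed cut). As the paper stresses, this ``morally'' performs the forbidden rewrite $\ts\pa4$, so Theorem~\ref{rewPreservePN} does not apply and a direct argument is needed. Here the dicograph strictly shrinks, $R\supset R'$, the deleted material being the cross-edges between $E_A$ and $E_B$, between $E_A$ and $E_{B^\perp}$, and between $E_{A^\perp}$ and $E_B$; a chordless \ael\ circuit $\gamma'$ of $(B,R')$ is still an \ael\ circuit of $(B,R)$, but it may acquire chords, and only such deleted cross-edges as chords. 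The plan is to show that any such chord can be spliced with exactly one of the two sub-arcs it cuts out on $\gamma'$ to form a strictly shorter \ael\ circuit of $(B,R)$, so that by well-foundedness a chordless one survives there, contradicting correctness of $(B,R)$. The hard part is precisely that this splice must again be an \ael\ \emph{circuit}: one has to control the alternation of $R$- and $B$-edges at the two ends of the chord and the coherence of the orientations of the arcs internal to $A,B,A^\perp,B^\perp$ along the walk through the cut sub-dicograph — a parity-and-orientation analysis of alternating walks that is the real content of the theorem, and close in spirit to the combinatorics that obstructs sequentialisation for pomset logic.

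Finally I would check — routinely, since the rewiring of $B$ depends only on the axiom edges and the atomic cuts, not on the $\pasp/\befsp/\tssp$ shape of the cut region — that the composite of these elementary steps agrees with the one-shot description of cut elimination given above; cut elimination for \dicogpn\ proof nets then follows. Alternatively, as the paper remarks, the statement transfers from the cut-elimination theorem for proof nets with links of \cite{Ret93,Ret97tlca,Ret99romarr}, since a reduct of a sequentialisable proof net is sequentialisable.
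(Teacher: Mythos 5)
Your proposal correctly isolates the structure of the problem, and two of your three cases are sound: the $\befsp$-headed cut is indeed an instance of $\ts\bef4$ followed by $\bef\pa2$, both covered by Theorem \ref{rewPreservePN}, and your lifting argument for an atomic cut in parallel position is essentially complete. But the theorem is not proved, because the one case that carries all the difficulty --- the reduction $(A\tssp B)\tssp(A^\perp\pasp B^\perp)\rightsquigarrow(A\tssp A^\perp)\pasp(B\tssp B^\perp)$, which as you note is morally the forbidden rewrite $\ts\pa4$ and is unavoidable as soon as the cut formula contains a $\tssp$ or a $\pasp$ --- is left as a ``plan''. You yourself call the missing parity-and-orientation analysis ``the real content of the theorem'', and it is: one must show that a deleted cross-edge acting as a chord of the lifted circuit can be spliced with one of the two arcs it subtends so as to produce a shorter \emph{alternating, directed, elementary} circuit, which requires controlling (i) that both endpoints of the chord are incident to a $B$-edge on the chosen arc (the usual parity argument), and (ii) that the orientations of the arcs inherited from $A,B,A^\perp,B^\perp$ compose into a directed cycle. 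None of this is carried out, so the contradiction with the correctness of $(B,R)$ is not established. (The paper itself does not spell this out either --- its proof is a citation to \cite{Ret98roma,Ret99romarr}, where precisely this analysis is done --- but a blind proof cannot stop where the work begins.)

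A secondary but genuine error is your closing fallback. The clause ``since a reduct of a sequentialisable proof net is sequentialisable'' cannot ground this theorem: it is the argument the paper uses in the \emph{opposite} direction, to derive cut elimination for the sequent calculus from cut elimination for proof nets, and it is inapplicable here because not every correct \dicogpn\ is sequentialisable (Figure \ref{counterex}). The legitimate transfer route is via folding/unfolding (Proposition \ref{foldunfoldcorrect}) to proof nets with links and the cut-elimination theorem proved for those in \cite{Ret93,Ret97tlca,Ret99romarr}; if you want to invoke prior work, that is the citation and the reduction you need, not sequentialisation.
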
 

\begin{proof} The preservation of the absence of chordless \ae\ circuit during cut elimination is proved in \cite{Ret98roma,Ret99romarr}. There are two steps in cut-elimination: one consists in turning the cut into atomic cuts, which is proved to preserve correctness in Corollary \ref{cutasrew} and the other consists in shortening succession of atomic cuts and axioms according to the following pattern: 
$a^\perp_i B a_j R cut a^\perp_k B a_l $ reduces to $a^\perp_i B a_l$ provided the $R$ edges belong to the cut (we here distinguish the vertices, but their names as  logical atoms   they are all equal or dual,  $a_i=a_j=a_k=a_l$, because they are linked by cuts or axioms). Such a suppression of \ae paths without chords cannot create chordless \ae circuits. 
\end{proof}

\subsection{From sequent calculus and rewrite proofs to \dicogpn}\label{pomsetrewrite} 

Proofs of the sequent calculus given in Figure \ref{sequentsporders} 
are easily turned into a \dicogpn\ proof net inductively. 
Such a derivation starts with axioms $\seq a_i, a_i^\perp$ as it is well known, and in any kind of multiplicative linear logic the atoms $a_i$ and  $a_i^\perp$
that can be traced from the axiom that introduced them to the conclusion sequent, which, after some  unfolding can be viewed as a dicograph of atoms $R$.
The \dicogpn\ proof structure corresponding to the sequent calculus proof simply is  $(B,R)$, and fortunately is a correct proof net. 

\begin{proposition} 
A proof of sequent calculus corresponds to a \dicogpn\ i.e. to a handsome proof structure without chordless alternate elementary path, i.e. into a handsome proof net. 
\end{proposition}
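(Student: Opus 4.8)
The plan is to proceed by induction on the structure of a sequent calculus derivation $\delta$ of $\seq \Gamma$ in the \sp-pomset sequent calculus of figure \ref{sequentsporders}, building at each step a handsome proof structure $(B,R)$ together with a fixed bijection between the atoms of $R$ and the atom occurrences appearing in the formulas of $\Gamma$, and proving simultaneously the invariant that $(B,R)$ satisfies the correctness criterion \ref{criterionHandsome}, i.e.\ contains no chordless \ael\ circuit. The dicograph $R$ is obtained from $\Gamma$ by taking the $\pasp$ of the unfoldings of all the formulas of $\Gamma$ (reading $\pa,\ts,\bef$ on formulas as $\pasp,\tssp,\befsp$ on atoms) and then adding the extra $\befsp$/$\pasp$ structure imposed by the \sp\ order between the formulas of the sequent; $B$ is the perfect matching $\{a_i\edge a_i^\perp\}$ given by the axiom links, which the atom-tracing makes canonical.

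First I would do the base case: an axiom $\seq a,a^\perp$ yields $R = a\pasp a^\perp$ (two isolated vertices) and $B=\{a\edge a^\perp\}$; the only alternating circuit uses the single $B$-edge twice, so there is no \ael\ circuit and correctness is trivial. Then I would treat the unary rules. The $\pa$ rule replaces $T[A\pasp B]$ by $T[(A\pa B)]$ on formulas, which leaves the underlying dicograph of atoms $R$ literally unchanged (folding is invisible at the level of atoms), so $(B,R)$ and its correctness are untouched; similarly for the $\bef$ rule via $\simbef$. The entropy rule takes $\Gamma$ to a sub-\sp-order $\Gamma'$, i.e.\ $R\rightsquigarrow R'$ with $R'\subset R$; by theorem \ref{rewPreservePN} every allowed rewrite step preserves correctness, and since $R'\subset R$ is obtained by the rules of figure \ref{subdicograph}, it suffices to observe that passing from an \sp\ order to a sub-\sp-order only ever uses the arc-removing/reordering rules $\bef\pa2,\bef\pa3l,\bef\pa3r,\bef\pa4$ and never the forbidden $\ts\pa4$, so $(B,R')$ is again a proof net. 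The $dimix$ rule glues $\seq\Gamma$ and $\seq\Delta$ as $\seq \Gamma\befsp\Delta$: on proof structures this is $R = R_\Gamma \befsp R_\Delta$ with $B = B_\Gamma\uplus B_\Delta$ (disjoint atom sets), and any \ael\ circuit, being a cycle, cannot cross the $\befsp$ cut in both directions, hence lies entirely inside $R_\Gamma$ or inside $R_\Delta$, where the induction hypothesis supplies a chord; so correctness is preserved.

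The main work, and the step I expect to be the real obstacle, is the binary $\ts$/cut rule: from $\seq A\pasp\Gamma$ and $\seq B\pasp\Delta$ conclude $\seq \Gamma\pasp(A\ts B)\pasp\Delta$. Here the new dicograph is $R = R_1 \pasp R_2$ on the atom level enriched by turning the formerly $\pasp$-isolated blocks $A$ and $B$ into $A\tssp B$, i.e.\ we add all arcs/edges between the atoms of $A$ and the atoms of $B$ (and nothing else), while $B = B_1\uplus B_2$. I would argue that any new \ael\ circuit $\gamma$ in $(B,R)$ must use at least one of these new $A$--$B$ edges; since $A$ and $B$ live in different premises, following $\gamma$ one must re-cross between the $A$-side (inside premise $1$) and the $B$-side (inside premise $2$), and because the only $R$-connections between the two premises are precisely the new $A$--$B$ edges, $\gamma$ meets the $A$-block and the $B$-block, so it contains two vertices $x\in A$, $y\in B$ that are non-consecutive on $\gamma$ (as soon as $\gamma$ has length $>2$), and the edge $x\edge y$ is in $R$ by construction: that is the required chord. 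The delicate points to nail down are the degenerate short cases and the verification that $\gamma$ genuinely visits both blocks rather than, say, staying within $A\pasp\Gamma$ using a single crossing edge twice — which is impossible since a $B$-edge of an alternating circuit is used once and an $R$-edge between the premises always has one endpoint in $A$ and one in $B$. I would also double-check that the resulting \sp\ term $\Gamma\pasp(A\ts B)\pasp\Delta$ matches the dicograph described, appealing to the already-established fact (subsection \ref{dicographs}) that such \sp\ terms are unique up to associativity and commutativity of $\pasp$. Once the $\ts$/cut case is done the induction closes and the proposition follows; cut-elimination for these proof nets then comes for free from the cut-elimination theorem already quoted.
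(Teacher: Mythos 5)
Your proposal is correct and follows essentially the same route as the paper, which proves the proposition by induction on the sequent derivation, checking that no rule (nor folding/unfolding) can create a chordless \ael\ circuit, and simply defers the case analysis to the cited reports. Your fleshed-out cases — in particular the observation that an \ael\ circuit crossing a $\befsp$ composition cannot return, and that a circuit crossing a fresh $A\tssp B$ connection must cross it twice at four distinct vertices and hence picks up a chord $x_1\edge y_2$ — are exactly the arguments carried out in those references, so nothing is missing.
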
 

\begin{proof}  In \cite{Ret98roma,Ret99romarr} we established by induction on the proof, we that  neither the folding rules nor the unfolding rules can introduce a chordless \ael\ cycle. 
\end{proof} 

% modification 27 sept 
%The above result also yields cut elimination for the sequent calculus. Indeed, when a proof net $\pi$ is issued from a sequent calculus proof $s(\pi)$, any reduced proof net $\pi'$  obtained by cut elimination from $\pi$, does also correspond to a sequent calculus proof. 
% fin modification 27 sept 

\begin{proposition} 
Any proof obtained by rewriting from $AX_n$ yields a 
handsome proof structure 
without chordless alternate elementary path, i.e. into a \dicogpn. 
\end{proposition}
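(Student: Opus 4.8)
The plan is to proceed by induction on the length of the rewriting derivation, using Theorem~\ref{rewPreservePN} as the engine. The base case is the axiom $AX_n = \tssp_{1\leq i \leq n}(a_i\pasp a_i^\perp)$: here $R = \bar B$ where $B = \{a_i\edge a_i^\perp\}$, and one checks directly that the only \ael\ circuits in $(B,\bar B)$ on $2n$ atoms are the four-point circuits $a_i, a_i^\perp, a_j^\perp, a_j$ (using two $B$-edges and two $R$-edges), and each such circuit carries the chord $a_i\edge a_j$ (or $a_i^\perp \edge a_j^\perp$), which is in $\bar B \setminus B$; longer circuits are likewise easily seen to be chorded. Hence $(B, AX_n)$ is a handsome proof net.

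For the inductive step, suppose a rewriting proof produces a dicograph $D$ which, by the induction hypothesis, gives a \dicogpn\ $(B,D)$, and suppose the next rewrite step sends $D \rightsquigarrow D'$ via one of the ten allowed rules (all rules of Figure~\ref{subdicograph} except $\ts\pa4$). Since these are exactly the rules covered by Theorem~\ref{rewPreservePN}, we conclude immediately that $(B, D')$ is again a handsome proof net. The only small point to verify is that the relation $B$ — the axiom matching — is genuinely preserved: none of the rewrite rules touches the pairing of dual atoms, they only rearrange the $\pasp$/$\tssp$/$\befsp$ structure of $R$ while keeping the same underlying multiset of atoms, so the same $B$ works throughout the derivation. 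It then follows that the final dicograph $R[a_1,a_1^\perp,\ldots,a_n,a_n^\perp]$ obtained by rewriting, together with $B$, is a handsome proof structure with no chordless \ael\ circuit, i.e.\ a \dicogpn.

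I do not expect a serious obstacle here: the proposition is essentially a repackaging of Theorem~\ref{rewPreservePN} plus the trivial base case, and this is precisely why the text contrasts it with the genuinely hard (and still open) \emph{converse} direction — that every correct proof net is reachable from $AX_n$ by the allowed rewrites, which would settle the equivalence of pomset logic with BV. The one place to be slightly careful is bookkeeping when the multiset of atoms contains repetitions (two atoms with the same name): one must make sure that $B$ is fixed as a specific perfect matching at the outset (the one given by the axiom links traced through the derivation) rather than being inferred from atom names, so that "the same $B$" in the inductive step is unambiguous. With that caveat the induction goes through directly.
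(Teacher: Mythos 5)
Your proof is correct and follows essentially the same route as the paper: the paper's own argument is simply ``observe that $AX_n$ satisfies the criterion, so because of theorem~\ref{rewPreservePN} the result is clear,'' which is exactly your base case plus induction on the rewrite sequence. Your extra remarks --- the explicit check that every \ael\ circuit in $(B,AX_n)$ is chorded, and the observation that the matching $B$ is fixed once and for all and untouched by the rewrites --- are sound elaborations of details the paper leaves implicit.
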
 

\begin{proof} Observe that $AX_n$ satisfies the criterion, so because of Theorem \ref{rewPreservePN}, hence the result is clear. \end{proof}

\section{Denotational semantics of pomset logic within coherence spaces}
\label{coh}

Denotational semantics or categorical interpretation of a logic is the interpretation of a logic in such a way that a proof $d$ of $A\seq B$ is interpreted as a morphism $\llbracket d \rrbracket $ from an object $\llbracket A \rrbracket$ to an object $\llbracket B \rrbracket$ in such a way that $\llbracket d \rrbracket = \llbracket d' \rrbracket$ whenever $d$ reduces to $d'$ by  (the transitive closure) of $\beta$-reduction  or cut-elimination. 
A proof $d$ of $\seq B$ (when there is no $A$) is simply interpreted as a morphism from the terminal object $\mathbf{1}$ to $B$. More details can be found in  \cite{LS86,GLT88}.  

Once the interpretation of propositional variables is defined, 
the interpretation of complex formulas is defined 
by induction on the complexity of the formula. 
The set $Hom(A,B)$ of morphisms  from $A$ to $B$ is in bijective correspondence with an object written $\llbracket A\multimap B\rrbracket$. 
Morphisms are defined by induction on the proofs and one has to check that 
the interpretations of a proof before and after one step of cut elimination is unchanged.

Categorical interpretaitons of  intuitionistic logic,
take places in Cartesian closed categories while categorical intepretations of   take place in a monoidal closed category (with  monads for multiplicative exponential linear logic).

\subsection{Coherence spaces}
\label{cohDef}

The category of coherence spaces is a concrete category: objects are (countable) sets endowed with a binary relation, and morphisms are linear maps.  It interprets the proofs up to cut-elimination or $\beta$ reduction initially  propositional intuitionistic logic and propositional linear logic (possibly quantified). Actually, coherence spaces are tightly related to linear logic: indeed, linear logic arose from this particular semantics, invented to model second order lambda calculus i.e. quantified propositional intuitionistic logic \cite{Gir86}. Coherence spaces are themselves inspired from the categorical work on ordinals by Jean-Yves Girard; they are the binary qualitative domains. 

\begin{definition} 
A coherence space $A$ is a set $|A|$ (possibly infinite) called the \emph{web} of $A$  whose elements are called \emph{tokens}, endowed with a binary reflexive and symmetric relation called \emph{coherence} on $|A|\times |A|$ noted $\alpha\coh \alpha'[A]$ or  simply $\alpha\coh \alpha'$ when $A$ is clear. 
\end{definition} 

The following notations are common and useful: 
\begin{verse} 
$\alpha\scoh \alpha'[A]$  iff  $\alpha\coh \alpha'[A]$ and $\alpha\neq \alpha'$

$\alpha\incoh \alpha'[A]$  iff  $\alpha\not\coh \alpha'[A]$ or $\alpha=\alpha'$

$\alpha\sincoh \alpha'[A]$  iff  $\alpha\not\coh \alpha'[A]$ and $\alpha\neq \alpha'$
\end{verse}

A proof of $A$ is to be interpreted by a \emph{clique}  of  the corresponding coherence spaces $A$, a \emph{cliques} being a set of pairwise coherent tokens in $|A|$ --- we write $x\in A$ for $x\subset |A|$ and for all $\alpha,\alpha'$ in $x$ one has $\alpha\coh \alpha'$. Observe that forall $x\in A$, if $x'\subset x$ then  $x'\in A$. 

\begin{definition} 
A linear  morphism $F$ from $A$ to $B$ is a morphism mapping cliques of $A$ to cliques of $B$ such that: 
\begin{itemize} 
\item For all  $x\in A$ if  $(x'\subset x)$ then  $F(x')\subset F(x)$
\item For every family $(x_i)_{i\in I}$ of pairwise compatible cliques --- that is to say    $(x_i\cup x_j)\in A$ holds for all $i,j\in I$ --- 
 $F(\cup_{i\in I} x_i)= \cup_{i\in I} F(x_i)$.\footnote{The morphism is said to be stable when this second condition is replaced with $F(\cup_{i\in I} x_i)= \cup_{i\in I} F(x_i)$ holds more generally for the union of a directed family of cliques of $A$, i.e. $\forall i,j\exists k\ (x_i\cup x_j)=x_k$.}  
\item For all $x,x'\in A$ if $(x\cup x')\in A$ then $F(x\cap x')=F(x)\cap F(x')$ -- thsi last condition is called \emph{stability}.  
\end{itemize} 
\end{definition} 

Due to the removal of structural rules,  linear logic has two kinds of conjunction: 

$$\begin{prooftree} 
\seq \Gamma, A\quad \seq \Delta, B 
\justifies \Gamma,\Delta, A \ts B
\using \ts
\end{prooftree}
\qquad 
\begin{prooftree} 
\seq \Gamma, A\quad \seq \Gamma, B 
\justifies \Gamma, A \& B
\using \& 
\end{prooftree}
$$ 

Those two rules are equivalent when  contraction and weakening are allowed. 
The multiplicatives (contexts are split, $\ts$ above) and the additives (contexts are duplicated, $\&$ above). Regarding  denotational semantics, the web of the coherence space associated with a formula $A * B$ with $*$ a \emph{multiplicative} connective  is  the Cartesian product $|A|\times |B|$ of the webs of  $A$ and $B$ --- while it is the  disjoint union of the webs of  $A$ and $B$ when the connective $*$ is additive. 

Negation is a unary connective which is both multiplicative and additive: 

\centerline{$|A^\perp|=|A|$ and $\alpha\coh '\alpha[A^\perp]$ iff $\alpha\incoh \alpha'[A]$} 

One may wonder how many binary multiplicative connectives there are,
i.e. how many different coherence relations one may define on $|A|\times |B|$ from the coherence relations on $A$ and on $B$. 

We can limit ourselves to the ones that are covariant functors in both $A$ and $B$ --- indeed there is a negation, hence a contravariant connective in $A$ is a covariant connective in $A^\perp$. Hence when both components are $\incoh$ so are the two couples, 
and when they are both coherent, so are the two couples.

As it is easily observed, given two tokens $\alpha,\alpha'$ in a coherence space $C$
exactly one of the three following properties hold: 

$$\alpha \sincoh \alpha' \qquad \alpha = \alpha' \qquad \alpha \scoh \alpha'$$  

To define a multiplicative connective, is to define when $(\alpha,\beta)\coh(\alpha',\beta')[A*B]$ holds depending on whether  $\alpha\coh \alpha'[A]$ and $\beta\coh \beta'[B]$ hold. 
Thus defining a binary mutliplicative connective is to fill a nine cell table as the ones below, the first column inndicates the relation between $\alpha$ and $\alpha'$ in $A$, while the first row indicates  the relation between $\beta$ and $\beta'$ in $B$. 

However if $*$ is assumed to be covariant in both its arguments, seven out of the nine cells are filled, so only free values  are the ones in the right upper cell, NE=North-East,  and the left bottom cell SW=South-West.  They cannot be $=$ so it makes four possibilities. 

\renewcommand\ne{{\scriptstyle\mathrm{NE}}}
\newcommand\sw{{\scriptstyle\mathrm{SW}}}

$$ 
\tablecoh{A}{B}{*}{\ne?}{\sw?}
$$

If one wants $*$ to be commutative, there are only two possibilities, namely $NE=SW=\scoh$ ($\pa$)  and $NE=SW=\sincoh$ ($\ts$).

$$\tablecoh{A}{B}{\pa}{\scoh}{\scoh} 
\hfill 
\mbox{\qquad and \qquad} 
\hfill 
\tablecoh{A}{B}{\ts}{\sincoh}{\sincoh} 
$$

However if we do not ask for the connective $*$ to be commutative we have a third connective $A\bef B$ and a fourth connective $A\triangleright B$ which is simply $B\bef A$.  

$$
\tablecoh{A}{B}{\bef}{\sincoh}{\scoh} 
\mbox{\qquad and \qquad} 
\tablecoh{A}{B}{\triangleright}{\scoh}{\sincoh} 
$$ 

The coherence relation on those connectives are defined as follows:

$(\alpha,\beta)\coh(\alpha',\beta')[A\ts B]$ iff $\alpha\coh \alpha'[A]$ and $\beta\coh \beta'[B]$

$(\alpha,\beta)\coh(\alpha',\beta')[A\pa B]$ iff $\alpha\coh \alpha'[A]$ or  $\beta\coh \beta'[B]$

$(\alpha,\beta)\scoh(\alpha',\beta')[A\bef B]$ iff $\alpha\scoh \alpha'[A]$ or  ($\alpha=\alpha'$ and $\beta\coh \beta'[B]$)

The definition of $\bef$ and $\pa$ coherence spaces applies to \sp\ partial orders of formulas. 
Given  an \sp-order $T[A_1,\ldots,A_n]$ on the formulas $A_1,\ldots,A_n$, i.e.  a dicograph term $T$  using only  $\pa$ and $\bef$ the above defintions of $\bef$ and $\pa$ 

$(\alpha_1,\ldots, \alpha_n)\scoh (\alpha'_1,\ldots, \alpha'_n)[T[A_1,\ldots,A_n]$
are strictly coherent whenever: there exist $i$ such that  $\alpha_i\scoh \alpha'_i$
and for every  $j>i$  one has 
$\alpha_j = \alpha'_j$.

The linear morphisms from $A$ to $B$ that is $Hom(A,B)$ can be represented by the coherence space  $A\multimap B=A^\perp\pa B$ \footnote{This internalisation of $Hom(A,B)$ makes the category closed, but not cartesian closed because the associated conjunction, namely $\ts$ is not a product.}

$$ 
\begin{array}{l||c|c|c|}
A \multimap B & \sincoh & = & \scoh \\ \hline \hline 
\sincoh & \scoh & \scoh & \scoh  \\ \hline 
= & \sincoh & = & \scoh \\ \hline 
\scoh & \sincoh  & \sincoh & \scoh \\ \hline 
\end{array} 
$$ 

Let us see how  linear morphisms are in a one-to-one correspondence with cliques of $A\multimap B$.  Given a clique $F\in (A\multimap B)$ the map $F_f$ from cliques of $A$ to cliques of $B$ defined $F_f(x)=\{\beta\in |B|\ |\ \exists \alpha\in x\ (\alpha,\beta)\in f\}$ is a linear morphism. Conversely, given a linear morphism $F$, the set $\{(\alpha,\beta)\in |A|\times |B|\ |\ \beta\in F(\{\alpha\})\}$ is a clique of $A\multimap B$. 

One can observe that the subset $\{((\alpha,(\beta,\gamma)),((\alpha,\beta),\gamma))\ |\ \alpha\in |A|,\beta\in |B|, \gamma\in |C|\}$ of  $|A|\times |B|\times |C|$ defines a linear isomorphism from $A\bef (B\bef C))$ to 
$(A\bef B)\bef C$, that $\{((\alpha,\beta),(\alpha,\beta))\ |\ \alpha\in |A|, \beta\in |B|\}$
defines a linear morphism from $A\ts B$ to $A\bef B$ and the same set of pairs of tokens also defines a linear morphims from $A\bef B$ to $A\pa B$. However, for general coherence spaces $A$ and $B$ 
there is no canonical linear map from $A\bef B$ to $B\bef A$. 

Given two different tokens $(\alpha,\beta)$ and $(\alpha',\beta')$ in $|A|\times|B|$, observe that: 
\begin{enumerate} 
\item 
$(\alpha,\beta)\scoh(\alpha',\beta')[A\bef B]$ means 
($\alpha=\alpha'$ and $\beta\scoh\beta'[B]$)   or $\alpha\scoh\alpha'[A]$
\item 
$(\alpha,\beta)\scoh(\alpha',\beta')[A^\perp\bef B^\perp]$ means 
($\alpha=\alpha'$ and $\beta\sincoh\beta'[B]$)   or $\alpha\sincoh\alpha'[A]$
\end{enumerate} 

Given that those tow tokens are different, either: 
\begin{verse} 
If $\alpha\neq\alpha'$ then 
either $\alpha\scoh\alpha'[A]$,  1 holds and 2 does not hold or $\alpha\scoh\alpha'[A]$,  2 holds and 1 does not hold.

If $\alpha=\alpha'$ then $\beta\neq\beta'$ and 
either $\beta\scoh\beta'[B]$, 1 holds and 2 does not hold 
or $\beta\sincoh\beta'[B]$ 2 holds and 1 does not hold. 
\end{verse} 
Consequently, $(A\bef B)^\perp\equiv A^\perp\bef B^\perp$. 

Linear logic is issued from coherence semantics,  
and consequently coherence semantics is close to linear logic syntax. Coherence spaces may even be turned into a fully abstract model in the multiplicative case (without before), see \cite{Loa94lics}. 

The before connective is issued from coherence semantics, hence it is a good idea to explore the coherence semantics of the logical calculi we designed for pomset logic, to see whether they are sound. 

\subsection{A sound and faithful interpretation of proof nets in coherence spaces}
\label{cohInt} 

An important  criterion 
comforting the design of the deductive systems for pomset logic is that those systems are sound w.r.t. coherence semantics --- in addition to cut-elimination discussed previously. 
We shall here interprets a proof net with conclusion $T$ (a formula or a dicograph of atoms) as a clique of the corresponding coherence space $T$. 

Computing the semantics of a cut-free proof net is rather easy, using Girard's experiments but from axioms to conclusions as done in \cite{Ret94rc,Ret97}. 

However, we define the interpretation of a proof structure (non necessarily a proof net) as a set of tokens of the web of the conclusion formula. Assume the proof structure is $B=\{a_i\edge a_i^\perp|1\leq i \leq n\}$ and that each of the $a_i$ as a corresponding coherence space $a$ also denoted by $a_i$. 
For each $a_i$ choose a token $\alpha_i\in |a_i|$. If the conclusion is a dicograph $T$ replacing each occurrence of $a_i$ and each occurrence of $a_i^\perp$ with $\alpha_i$ yields a term, which when converting $x * y$ (with $*$ being one of the connectives, $\pa,\bef,\ts$) with $(x,y)$, yields a 
token in the web of the coherence space associated with $T$ --- this token in $|T|$ is called the result of the  experiment.

Given a normal (cut-free) proof structure $\pi$ with conclusion $T$ 
the interpretation $\llbracket \pi \rrbracket$ of the normal proof structure $\pi$
is the set of all the results of the experiments on $\pi$. One has the following result that Lambek appreciated, because it replaces graph theoretical considerations with algebraic properties:

\begin{theorem}\label{semcorrect} 
A proof structure $\pi$ with conclusion $T$ is a proof net (contains no chordless \ael-circuit) if 
and only if its interpretation $\llbracket \pi \rrbracket$ is a clique of the coherence space  $T$ (is a semantic object). 
\end{theorem} 

\begin{proof}
The proof is a consequence of: 
\begin{itemize} 
\item both folding and unfolding (see Subsection \ref{PNling} or \cite{Ret99rr,Ret98roma,Ret99romarr}) preserve correctness 
\item  semantic characterisation of proof nets with links correctness is proved in \cite{Ret94rc} for MLL and pomset logic --- the published version left out pomset logic \cite{Ret97}. 
\end{itemize} 
\end{proof} 

The actual result we proved is a bit more: in order to check correctness a given four-token coherence space is enough, and this provides a  way to check correctness, which is of an exponential complexity in accordance with the recent results by Nguy\^en \cite{nguyen2019proof}.

When $\pi$ is not normal, i.e. includes cuts, not all  experiments succeed and provide results:  an experiment is said to succeed when in every cut $K cut K^\perp$ the value $\alpha$ on an atom $a$ of the cut $K$ is the same as the value on the corresponding atom $a^\perp$ in  $K^\perp$. Otherwise the experiment fails and has no result.  The set of the results of all succeeding experiments of a proof \emph{net}  $\pi$ is a clique of the coherence space $T$. It is the interpretation $\llbracket \pi \rrbracket$ of the normal proof net $\pi$. Whenever $\pi$ reduces to $\pi'$ by cut elimination $\llbracket \pi\rrbracket=\llbracket \pi' \rrbracket$. 
That way one is able to predict whether a proof structure will reduce to a proof net\footnote{Proof nets reduces to proof nets, correctness is preserved under cut-elimination, but an incorrect proof structure may well reduce to a proof net.} without actually performing cut elimination: 

\begin{theorem} 
Let  $\pi$ be a proof structure  and let $\pi^*$ be its normal form;  then 
$\pi^*$ is a proof net plus zero or more loops (cut between two atoms that are connected with an axiom) whenever two succeeding experiments of  $\pi$ have coherent or equal results. 
\end{theorem}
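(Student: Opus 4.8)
The plan is to restate the hypothesis semantically and then transport it through cut elimination. Write $T$ for the non-cut part of the conclusion of $\pi$ (which is also the non-cut part of the conclusion of $\pi^*$); by definition the result of a succeeding experiment of $\pi$ is a token of $|T|$, and $\llbracket\pi\rrbracket$ is the set of all such results. Since the coherence relation on $T$ is exactly ``coherent or equal'', the hypothesis that any two succeeding experiments of $\pi$ have coherent or equal results is literally the statement that $\llbracket\pi\rrbracket$ is a clique of $T$, i.e. $\llbracket\pi\rrbracket\in T$. The desired decomposition of $\pi^*$ will then follow from three facts: (a) $\llbracket\pi\rrbracket=\llbracket\pi^*\rrbracket$; (b) the only cuts surviving in a normal form are loops, so deleting the loops of $\pi^*$ yields a cut-free proof structure $\pi^{**}$ with conclusion $T$ and $\llbracket\pi^{**}\rrbracket=\llbracket\pi^*\rrbracket$; and (c) Theorem~\ref{semcorrect}, applied to the cut-free $\pi^{**}$, which is a proof net if and only if $\llbracket\pi^{**}\rrbracket\in T$.

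For (a) I would prove the elementary invariance lemma: a single cut-elimination step $\pi\rightsquigarrow\pi_1$ leaves the set of results of succeeding experiments unchanged, $\llbracket\pi\rrbracket=\llbracket\pi_1\rrbracket$. This is the soundness of the coherence interpretation under reduction already recorded for proof nets in the excerpt and in \cite{Ret94rc,Ret97,Ret98roma,Ret99romarr}; the thing to notice is that the argument there never appeals to the correctness criterion, so it is valid for arbitrary proof structures. Concretely, for the one-shot reduction of a cut $K\tssp K^\perp$ described in Section~\ref{PNdicographs}, a succeeding experiment is by definition one whose token choices agree across the cut on every matched pair of dual atoms, and the cut-neighbour rewiring composes the axioms along the alternating sequences of $B$-edges and atomic cuts; one checks that each succeeding experiment of $\pi$ then induces a well-defined experiment of $\pi_1$ with the same token on $|T|$, and conversely that every experiment of $\pi_1$ lifts to at least one (automatically succeeding) experiment of $\pi$ with the same result. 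Iterating along any reduction of $\pi$ to $\pi^*$ gives (a).

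For (b) I would observe that a normal form has no compound cut (a cut $K\tssp K^\perp$ with $K$ non-atomic always reduces, splitting along the outermost connective of $K$) and no atomic cut with a free axiom end (which reduces by axiom-cut), so the surviving cuts are exactly loops $a\tssp a^\perp$ whose matching axiom is the single edge $a\edge a^\perp$. A loop bears no atom of the conclusion $T$ and, in every experiment, the token reaching $a$ from its axiom equals the token reaching $a^\perp$ from that same axiom, so a loop always succeeds and contributes nothing to $|T|$; hence removing all loops from $\pi^*$ changes neither the conclusion nor the interpretation, producing a cut-free $\pi^{**}$ with $\llbracket\pi^{**}\rrbracket=\llbracket\pi^*\rrbracket$. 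Chaining (a) and (b) gives $\llbracket\pi^{**}\rrbracket=\llbracket\pi\rrbracket\in T$, and Theorem~\ref{semcorrect} makes $\pi^{**}$ a proof net, i.e. $\pi^*$ is a proof net plus zero or more loops.

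The main obstacle is the bookkeeping inside (a): making the correspondence between succeeding experiments of $\pi$ and experiments of $\pi_1$ completely precise for the global cut reduction used here, tracking the result token through the cut-neighbour construction, and in particular handling the subcase where an alternating $B$-edge/atomic-cut path cycles back inside $K\cup K^\perp$ rather than escaping it — the subcase that is responsible for the loops appearing in $\pi^*$. Once this bijection and the two elementary remarks about loops (they always succeed; they carry no conclusion atom) are in hand, the rest is immediate, since the genuinely hard implication, that a clique interpretation forces the combinatorial correctness criterion, is exactly the content of the already-available Theorem~\ref{semcorrect}.
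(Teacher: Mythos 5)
Your argument is correct and assembles exactly the ingredients the paper sets up around this theorem: the interpretation of non-normal structures by succeeding experiments, the invariance $\llbracket\pi\rrbracket=\llbracket\pi^*\rrbracket$ under cut elimination (which, as you rightly note, must be checked without appeal to the correctness criterion), the innocuousness of loops, and the semantic characterisation of Theorem~\ref{semcorrect} applied to the cut-free part. The paper itself delegates the details to \cite{Ret94rc,Ret97}, and your sketch --- including the flagged bookkeeping for alternating $B$-edge/atomic-cut paths that stay inside $K\cup K^\perp$, which is precisely where the residual loops come from --- is essentially that proof.
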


\begin{proof} See \cite{Ret94rc,Ret97}. \end{proof}

\section{Sequentialisation with pomset sequents or dicographs sequents} 
\label{sequentialisation} 

In 2001, Lambek  noticed the absence of sequent calculus in my habilitation \cite{RetoreHDR}. 
Although there is one in my PhD that was refined later to only use \sp\ orders, I did not put sequent calculus on the forefront firstly because the proof net calculus enjoys much more  mathematical properties and secondly because the sequent calculi I know do not generate all the proof nets. I tried, and Sylvain Pogodalla and Lutz Stra{\ss}burger  as well, to prove  that every  correct proof net is  the image of a proof in the sequent calculus --- the one given here or some variant. 

The \sp-pomset sequent calculus presented in Figure \ref{sequentsporders} is clearly equivalent to the dicograph sequent calculus with  dicographs of atoms as sequents; in the dicograph sequent calculus, the symmetric series composiitons $\tssp$ may well be used on contexts, as the $\pasp$ and $\befsp$ rule, and all connective introduction rules consists in internalising the $\widehat{*}$ operation inside a formula as a $*$ connective.  This calculus is shown in Figure \ref{sequentdicographs}. Observe that entropy  does not allow inclusion of dicograph in general, but only of an outer \sp-order; indeed, in general, dicograph inclusion does not preserve correctness, as explained in Proposition \ref{tspa4incorrect}. 

\begin{figure} 
\begin{center} 
% rules for pomset logic 
% with dicograpsh of atoms 
% the "largest" system
% simple premises of tensors are isolated in the order 

\begin{prooftree} 
\justifies 
\seq a \pasp  a^\perp 
\using \mathrm{axiom} 
\end{prooftree} 

\bigskip

\begin{prooftree} 
\seq \Gamma \qquad  \seq \Delta
\justifies 
\seq \Gamma \befsp \Delta
\using dimix
\end{prooftree}

\bigskip 

\begin{prooftree} 
\seq O[\Gamma_1,\ldots,\Gamma_p]
\justifies 
\seq O'[\Gamma_1,\ldots,\Gamma_p]
\using \mathrm{entropy}
\left\{\begin{tabular}[c]{l} \small 
with $\Gamma_i$:dicographs,\\ \small $O,O'$ \sp-orders, $O'\subset O$ 
\end{tabular}\right.
\end{prooftree} 
 
\bigskip 

\begin{prooftree} 
\seq A\pasp \Gamma  \qquad  \seq B\pasp \Delta 
\justifies 
\seq \Gamma \pasp (A\tssp B) \pasp \Delta 
\using \ts \mbox{ / cut when }  A=B^\perp
\end{prooftree} 

\bigskip 

\begin{prooftree} 
\seq \Gamma[A\pasp B]
\justifies 
\seq \Gamma[A\pa B] 
\using \pa \mbox{ if } A\simpa B
\end{prooftree} 
\qquad 
\begin{prooftree} 
\seq \Gamma[A\befsp B]
\justifies 
\seq \Gamma[A\bef B] 
\using \bef \mbox{ if }  A\simbef B
\end{prooftree} 
\qquad 
\begin{prooftree} 
\seq \Gamma[A\tssp B]
\justifies 
\seq \Gamma[A\ts B] 
\using \ts \mbox{ if } A\simts B
\end{prooftree} 
\bigskip

% I\lts_{A,B} J= K st K_\Gamma =I K|_\delta=J A\lts B < C iff A<C [I] or B

\end{center} 
\caption{Dicograph sequent calculus with dicographs of atoms as sequents}
\label{sequentdicographs} 
\end{figure}

An induction on either sequent calculus given in this paper shows that: 

\begin{proposition} \label{proppartition} 
Let $\delta$ be a proof a dicograph sequent $R$, and let $\pi_\delta=(B,R)$ be the corresponding proof net. Then the axioms and atoms of $\pi_\delta$ can be partitioned 
into two classes $\Pi_1=(a_i \edge a_i^\perp)_{i\in I_1}$ and $\Pi_2=(a_i \edge a_i^\perp)_{i\in I_2}$ in such a way that either: 
\begin{enumerate} 
\item \label{partitionbef} there are only arcs from $\Pi_1$ to $\Pi_2$
\item \label{partitiontimes} the only edges between $\Pi_1$ and $\Pi_2$ are a $\tssp$ connection: calling  $R_1=R\restr_{\Pi_1}$ and $R_2=R\restr_{\Pi_2}$,  $R_1=A_1\pasp T_1$,  $R_1=A_2\pasp T_2$, and $R=(A_1\tssp A_2)\pasp T_1 \pasp T_2$
\end{enumerate} 
\end{proposition}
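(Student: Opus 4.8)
The plan is to prove Proposition~\ref{proppartition} by induction on the structure of the dicograph sequent calculus derivation $\delta$ (using the calculus of Figure~\ref{sequentdicographs}, equivalently that of Figure~\ref{sequentsporders}). The statement concerns the \emph{last} rule applied in $\delta$, so I would go through the rules one by one and in each case exhibit the partition $\Pi_1\uplus\Pi_2$ of the axioms of the associated proof net $\pi_\delta=(B,R)$. The two alternatives in the conclusion correspond exactly to the two binary rules of the calculus, so the real content lies there; the unary rules are handled by transporting the partition given by the induction hypothesis.

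First I would dispose of the base case and the unary cases. For the $\mathrm{axiom}$ rule $\seq a\pasp a^\perp$, the net has a single axiom $a\edge a^\perp$; take $\Pi_1=\{a\edge a^\perp\}$, $\Pi_2=\emptyset$, and case~\ref{partitionbef} holds vacuously (there are no arcs, hence only arcs, from $\Pi_1$ to $\Pi_2$). For $\mathrm{entropy}$, the axioms and atoms are unchanged and the relation only shrinks ($R'\subset R$), so the partition from the induction hypothesis still works: removing arcs/edges cannot create new edges between the classes, so case~\ref{partitionbef} stays case~\ref{partitionbef}, and for case~\ref{partitiontimes} one checks that $R'$ still has the form $(A_1'\tssp A_2')\pasp T_1'\pasp T_2'$ with $A_i'\subseteq A_i$ — here I would use that entropy only weakens an outer \sp-order, so the $\tssp$ connection between the two classes is either kept entirely or dropped entirely, the latter landing us in case~\ref{partitionbef} (indeed in a degenerate subcase with no edges at all). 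For the $\pa$, $\bef$, $\ts$ introduction rules, the underlying $(B,R)$ of atoms is literally unchanged (folding just reparenthesises the conclusion term), so the induction hypothesis gives the partition directly.

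The two binary rules are the heart of the argument. For $dimix$ applied to $\seq\Gamma$ and $\seq\Delta$, giving $\seq\Gamma\befsp\Delta$: let $\Pi_1$ be the axioms coming from the $\Gamma$-subproof and $\Pi_2$ those from the $\Delta$-subproof. The series composition $\befsp$ adds exactly the arcs $\Gamma$-points $\to$ $\Delta$-points and no edges across, so every edge/arc between $\Pi_1$ and $\Pi_2$ is an arc directed from $\Pi_1$ to $\Pi_2$: this is case~\ref{partitionbef}. For the $\ts$/cut rule applied to $\seq A\pasp\Gamma$ and $\seq B\pasp\Delta$, giving $\seq\Gamma\pasp(A\tssp B)\pasp\Delta$: take $\Pi_1$ = axioms from the left subproof (those lying over $A$ and over $\Gamma$), $\Pi_2$ = axioms from the right subproof (over $B$ and over $\Delta$). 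The only new relations introduced are the symmetric-series edges between the atoms of $A$ and the atoms of $B$; writing $R_1$ for the restriction to $\Pi_1$ (an \sp-dicograph of the form $A_1\pasp T_1$ with $A_1$ the sub-dicograph on the atoms of $A$ and $T_1$ that on $\Gamma$) and similarly $R_2=A_2\pasp T_2$, we get $R=(A_1\tssp A_2)\pasp T_1\pasp T_2$, which is exactly case~\ref{partitiontimes}. (When $A=B^\perp$ this is the cut case; the combinatorics of the relation are identical, so the same partition works.)

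The main obstacle I anticipate is purely bookkeeping rather than conceptual: making precise, in the $\ts$/cut case, that the restriction $R\restr_{\Pi_1}$ really does factor as $A_1\pasp T_1$ — i.e. that the sub-dicograph over $A$ and the sub-dicograph over $\Gamma$ sit in \emph{parallel} inside the left premise's relation. This is immediate from the shape $\seq A\pasp\Gamma$ of the premise, but one must be careful that later foldings in the subproof (the $\pa,\bef,\ts$ rules, applied \emph{inside} $\Gamma$ or \emph{inside} $A$) do not disturb this outer $\pasp$; since those rules act on a context $\Gamma[\cdot]$ strictly below the displayed $\pasp$, the outer parallel split is preserved, and likewise entropy below only weakens \sp-orders and cannot merge $A$ with $\Gamma$. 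Once this invariant is stated cleanly, every case closes by the remarks above. Hence the induction goes through and Proposition~\ref{proppartition} holds. \hfill$\square$
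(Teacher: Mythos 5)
Your proof is correct and follows exactly the route the paper intends: the paper gives no details beyond ``an induction on either sequent calculus given in this paper,'' and your case analysis --- the binary rules ($dimix$ and $\ts$/cut) producing the two alternatives of the conclusion, the unary rules (entropy and the folding rules) merely transporting the partition from the premise --- is that induction spelled out. Two harmless quibbles: entropy can never drop a $\tssp$ connection, since it only weakens an outer \sp-order (which contains no $\tssp$), so that subcase of your entropy analysis is vacuous; and for the proposition to do its job against the proof net of Figure~\ref{counterex} the two classes must be nonempty, which your induction does in fact deliver for any net with at least two axioms, even though your base case uses an empty $\Pi_2$.
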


\begin{figure} 
\begin{center} 
\begin{tikzpicture}[PS,node distance=5mm and 5mm]
%  \draw[help lines] (-1,-5) grid (3,1) ;

  \node[vertex,label=$a$] (a) {} ;
  \node[vertex,label=$\llneg{a}$, left=of a] (acomp) {} ;

  \node[vertex,label=$\llneg{b}$, above right=of a] (bcomp) {} ;
  \node[vertex,label=$b$,right=1.2 of bcomp] (b) {} ;

  \node[vertex,label=left:$c$, below right=of a] (c) {} ;
  \node[vertex,label=right:$\llneg{c}$,right=of c] (ccomp) {} ;

  \node[vertex,label=right:$\llneg{d}$, below=of ccomp] (dcomp) {} ;
  \node[vertex,label=left:$d$,below=of c] (d) {} ;

  \node[vertex,label=below:$\llneg{f}$, below left=of d] (fcomp) {} ;
  \node[vertex,label=below:$f$, left=of fcomp] (f) {} ;

  \node[vertex,label=below:$\llneg{e}$, below right=of fcomp] (ecomp) {} ;
  \node[vertex,label=below:$e$,right=1.2 of ecomp] (e) {} ;

  \draw[axiom] (a) -- (acomp) ;
  \draw[axiom] (b) -- (bcomp) ;
  \draw[axiom] (c) -- (ccomp) ;
  \draw[axiom] (d) -- (dcomp) ;
  \draw[axiom] (e) -- (ecomp) ;
  \draw[axiom] (f) -- (fcomp) ;

  \llop{\rlink}{c}{bcomp}
  \llop{\cutlink}{a}{c,bcomp}
  \llop{\rlink}{ccomp}{dcomp}
  \llop{\rlink}{e}{b}
  \llop{\rlink}{ecomp}{fcomp}
  \llop{\cutlink}{d}{fcomp,ecomp}
  \llop{\rlink}{acomp}{f}
\end{tikzpicture}
\end{center} 
\caption{A proof net with no corresponding sequent calculus proof (found with Lutz Stra{\ss}burger)} 
\label{counterex} 
\end{figure} 

\pagebreak 

\begin{proposition} 
There does exist a proof net without any sequent calculus proof for example the one in Figure \ref{counterex}. 
\end{proposition} 

\begin{proof} 
First one as to observe that the proof structure in Figure \ref{counterex} is a proof net, i.e. contains no chordless alternate elementary circuit: indeed,  it contains no alternate elementary circuit. 

Because of Proposition \ref{proppartition}, there should  exists a partition into two parts with 
\begin{enumerate} 
\item  either only arcs from one part to the other part, 
\item or a tensor connection between the two parts. 
\end{enumerate} 

If the first case applies, i.e. if there were a partition into two parts with only arcs from one to another, all by vertices connected with an undirected edge, be it a $B$ or an $R$ edge, should be in the same component. 
$a,a^\perp,b,b^\perp,c,c,c^\perp$  should be in the same component say $\Pi_1$
 and 
  $f,f^\perp,d,d^\perp, e,e^\perp$ should be in the same component say $\Pi_2$, but this is impossible because there are both an $R$ arc from $\Pi_1$ to $\Pi_2$, e.g.  $a^\perp \fle f$, and an $R$ arc from $\Pi_2$ to $\Pi_1$, e.g. $e\fle b$. So the first case does not apply.

Because the first case does not apply,  there should exist two parts, with a  tensor rule as the only connection between two parts.
The two possible tensors are $a\tssp (c\befsp b^\perp)$ and $d\tssp (e^\perp \befsp f^\perp)$, but it is impossible: 
\begin{itemize} 
\item $a\tssp (c\befsp b^\perp)$ cannnot be the only connection between the two parts, as there exists an undirected path fro $c$ to $a$ not using any of the two tensor $R$ edges: $$c \arete{B} c^\perp \aretevers{R} d^\perp \arete{B} d \arete{B} f^\perp \arete{R} f \aretede{R}  a^\perp$$ 
\item $d\tssp (e^\perp \befsp f^\perp)$ cannnot be the only connection between the two parts, as there exists an undirected path from  $f^\perp$ to $d$ not using any of the two tensor $R$ edges: $$f^\perp \arete{B}  f  \aretede{R} a^\perp \arete{B} a \arete{R} c \arete{B} c^\perp \aretevers{R}  d^\perp \arete{B} d$$ 
\end{itemize} 
\end{proof}

In next Section \ref{PomsetMLLrewriting} we shall see that the correct proof net in Figure~\ref{counterex} 
can be derived from an axiom 
$
(a\pasp a^\perp) 
\tssp 
(b\pasp b^\perp) 
\tssp 
(c\pasp c^\perp) 
\tssp 
(d\pasp d^\perp) 
\tssp 
(e\pasp e^\perp) 
\tssp 
(f\pasp f^\perp) 
$ by means of the  rewriting rules  of Figure \ref{subdicograph} but $\ts\pa4$, see Figure \ref{counterexInDeepPomset}  and in SBV as well, see Figure  \ref{counterexInSBV}.

\section{Pomset logic in deep inference style} 
\label{PomsetMLLrewriting} 

In \cite{Ret98roma} (1998), I considered the rewriting rules of Figure~\ref{subdicograph} which preserves correctness (but $\ts\pa4$), but as Gugliemi noticed in \cite{Gug2007} (2007) I did not used the rewriting rules as a proof calculus to derive tautologies. However, in the conclusion of \cite{Ret99rr}  where I show that  the rewriting rules that are correct and concerns the MLL connectives ($\ts\pa3$ and $\ts\pa2=MIX$) are equivalent to MLL,  I explain that one could do the same for pomset logic with the rules of  Figure~\ref{subdicograph} that preserve correctness. This rewriting view was developed from 2001 with terms rather than graphs by Guglielmi and Stra{\ss}burger, as the calculus of structures   \cite{Gug99,GugStr01,Strassburger03phd}.

Before we define a rewriting deductive system for pomset logic, let us revisit (as we did in \cite{Ret99rr,Ret03tcs}) the deductive system of Multiplicative Linear Logic (MLL). 
Those results are highly inspired from proof nets, but once they are established they can be presented before proof nets are defined. 

In this section a sequent is simply a dicograph of \emph{atoms} which as explained above  can be viewed using \emph{folding} of Section \ref{foldunfoldsequent} as a dicograph of formulas or as an \sp\ order between formulas depending on how many \emph{folding} transformations and which one are performed. 

Regarding, multiplicative linear logic (MLL), observe that  ${\scriptscriptstyle{AX}}_n=\bigotimes_{1\leq i \leq n} (a_i\pa {a_i}^\perp)$ is the largest cograph  w.r.t. inclusion that can be derived in MLL with the $(a_i\pa {a_i}^\perp)$ as axioms: any additional R edge or arc would make a direct \ae-circuit with an axiom. However, ${\scriptscriptstyle{AX}}_n$ is actually derivable in  MLL, hence in any extension of MLL:

$$ 
\begin{prooftree} 
\begin{prooftree} %C 
\begin{prooftree} % B 
\begin{prooftree} %A
\begin{prooftree} %1
\seq a_1, {a_1}^\perp \qquad 
\justifies 
\seq {\scriptscriptstyle{AX}}_1: a_1 \pa {a_1}^\perp
\using \pa 
\end{prooftree} %1
\begin{prooftree} %2 
\seq a_2, {a_2}^\perp
\justifies 
\seq a_2 \pa {a_2}^\perp
\using \pa 
\end{prooftree} %2
\justifies 
\seq {\scriptscriptstyle{AX}}_2: \ts_{1\leq i \leq 2} ( a_i \pa {a_i}^\perp) 
\using \ts 
\end{prooftree} %A
\begin{prooftree} %3
\seq a_3, {a_3}^\perp
\justifies 
\seq a_3 \pa {a_3}^\perp
\using \pa 
\end{prooftree} %3 
\justifies 
\seq {\scriptscriptstyle{AX}}_3: \ts_{1\leq i \leq 3} ( a_i \pa {a_i}^\perp) \using \ts 
\end{prooftree} % B
\begin{prooftree} %4
\seq a_4, {a_4}^\perp
\justifies 
\seq a_4 \pa {a_4}^\perp
\using \pa 
\end{prooftree} %4 
\justifies 
 \qquad \qquad\qquad \seq{\scriptscriptstyle{AX}}_4: \ts_{1\leq i \leq 4} ( a_i \pa {a_i}^\perp) \qquad \qquad \qquad \qquad \qquad \cdots 
\using \ts 
\end{prooftree} % C
\justifies \seq {\scriptscriptstyle{AX}}_5: \cdots 
\using \ts 
\end{prooftree} 
$$

\subsection{Standard multiplicative linear logic as cograph rewriting} 
\label{mllrewrite} 

In \cite{Ret99rr},  we considered an alternative way to derive theorems of usual multiplicative linear logic MLL, by considering a formula as a binary relation, and more precisely as a cograph over its atoms, by viewing $\tssp$ as $\ts$ and $\pasp$ as $\pa$. As there is no $\bef$ connective in linear logic the series composition is not used, and there is no \sp\ order on conclusions.

Because of the chapeau of the present Section \ref{PomsetMLLrewriting}
 any  sequent of MLL can be viewed is a cograph\linebreak  $C[a_1,a_1^\perp,a_2, a_2^\perp,\ldots a_n,a_n^\perp]$ on $2n$ atoms that is included into $AX_n$. Because of Theorem \ref{subdicograph}, 
$AX_n$ 
rewrites to\linebreak  
$C[a_1,a_1^\perp,a_2, a_2^\perp,\ldots a_n,a_n^\perp]$ using the 
rules of Figure \ref{subdicograph} that concern $\pa$  and $\ts$ i.e. $\ts\pa4$, $\ts\pa3$
and $\ts\pa2$. 
Observe that when viewed as a linear implication  (considering the rules involving those two connectives), the first line $\tssp\pasp4$ is an incorrect linear implication, while $\ts\pa3$ is derivable in MLL and $\ts\pa2$ in MLL+MIX where the rule MIX is the one studied in \cite{FR94}, which also is derivable with $\ts\pa2$: 

$$
\begin{prooftree} 
\seq \Gamma \qquad \seq \Delta 
\justifies 
\seq \Gamma,\Delta 
\using \mathrm{MIX} 
\end{prooftree} 
$$ 

Actually \emph{all}  tautologies of multiplicative linear logic MLL can be derived using  $\ts\pa3$ from an axiom\linebreak  $AX_n=\bigotimes_{1\leq i\leq n} (a_i\pa {a_i}^\perp)$, and \emph{all}  tautologies   of linear logic enriched with the  MIX rule, MLL+MIX, can be derived by $\ts\pa3$ and $\ts\pa2$ (MIX). 

Thus, we can define a proof system gMLL for MLL working with sequents as cographs of atoms  as follows.  
Axioms are $AX_n$: $\tssp_i\  (a_i\pasp {a_i}^\perp)$ (the two dual atoms are connected by an edge in a different relation called $A$ for $A$ axioms). 
There is just one deduction rule presented as a rewrite rule (up to commutativity and associativity): $\ts\pa3$. 

Let us call this deductive system gMLL (g for graph), then  \cite{Ret99rr,Ret03tcs}   established that cograph rewriting is an alternative proof systems to MLL and MLL+MIX.

\begin{theorem} \label{rewMLL} 
MLL proves a sequent $\seq \Gamma$ with $2n$ atoms 
if and only if gMLL proves the unfolding $\Gamma^{cog}$ of $\Gamma$ (the cograph $\Gamma^{cog}$ of atoms corresponding to $\Gamma$, that is the $\pa$ of the unforging of each formula in $\Gamma$), i.e. $AX_n$ rewrites to $\Gamma^{cog}$ using $\ts\pa3$. 

MLL+MIX proves a sequent $\seq \Gamma$ with $2n$ atoms 
if and only if gMLL+mix proves the unfolding  $\Gamma^{cog}$ of $\Gamma$, i.e. $AX_n$ rewrites to $\Gamma^{cog}$ using $\ts\pa3$ and $\ts\pa2$. 
\end{theorem}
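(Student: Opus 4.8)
The plan is to prove the \textbf{MLL} part of the equivalence by two inductions; the \textbf{MLL+MIX} part then follows by the same arguments with one extra rule available on each side ($\ts\pa2$ for rewriting, MIX for proofs). Throughout I write $\langle C\rangle$ for the formula obtained from a cograph $C$ on the $2n$ atoms by reading every $\tssp$ as $\ts$ and every $\pasp$ as $\pa$ (well defined up to associativity and commutativity); then $\langle\Gamma^{cog}\rangle=F_1\pa\cdots\pa F_k$ when $\Gamma=F_1,\dots,F_k$, so by reversibility of the $\pa$-rule $\seq\langle\Gamma^{cog}\rangle$ is provable exactly when $\seq\Gamma$ is. For the ``only if'' direction I would induct on a cut-free MLL derivation of $\seq\Gamma$ (cut being admissible in MLL), reading off the axiom linking it determines, hence the matching $AX_n$, and proving $AX_n\rightsquigarrow^*\Gamma^{cog}$ using $\ts\pa3$ only. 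The axiom $\seq a,a^\perp$ gives $\Gamma^{cog}=a\pasp a^\perp=AX_1$ (zero steps). A $\pa$-rule $\seq\Gamma,A,B\mapsto\seq\Gamma,A\pa B$ leaves the underlying cograph of atoms unchanged (only the $\pasp$-bracketing moves), so the hypothesis transfers directly. In the $\ts$-rule, from $\seq\Gamma,A$ (on $2p$ atoms) and $\seq\Delta,B$ (on $2q$ atoms) one gets $\seq\Gamma,\Delta,A\ts B$; the $n=p+q$ axiom links split into a ``left'' and a ``right'' family, so $AX_n=AX_p\tssp AX_q$, and the two induction hypotheses together with closure of rewriting under contexts give $AX_n\rightsquigarrow^*(\Gamma^{cog}\pasp A^{cog})\tssp(\Delta^{cog}\pasp B^{cog})$; this rewrites to $\Gamma^{cog}\pasp\Delta^{cog}\pasp(A^{cog}\tssp B^{cog})$ by at most two applications of $\ts\pa3$ (use $(X\pasp Y)\tssp U\rightsquigarrow(X\tssp U)\pasp Y$ with $Y=\Gamma^{cog}$ to extract $\Gamma^{cog}$, then, after commuting the remaining $\tssp$, once more with $Y=\Delta^{cog}$). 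Finally the MIX rule $\seq\Gamma,\seq\Delta\mapsto\seq\Gamma,\Delta$ is accounted for by the single step $\Gamma^{cog}\tssp\Delta^{cog}\rightsquigarrow\Gamma^{cog}\pasp\Delta^{cog}$ via $\ts\pa2$.

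For the ``if'' direction I would induct on the length of a rewriting $AX_n=C_0\rightsquigarrow\cdots\rightsquigarrow C_\ell=\Gamma^{cog}$, keeping the invariant that $\seq\langle C_i\rangle$ is MLL-provable. The base case $i=0$ is exactly the derivation already displayed in Section~\ref{PomsetMLLrewriting}, proving $\seq\langle AX_n\rangle$, that is $\seq\bigotimes_i(a_i\pa a_i^\perp)$. For the inductive step, a rewriting move $C_i\rightsquigarrow C_{i+1}$ is an occurrence of $\ts\pa3$ (resp. $\ts\pa2$) inside a cograph context; applying $\langle\cdot\rangle$ turns it into an occurrence of the implication $(X\pa Y)\ts U\multimap(X\ts U)\pa Y$ (resp. $X\ts U\multimap X\pa U$, i.e. MIX) inside a formula context built only from $\pa$ and $\ts$. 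Since this implication is derivable in MLL (resp. MLL+MIX) and $\pa$ and $\ts$ are monotone for $\multimap$, we obtain $\langle C_i\rangle\multimap\langle C_{i+1}\rangle$, and one cut against the proof of $\seq\langle C_i\rangle$ yields a proof of $\seq\langle C_{i+1}\rangle$. Taking $i=\ell$ gives a proof of $\seq\langle\Gamma^{cog}\rangle$, hence of $\seq\Gamma$.

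I expect the real obstacle to be the $\ts$-case of the first direction: one has to check that the rearrangement $(\Gamma^{cog}\pasp A^{cog})\tssp(\Delta^{cog}\pasp B^{cog})\rightsquigarrow^*\Gamma^{cog}\pasp\Delta^{cog}\pasp(A^{cog}\tssp B^{cog})$ is attainable with $\ts\pa3$ only --- the tempting one-step move would use $\ts\pa4$, which is precisely the rule that is \emph{not} a valid linear implication and whose use would break the converse direction --- and to handle carefully the uses of commutativity of $\tssp$ and $\pasp$ and the degenerate cases where $\Gamma$ or $\Delta$ is empty. A subsidiary point is the axiom-link bookkeeping: that the two sub-derivations above a $\ts$-rule carry disjoint, self-contained families of links, so that the factorisation $AX_n=AX_p\tssp AX_q$ is legitimate.
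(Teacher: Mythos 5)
Your proposal is correct and follows essentially the route the paper intends: an induction on (cut-free) sequent calculus proofs for the forward direction, and for the converse the observation, already made in the text, that $\ts\pa3$ read as a linear implication is MLL-derivable while $\ts\pa2$ is exactly MIX, so each rewriting step can be cut against the current provable sequent. The two-step simulation of the $\ts$-rule by $\ts\pa3$ (avoiding $\ts\pa4$) that you flag as the delicate point is indeed the crux, and your calculation of it is right.
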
 

\begin{proof} Easy induction on sequent calculus proofs see e.g. \cite{Ret99rr,Ret03tcs}. Stra{\ss}bruger made a direct proof in  \cite{Strassburger03phd}. \end{proof} 

 Thus all MLL proofs can be obtained that way from axioms, but despite Sta{\ss}burger result,  unfortunately for pomset logic,  it is hard to prove it directly on a non inductive notion of proof like proof nets.

\begin{proposition}
The calculi gMLL and gMLL+mix can safely be extended to structured sequents of formulas of MLL (not just atoms), i.e. cographs of MLL formulas  with the rules of folding and unfolding with the same results. 
\end{proposition}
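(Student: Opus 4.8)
The plan is to prove the stronger statement that, writing $T^{cog}$ for the cograph of \emph{atoms} obtained from a cograph $T$ of MLL formulas by replacing each leaf–formula $F$ with the cograph $F^{cog}$ of its atoms (reading $\pa$ as $\pasp$ and $\ts$ as $\tssp$ inside $F$), extended-gMLL derives $T$ from the axiom $AX_n$ if and only if gMLL derives $T^{cog}$ from $AX_n$, and likewise for the mix versions. This map $(-)^{cog}$ is well defined up to the associativity of the three operators and the commutativity of $\pasp,\tssp$, it is exactly the result of applying \emph{unfolding} as long as possible, and $AX_n^{cog}=AX_n$ since the leaves of $AX_n$ are already atomic. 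Combined with theorem \ref{rewMLL} this yields ``the same results'': the extended system proves a structured sequent of MLL formulas iff the associated formula (or multiset of formulas) is provable in MLL, respectively MLL+MIX.

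First I would record two easy facts. (i) One folding or one unfolding step leaves the underlying cograph of atoms unchanged: such a step only moves a connective between the dicograph level and the inside of a single formula, so $T\rightsquigarrow T'$ by (un)folding implies $T^{cog}=T'^{cog}$ (as cographs). (ii) One application of the rule $\ts\pa3$ at the formula-cograph level, say $T=C[(X\pasp Y)\tssp U]\rightsquigarrow C[(X\tssp U)\pasp Y]=T'$ with $X,Y,U$ sub-cographs whose leaves are formulas, projects to a \emph{single} application of $\ts\pa3$ at the atomic level: $T^{cog}=C^{cog}[(X^{cog}\pasp Y^{cog})\tssp U^{cog}]\rightsquigarrow C^{cog}[(X^{cog}\tssp U^{cog})\pasp Y^{cog}]=T'^{cog}$, instantiating the metavariables of figure \ref{subdicograph} by the sub-cographs of atoms $X^{cog},Y^{cog},U^{cog}$; and the same holds for $\ts\pa2$ in the mix case. (Recall that gMLL uses only $\ts\pa3$ and gMLL+mix only $\ts\pa3$ and $\ts\pa2$.)

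For the ``only if'' direction I would take an extended-gMLL derivation $AX_n=T_0\rightsquigarrow T_1\rightsquigarrow\cdots\rightsquigarrow T_m=T$ whose steps are folding, unfolding, and $\ts\pa3$ (plus $\ts\pa2$ for mix), apply $(-)^{cog}$ to each term, and use (i)–(ii): the (un)folding steps become equalities $T_i^{cog}=T_{i+1}^{cog}$ while each $\ts\pa3$ (resp.\ $\ts\pa2$) step becomes exactly one atomic $\ts\pa3$ (resp.\ $\ts\pa2$) step; deleting the equalities leaves a genuine gMLL (resp.\ gMLL+mix) derivation $AX_n=T_0^{cog}\rightsquigarrow^{*}T_m^{cog}=T^{cog}$. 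For the ``if'' direction, a gMLL derivation of $T^{cog}$ from $AX_n$ is already an extended-gMLL derivation — each atomic $\ts\pa3$ step is the special case of the formula-level rule in which every sub-cograph has atomic leaves — and appending the sequence of foldings that reverses, step by step, the successive unfoldings turning $T$ into $T^{cog}$ (a finite sequence, since unfolding strictly decreases the total size of the formulas) produces an extended-gMLL derivation of $T$.

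I expect the only delicate point to be fact (ii), together with its reuse in the ``if'' direction: one has to check, by a routine induction on the cograph term, that the metavariables $X,Y,U$ in an application of the rule are always sub-cographs whose leaves are \emph{whole} formulas, never fragments of a formula, so that $X^{cog},Y^{cog},U^{cog}$ are legitimate sub-cographs of atoms and the atomic instance of the rule is well formed; and dually that the full unfolding $T^{cog}$ is reached from $T$ in finitely many unfolding steps, hence is reversible by folding. Both are straightforward structural inductions on dicograph terms, requiring no new idea, so the extension is indeed safe.
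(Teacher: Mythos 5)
Your proof is correct, but it takes a genuinely different route from the paper's. The paper disposes of this proposition in one line by appealing to proof nets: folding and unfolding preserve the correctness criterion on (intermediate) proof structures (this is proposition \ref{foldunfoldcorrect} of subsection \ref{PNling}), so the extension changes nothing about which objects are derivable. Your argument instead stays entirely at the level of the rewriting systems: you introduce the projection $(-)^{cog}$ onto cographs of atoms, observe that it collapses folding/unfolding steps to identities and sends each formula-level instance of $\ts\pa3$ (resp.\ $\ts\pa2$) to a single atomic instance, and then simulate derivations in both directions, reversing the unfoldings by foldings for the converse. This is a self-contained syntactic simulation that never invokes the correctness criterion, sequentialisation, or any graph-theoretic machinery — it only needs theorem \ref{rewMLL} to tie the result back to MLL provability — whereas the paper's route is shorter but leans on the proof-net infrastructure developed later in the text. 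The one point you flag as delicate (that the metavariables $X,Y,U$ in a formula-level rule instance project to legitimate sub-cographs of atoms) is indeed the only thing to check, and it holds because the extended rules rearrange only the cograph structure \emph{between} whole formulas, never inside them, so every sub-term of the formula-level cograph has whole formulas as leaves and the rewriting modulo associativity/commutativity commutes with $(-)^{cog}$. Your version has the additional merit of making explicit what ``the same results'' means: extended-gMLL derives $T$ iff gMLL derives $T^{cog}$.
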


\begin{proof} This is just an easy remark, based on proof nets, which can be viewed as a consequence Subsection \ref{PNling}. \end{proof}

\pagebreak

\subsection{Deep pomset is SBV} 
\label{sbv}

The above result for MLL suggests to present  pomset logic as a rewriting system from ``axioms" that are tensors of $x\pa x^\perp$ axioms,  with the rewriting rules of Figure \ref{subdicograph} except the one that does not preserve correctness, namely $\ts\pa4$ ---  the \sp-pomset sequent calculus of Figure \ref{sequentsporders} is defined along the same lines.  This rewriting calculus is rather natural because the rewriting rules preserve correctness (but $\ts\pa4$),  the rewriting rules themselves correspond to proofs  and to canonical linear maps in coherence spaces. 

This  suggests that a rewriting system defined as gMLL+mix in the previous section (but with dicographs instead of cographs) might yield all correct proof nets, but this is still an open question. Let us call \emph{deep pomset} the rewriting system for pomset logic defined by axioms and rewriting rules:

\begin{description} 
\item[Axioms]  $AX_n=\tssp_{1\leq i \leq n} (a_i\pasp {a_i}^\perp)$ is a tautology. 
\item[Rules]  Whenever a dicograph of atoms $D$ is  a tautology, so is the dicograph  $D'$ obtained  by any of the 10 rules $\ts\pa3$, $\ts\pa2$, $\ts\bef4$, $\ts\bef3l$, $\ts\bef3r$, $\ts\bef2$, 
$\bef\pa4$, $\bef\pa3l$, $\bef\pa3r$, $\bef\pa2$ of Figure \ref{subdicograph} --- i.e. all rules of Figure \ref{subdicograph} but $\ts\pa4$. Observe the $D'$ has the same atoms as $D$. 
\end{description} 

For simplicity, we leave out the circumflex accents which draw a distinctions between  the logical connectives ($\pa,\bef,\ts$) and the corresponding operations on dicographs ($\pasp,\befsp,\tssp$), because in this section there are only dicographs, denoted by terms. 

\begin{figure} 
$$
\begin{array}{r@{\ \rightsquigarrow\ }l}
\mbox{Axiom:}&
(e\pa e^\perp) 
\ts 
(b\pa b^\perp) 
\ts 
(c\pa c^\perp) 
\ts 
(f\pa f^\perp) 
\ts 
(a\pa a^\perp) 
\ts 
(d\pa d^\perp) 
\\  
\ts{\bef}2 & [(e^\perp  \pa e) \bef (b^\perp  \pa b)] \ts 
(c\pa c^\perp) 
\ts 
(f\pa f^\perp) 
\ts 
(a\pa a^\perp) 
\ts 
(d\pa d^\perp) 
\\ 
\bef\pa4 & [(e^\perp  \bef b^\perp ) \pa (e \bef b)] \ts 
(c\pa c^\perp) 
\ts 
(f\pa f^\perp) 
\ts 
(a\pa a^\perp) 
\ts 
(d\pa d^\perp) 
\\ 
\ts\pa3 & 
[\{(e^\perp  \bef b^\perp )  \ts 
(c\pa c^\perp) 
\ts 
(f\pa f^\perp)\} 
\pa (e \bef b)]
\ts 
(a\pa a^\perp) 
\ts 
(d\pa d^\perp) 
\\ 
2 \times \ts{\bef}2   & [\{((c\pa c^\perp ) \bef (e^\perp  \pa b^\perp )) \bef (f \pa f^\perp ))\} \pa (e \bef b) ]  
\ts 
(a\pa a^\perp) 
\ts 
(d\pa d^\perp) 
\\ 
2 \times {\bef}\pa4   &  [(c\bef b^\perp \bef f) \pa (c^\perp  \bef e^\perp  \bef f^\perp )  \pa (e \bef b) ] \ts 
(a\pa a^\perp) 
\ts 
(d\pa d^\perp)\\  

\ts\pa3  &  [\{(a\pa a^\perp) \ts (c\bef b^\perp \bef f) 
 \}\pa (c^\perp  \bef e^\perp  \bef f^\perp )  \pa (e \bef b) ] 
\ts 
(d\pa d^\perp)\\  

\ts{\bef}3l  &  [(\{(a\pa a^\perp) \ts (c\bef b^\perp)\} \bef f)  
\pa (c^\perp  \bef e^\perp  \bef f^\perp )  \pa (e \bef b) ] 
\ts 
(d\pa d^\perp)\\  

\ts\pa3  &  [(\{a^\perp \pa (a \ts (c\bef b^\perp))\} \bef f) 
\pa (c^\perp  \bef e^\perp  \bef f^\perp )  \pa (e \bef b) ] 
\ts 
(d\pa d^\perp)\\

\bef\pa3  &  [ (a \ts (c\bef b^\perp)) \pa (a^\perp \pa \bef f) 
\pa  (c^\perp  \bef e^\perp  \bef f^\perp )  \pa (e \bef b)]
\ts 
(d\pa d^\perp)
\\  
\ts\pa3 &   (a \ts (c\bef b^\perp)) \pa (a^\perp \bef f) 
\pa \{ (c^\perp  \bef e^\perp  \bef f^\perp ) \ts 
(d\pa d^\perp)\} \pa (e \bef b)
\\

\ts\pa3  &  (a \ts (c\bef b^\perp)) \pa (a^\perp \bef f)  
\pa  (c^\perp  \bef \{ [ (e^\perp  \bef f^\perp ) \ts 
d] \pa d^\perp\} \pa (e \bef b)
\\  
{\bef}\pa3r  &  (a \ts (c\bef b^\perp)) \pa (a^\perp \bef f)
\pa   ((e^\perp  \bef f^\perp ) \ts d) \pa (c^\perp  \bef d^\perp) \pa (e \bef b)  
\end{array} 
$$
\caption{The derivation of the proof net of Figure \ref{counterex} in Deep Pomset.} 
\label{counterexInDeepPomset} 
\end{figure}

\begin{figure} 
\label{SBVrules}
\begin{itemize}
    \item 
The rewriting rules $\ts\pa3$, $\ts\pa2$, $\ts\bef4$, $\ts\bef3l$, $\ts\bef3r$, $\ts\bef2$, 
$\bef\pa4$, $\bef\pa3l$, $\bef\pa3r$, $\bef\pa2$ (but $\ts\pa4$) of  Figure \ref{subdicograph} are the structural rules of  SBV.  Because of $\un$  the interchange law on four terms is enough in SBV while Pomset Logic (without $\un$) requires the ternary and binary rules as well.
\item 
The rule $\un{\downarrow}$ says $\un$ may appear of vanish "anywhere": 

\centerline{ 
$
\begin{array}{rcl} 
\multicolumn{3}{c}{\mbox{Rule $\un{\downarrow}$}}\\ 
S[T]&\leftrightsquigarrow&S[(T\pa\un)]\\   S[T]&\leftrightsquigarrow&S[(T\bef\un)]\\  S[T]&\leftrightsquigarrow&S[(\un\bef T)]\\   S[T]&\leftrightsquigarrow&S[(T\ts\un)]\\ 
\end{array}
$}  

\item 
The rule rule $a{\downarrow}$ says that $\un$ may be replaced with an ``axiom" $a_i\pa a_i^\perp$: 

\centerline{$\begin{array}{rcl}
S[\un]&\rightsquigarrow&S[(a_i\pa a_i^\perp)]\\   
\end{array}
$} 

\item The rule $a{\uparrow}$ says that 
a pair of edge equivalent dual atoms (i.e. equivalent and connected with a tensor) can vanish every where (in the original formulation of SBV $S[(a_i\ts a_i^\perp)]$ rewrites into $S[\un]$ but given the above rules for $\un$ above it is unnecessary). T

\centerline{$\begin{array}{rcl}
S[(a_i\ts a_i^\perp)]&\rightsquigarrow&S\restr_{x\not\in\{a_i,a_i^\perp\}}\\
\end{array}
$} 
\end{itemize} 

    \caption{The term calculus SBV \cite{GugStr01}, written with MLL/pomset symbols. Observe that associativity, commutativity are implicit in handsome proof nets (graphs) but invertible rewriting rules in SBV.} 
    \label{sbvfig}
\end{figure}

\begin{figure} 
$$
\begin{array}{r@{\ \rightsquigarrow\ }l}
\mbox{Axiom} &\un \\ 
a{\downarrow} & (e^\perp  \pa e) \\ 
\un a{\downarrow} & (e^\perp  \pa e) \ts (b^\perp  \pa b) \\ 
\ts\bef2 & (e^\perp  \pa e) \bef (b^\perp  \pa b) \\ 
\bef\pa4 & (e^\perp  \bef b^\perp ) \pa (e \bef b)  \\ 
(\un a{\downarrow})\times 2  &  ((c\pa c^\perp ) \ts (e^\perp  \pa b^\perp ) \ts (f \pa f^\perp )) \pa (e \bef b) \\ 
\ts\bef2 x 2   & ((c\pa c^\perp ) \bef (e^\perp  \pa b^\perp ) \bef (f \pa f^\perp )) \pa (e \bef b)  \\ 
\bef\pa4  x 2 &  (c\bef b^\perp \bef f) \pa (c^\perp  \bef e^\perp  \bef f^\perp )  \pa (e \bef b)  \\ 
\un a{\downarrow}  & (((c\bef b^\perp )\ts (a \pa a^\perp ))\bef f) \pa (c^\perp  \bef e^\perp  \bef f^\perp )  \pa (e \bef b)  \\ 
\ts\pa3 &  (((c\bef b^\perp )\ts a) \pa a^\perp )\bef f) \pa (c^\perp  \bef e^\perp  \bef f^\perp )  \pa (e \bef b)  \\ 
\bef\pa3 &  ((c\bef b^\perp )\ts a) \pa (a^\perp \bef f) \pa (c^\perp  \bef e^\perp  \bef f^\perp )  \pa (e \bef b)  \\ 
\un a{\downarrow} &  ((c\bef b^\perp )\ts a) \pa (a^\perp \bef f) 
\pa (c^\perp  \bef ((e^\perp  \bef f^\perp ) \ts (d \pa d^\perp ))  
\pa (e \bef b)  \\ 
\ts\pa3 & 
((c\bef b^\perp )\ts a) \pa (a^\perp \bef f) 
\pa \{c^\perp  \bef [((e^\perp  \bef f^\perp) \ts d) \pa d^\perp]\}  
\pa (e \bef b)  \\ 
\bef\pa3
 & ((c\bef b^\perp )\ts a) \pa (a^\perp \bef f)  \pa (((e^\perp  \bef f^\perp ) \ts d) \pa (c \bef d^\perp)  \pa (e \bef b) 
\end{array}
$$
\caption{An example of an SBV derivation: the dicog proof net of Figure \ref{counterex} in SBV (thanks to Lutz Stra{\ss}burger). We grouped the rules $\un{\downarrow}$ and $a{\downarrow}$. As we proved in this paper derivation can be converted into a Deep Pomset derivation, the one in Figure~\ref{counterexInDeepPomset}} 
\label{counterexInSBV} 
\end{figure} 

However, what is simple is to show that pomset logic as graph rewriting i.e. deep pomet is equivalent to the SBV rewriting system defined in Figure \ref{sbvfig} --- we follow the simple presentation given in \cite{GugStr01}.  The SBV system is defined as term rewriting, rather than as dicographs rewriting. In SBV term system there are bidirectional rewriting rules that are simply equality of the dicographs (e.g. associativity or commutativity of the order operations). Another difference is that axiom is expressed in BV as $\circ$, that I shall write $\un$\footnote{The $\circ$ symbol is cute, but $\circ$ is unusual for denoting a unit, or a truth value, or the set $\{*\}$ with a single element which actually is the multiplicative unit in the category of coherence spaces, so I prefer the standard notation 
``$\un$".}, a unit for all the three connectives,
which may appear anywhere, and which may be rewritten as $a\pa a^\perp$. 

\pagebreak 

\subsubsection[Simulating  the atomic-cut reduction in Deep Pomset]{Simulating $a{\uparrow}$, the atomic-cut reduction in Deep Pomset} 

In the proof net framework,  $a{\uparrow}$ is just an atomic cut i.e. a formula $a \simts a^\perp$ that is not a subformula: the dicograph is  $t\pa(a \simts a^\perp)$. Let us first see that this can be done within pomset logic.

\begin{proposition}\label{extractatsaperp}  Let $S[t*u]$ be a dicograph. Then $S[t*u]$  with $*=\bef,\ts,\pa$
rewrites into $S[t]\pa u$ with the correct  rewriting rules only, i.e. $\ts\pa4$ can be avoided. 
\end{proposition} 

\begin{proof} Because $S[t*u]\subset S[t]\pa u$ here is no doubt the largest dicograph $S[t*u]$ rewrites into the smaller one $S[t]\pa u$, but we need to check that $\ts\pa4$ is unnecessary. 
This is rather straight forward, by induction on the structure of $S[]$. 
%If $S=t*u$ then the result holds with $\ts\pa2$ of $\bef\pa2$. If $S[t*u]=S_1[t*u]\star S_2$ with with $*=\bef,\ts,\pa$ (resp. $S_2\bef S_1[t*u]$, because $\bef$ is not commutative, it is a separate case), by induction hypothesis $S[t*u]=S_1[t*u]\star S_2$ rewrites to $(S_1[t]\pa u)\star S_2$ (resp. $S_2\bef (S_1[t] \pa u)$) which finally rewrites into $(S_1[t]\star S_2) \pa u=S[t]\pa u$ using $\ts\pa3$ or $\bef\pa3l$ or $\pa$ associativity  (resp.  rewrites into $(S_2[t]\bef S_1)\pa u=S[t]\pa u$ using $\bef\pa3r$). So $S[t*u]$  with $*=\bef,\ts,\pa$ rewrites into $S[t]\pa u$ without using $\ts\pa4$.
\end{proof}

\begin{definition}\label{atomcutredok} 
Given a proof net $\pi=(V,B,R)$ with $R=S[u*(a\ts a^\perp)]$. An \emph{atomic cut reduction} is a transformation of $\pi$ into $\pi'=(V',B',R')$,
with $V'=\{a,a^\perp\}$, $B'=B\setminus \{B(a)\frac{B}{\qquad} a, a^\perp \frac{B}{\qquad} B(a^\perp)\} \cup \{B(a)\frac{B}{\qquad}B(a^\perp)\}$ and $R'=S[u]$ --- the atoms $a$ and $a^\perp$ are not anymore in the domain of the R dicograph.  The expression $B(x)$ denotes the unique vertex $B$-related to $x$.
\end{definition} 

\begin{proposition}\label{atomiccutelim}
When an atomic cut reduction is performed on a proof net $\pi=(V,B,R)$ with $R=S[u*(a_i\ts a_i^\perp)]$ yielding $\pi'=(V',B',R'=S[u])$ as in the above definition $\pi'$ is a correct proof net as well. 
\end{proposition}

\begin{proof} 
We view atomic cut reduction as a two step process. The first step consists in turning $\pi$ into $\pi^-=(V,B,S[t]\pa (a\ts a^\perp)$ and as shown in Proposition \ref{extractatsaperp} this transformation preserves correctness ($\ts\pa4$ is not used): $\pi^-$ is correct. The second step consists in replacing in $\pi^-$ the  sequence of three edges 
$\{B(a)\frac{B}{\qquad} a, 
a^\perp \frac{R}{\qquad} B(a^\perp)\} \cup \{B(a)\frac{B}{\qquad}B(a^\perp)\}$ with a single B edge $B(a)\frac{B}{\qquad}B(a^\perp)$. 

Observe that in $\pi^-$ there are no R edges incident to $a$ nor to $a^\perp$. 
Hence the  \ae circuits are unchanged in $\pi^-$ and i $\pi'$ are the same up to the replacement of $\{B(a)\frac{B}{\qquad} a, 
a^\perp \frac{R}{\qquad} B(a^\perp)\} \cup \{B(a)\frac{B}{\qquad}B(a^\perp)\}$ with the single B edge $B(a)\frac{B}{\qquad}B(a^\perp)$. 
If one \ae circuit of $\pi'$ would be chordless so would be its image in $\pi^-$
\end{proof} 

Next proposition shows that if a proof net is derivable in deep pomset, so is the proof net obtained by performing an atomic cut reduction: 
\begin{proposition}\label{reachfromaxioms} 
Let $\pi=(V,B,R)$ with $R=S[t*(a\ts a^\perp)]$ be a proof net derivable in deep pomset and let $\pi'=(V,B,R[t])$ be the proof net obtained from $\pi$ by an atomic cut reduction as in the above definition. If $Ax_{n+1}$ rewrites to $\pi$ with only the correct rewriting rules, then $Ax_n$ (one axiom less) rewrites into $\pi'$ with only the correct rewriting rules. 
\end{proposition} 

\begin{proof} 
Let ${\underline{a}}^\perp=B(a)$ and ${\underline{a}}=B(a^\perp)$ the atoms that are linked to $a$ and $a^\perp$ with axioms, that are the two end-vertices of the same B-edge in a proof net (or the ones that are created jointly by a $a{\downarrow}$ in SBV).

Let us call $\delta$ a  sequence of rewritings leading from $Ax_{n+1}=({\underline{a}}^\perp  \pa a)\otimes 
(a^\perp \pa {\underline{a}})\otimes (\otimes_i (a_i\pa {a_i}^\perp))$ to  $R=S[t*(a\ts a^\perp)]$. 

Observe that the restriction of $R$ to ${\underline{a}}^\perp,a, a^\perp,{\underline{a}}$ can only be $(a\ts a^\perp)\pa {\underline{a}} \pa {\underline{a}}^\perp$, because $a\simts a^\perp$ in $R$, and because $R$ is a subdicograph of $Ax_{n+1}$. Thus the rewriting $\delta$ includes at some point the rewriting the sub term $w=({\underline{a}}^\perp  \pa a)\otimes 
(a^\perp \pa {\underline{a}})$ by an occurrence $i$ of the rewriting rule $\ts\pa3$ either into $u={\underline{a}}^\perp  \pa (a \otimes 
(a^\perp \pa {\underline{a}}))$  or into $v=(({\underline{a}}^\perp  \pa a)\otimes 
a^\perp) \pa {\underline{a}})$ --- that are the only possibilities since at the end of the rewriting $\delta$ we have $a\simts a^\perp$ i.e. $a$ and $a^\perp$ should be kept linked by a $\tssp$. 

Because at the end of the rewriting $\delta$ we have $a\simts a^\perp$, if $\delta$  reduced $w$ to $u$ (resp. to $v$), then $\delta$ reduces later on the sub term $u'=(a \otimes 
(a^\perp \pa {\underline{a}}))$ of $u$ (resp. the sub term $v'=
({\underline{a}}^\perp  \pa a)\otimes  a^\perp$  of $v$) 
into $u''= (a \otimes 
a^\perp) \pa {\underline{a}}$ (resp. into $v''=
{\underline{a}}^\perp  \pa (a\otimes  a^\perp)$) by an occurrence $j$ the rewriting rule $\ts\pa3$.

Because the  $\ts\pa3$ rules $i$ and $j$ commute with the rewriting rules that precede them, we may assume 
that $\delta$ starts with the  $\ts\pa3$ rules $i$ and $j$ turning 
$({\underline{a}}^\perp  \pa a)\otimes  (a^\perp \pa {\underline{a}})$ of $Ax_{n+1}$ 
into $(a\ts a^\perp) \pa ({\underline{a}} \pa {\underline{a}}^\perp)$, then followed by a rewriting $\delta^-$ from $[(a\ts a^\perp) \pa ({\underline{a}} \pa {\underline{a}}^\perp)]  
\otimes (\otimes_i (a_i\pa {a_i}^\perp))$
to  $R=S[t*(a\ts a^\perp)]$. 

The result is obtained by considering the projection $\delta^-\restr_{x\not\in\{a,a^\perp}$ of $\delta^-$  on the dicographs from which  $a$ and $a^\perp$ are dropped out. The only prohibited rule $\ts\bef4$ cannot result obtained from a correct rule by dropping out an atom. The correct rewriting derivation $\delta^-\restr_{x\not\in\{a,a^\perp}$ reduces $Ax_n=({\underline{a}} \pa {\underline{a}}^\perp)  
\otimes (\otimes_i (a_i\pa {a_i}^\perp))$ into $R=S[t]$. 
\end{proof}

\subsubsection{Dealing with the unit in Deep Pomset} 

To establish the correspondence between deep pomset and SBV either we can enrich pomset logic with a unit, or rather consider the dicographs proved by SBV without any unit remaining.

We encode $\un$ in RnB graphs with an axiom, i.e. a B edge whose two ends $\epsilon$ and $\epsilon^\perp$ are always par-equivalent 
$\epsilon \simpa \epsilon^\perp$
in the dicograph $t$ i.e. $t$ is $t[\epsilon\pa\epsilon^\perp]$ --- as the derivation proceeds, $\epsilon$ and $\epsilon^\perp$ are never driven apart: in other words we define $\un=(\epsilon\pa\epsilon^\perp)$ --- and we may use $\un$ or $(\epsilon\pa\epsilon^\perp)$ to denote it in dicographs.\footnote{There are other faithful encodings of $\un$ but this one 
 preserves the structure of the proof net: a simple graph, whose 
B edges define a perfect matching, R edges are a dicograph and every alternate elementary cycle has an R chord.}

A case study of possible \ae-circuit and their R chords show the two following propositions: 
\begin{proposition}\label{adding1} 
Let $u$ and $t[x]$ be dicographs ($x$ stands for a vertex). Assume that $(B, t[u])$
is a correct proof net (every \ae-circuit contains an R chord). 
Then $(B\cup \epsilon B \epsilon^\perp , t[u\ts\un])$ is a correct proof net as well. 
\end{proposition}

\begin{proposition}\label{removing1}  
Let $u$ and $t[x]$ be dicographs ($x$ stands for a vertex). 
Assume that $B\cup \epsilon\--\epsilon^\perp , t[u\ts\un]$
is a proof net (every \ae-circuit contains an R chord). 
Then $B, t[u]$ is a correct proof net as well. 
\end{proposition}

\begin{proposition}\label{internalise} Given three dicographs $t,u,v$, 
$t[u]\otimes v$ rewrites to $t[u\ts v]$ and to $t[u\pa v]$ and $t[u\bef v]$ and $t[v\bef u]$.  
\end{proposition}

\begin{proof} 
Easy induction on the structure of dicograph with a ``hole" $t[]$. 
%$t[u]\otimes v$ rewrites to $t[u\ts v]$ by a straightforward induction on the structure of $t$. For instance, if $t[u]=t'[u]\bef t''$ then $t[u]\ts v=(t'[u]\bef t'')\ts v$  rewrites to $(t'[u]\ts v)\bef t''$ by $\ts\bef3g$; by induction hypothesis ($t'$ is smaller than $t$) $(t'[u]\ts v)$ rewrites to $(t'[u \ts v]$, so $t[u]\ts v=(t'[u]\bef t'')\ts v$ rewrites to $(t'[u]\ts v)\ts t''=t[u\ts v]$. The other cases e.g. $t[u]=t' \pa t''[u]$ etc are absolutely similar. 
%Then $t[v\bef u]$ rewrites to $t[u\pa v]$, $t[u\bef v]$ or  $t[v\bef u]$ using $\ts\pa2,\ts\bef2r,\ts\bef2l$. 
\end{proof}

In order to ease the correspondence with SBV, let us slightly extend deep pomset into \textbf{unitary deep pomset}  which involves  $\un$: 
\begin{description} 
\item[Axioms]  $AX_n=\ts_{1\leq i \leq n} (a_i\pa {a_i}^\perp)\ts \ts_{1\leq j \leq n} (\epsilon_j\pa {\epsilon_j}^\perp)$ is a tautology, and $\epsilon_j\neq \epsilon_{j'}$ whenever $j\neq j'$. 
\item[Rules]  Whenever a dicograph of atoms $D$ which is  a tautology rewrites to a dicograph $D'$ (hence with the sames atoms) by any of the 10 rules $\ts\pa3$, $\ts\pa2$, $\ts\bef4$, $\ts\bef3l$, $\ts\bef3r$, $\ts\bef2$, 
$\bef\pa4$, $\bef\pa3l$, $\bef\pa3r$, $\bef\pa2$ of Figure \ref{subdicograph} --- i.e. all rules of Figure \ref{subdicograph} but $\ts\pa4$. \emph{Whenever those rules are applied, they must never tear apart an $\epsilon_j$ from the related  $\epsilon_j^\perp$.} 
\item[Add1 / Rm1 / Subst1] The rules Add1 (insert $\un$) and Rm1 (remove $\un$) and Subst1 (replace $\un$ with $(a_i\pa a_i^\perp$) are respectively defined  as $\un,a{\downarrow},a{\uparrow}$ in SBV. Because of Propositions \ref{adding1} ($\un$) and  \ref{removing1} ($a{\uparrow}$) and common sense ($a\downarrow$) those rules preserve the correctness criterion. 

\begin{verse} 
Add1:  $t[u] \mapsto t[u\ts\un]$ (and an axiom $(\epsilon_j B {\epsilon_j}^\perp)$ with a fresh $j$ corresponding to $\un$ is added) 

Rm1: $t[u\ts\un] \mapsto t[u]$ (and the  axiom $(\epsilon_j B {\epsilon_j}^\perp)$ corresponding to $\un$ is removed) 

Subst1: Consists in replacing dual atoms $\epsilon_j$ and $\epsilon_j^\perp$ by dual atoms $a_i$ and $a_i^\perp$ (no matter whether $a_i$ replaces $\epsilon_j$ and $a_i^\perp$ replaces $\epsilon_j^\perp$ or the converse. 
\end{verse} 
\end{description}

\subsubsection{Result: SBV is Deep Pomset}

\begin{theorem}\label{sbv&pomset} 
Any SBV derivation yielding a dicograph without 1 can be mapped rule by rule into a unitary deep pomset derivation and vice versa. 
\end{theorem}

\begin{proof} There is an obvious one-to-one correspondence bewteen the rules of SBV and the ones of unitary deep pomset -- this is the reason why we added the rules Add1 Rm1 and Subst1 which respectively correspond to $\un,a{\uparrow},a{\downarrow}$

A little difference is that Subst1 acts upon the whole previous (rewriting) derivation while $s{\downarrow}$ is local.  
\end{proof} 

\begin{theorem}\label{sbv&pomset1} 
Any SBV derivation yielding a term without $\un$ 
can be mapped rule by rule into a deep pomset derivation (without the rules for $\un$ of unitary deep pomset derivations) and vice versa. 
\end{theorem} 

\begin{proof} 
Clearly, Deep Pomset is a subcalculus of SBV, rule by rule. 

So let us focus on the other direction, from SBV to Deep Pomset. 

The result is a consequence of  both Proposition 
\ref{reachfromaxioms} which states that the $a{\uparrow}$ can be simulated in Deep Pomset logic (without $\un$ rules) and of 
Proposition \ref{internalise} which allows to moves the $\un$s inside the dicograph, at the place where they are actually used by some rewrite rule. 

We turn any SBV derivation without any $\un$ in its final dicograph  into an SBV derivation which starts by a tensor of $\un$s that are immediately replaced with  axioms $a_i\pa a_i^\perp$ and without any $\un$-rules inside.  Observe that such an SBV derivation actually is a deep pomset derivation: simply erase  the initial tensor of  $\un$s and resume the derivation when it has been replaced with a tensor of axioms $a_i\pa a_i^\perp$. 

Observe that in an SBV proof the $\un$s are mandatory to start with. 
Given that they vanish during the derivation, each $\un$ either is deleted according to the equalities that say it is a unit for $\ts,\pa,\bef$ or is turned into $a_i\pa a_i^\perp$. 

Because during the derivation nothing is duplicated nor contracted, so atoms can be tracked we can erase the first kind of $\un$ from the derivation from where it appeared to where is disappeared, and this is an SBV derivation as well. 

From now on we may assume without lost of generality that the SBV derivation contains several 
$\un$ that are introduced in the derivation and later on replaced with a $a_i\pa a_i^\perp$ axiom. We proceed by induction on the number of such $\un$. Consider the first $\un$
that appears in the SBV derivation $d_{sbv}$, say at the n-th step of $d_{sbv}$, and disappears at step $n+k$ of the SBV derivation.

Let us call $d_1$ the SBV derivation this $\un$ appears, $d_2$ the part of the SBV derivation till the replacement of this $\un$ by $a_i\pa a_i^\perp$, and $d_3$ the end of this SBV derivation.

The new derivation without this $\un$ that appears in the middle of the derivation before being replaced with  $\ts(a_i\pa a_i^\perp)$ is defined as follows: 

\begin{enumerate}
    \item Let us add a $\un$ to the axiom of the SBV derivation, and immediately after, let us replace this $\un$ with $a_i\pa a_i^\perp$ which is going to replace $\un$ at the end of $d^1$. 
    \item The first part of the derivation consists in $d'_1$, that is $d_1$ with $\ts (a_i\pa a_i^\perp)$ at the end of every dicograph in the derivation. 
    \item Thereafter the derivation consist in moving the axiom to the place $(a_i\pa a_i^\perp)$ by deep pomset rules as in Proposition \ref{internalise}. 
    \item Then the derivation is $d'_2$ that is $d_2$ with $\un$ replaced with $(a_i\pa a_i^\perp)$. 
    \item The last part of the derivation is unchanged, it is $d_3$. 
\end{enumerate}

Hence we can turn the SBV derivation into an SBV derivation that starts with an axioms that is $\ts_{n\in I} \un$ which is immediately replaced with axioms $\ts_{n\in I} (a_i\pa a_i^\perp)$, yielding the same dicograph and only using deep pomset. 
\end{proof}

\begin{interrogation} 
We may wonder whether all proof nets 
can be obtained from $AX_n=\ts_{i\in I} (a_i\pa a_i^\perp)$ using only the correct rewriting rules (which simply are the correct inclusion patterns) of Figure~\ref{subdicograph} (all of them but $\ts\pa4$). This would provide an inductive definition of the proofs of pomset logic, different from the inductive definition provided by a sequent calculus with a sequentialisation theorem.  
\end{interrogation}

\subsection{Cut elimination in Deep Pomset and in  SBV} 
\label{cuts}

What about the cut rule? For such logical systems based on rewriting systems like gMLL(+MIX), of the \dicogr\ view of pomset logic,   which does not work with  "logical rules" in the standard sense, there are no binary rules that would combine a $K$ and a $K^\perp$. 

Hence, as we said earlier,  the natural view of a  cut is simply a tensor $K\ts K^\perp$ which in pomset logic never is inserted inside a $\ts$ formula, while SBV allows a generalisation with cuts $K\ts K^\perp$ occurring "inside" dicographs. 
The rule $i{\uparrow}$ of SBV  generalises $a\uparrow$: $i{\uparrow}$ rewrites any subterm $K\ts K^\perp$ to $\un$--- or suppresses $K\ts K^\perp$ if one does not want units.  A way to express cut-elimination in rewriting system is to say that the dicographs that can be derived with $i{\uparrow}$   can be derived without $i{\uparrow}$.

Corollary  \ref{cutasrew} shows that decomposing a cut $(A\otimes B)\ts (A^\perp \pa B^\perp)$ or 
$(A\bef B)\ts (A^\perp \bef B^\perp)$
into two smaller cuts $A\ts A^\perp$ and $B\ts B^\perp$ can be done by correct rewritings --- no matter whether this cut appears inside a dicograph. In Proposition \ref{reachfromaxioms}, we proved that whenever there exists a derivation of $S[t*(a\pa a^\perp)]$ ($a$ atomic) from  $Ax_n$, there is one derivation of $S[t]$ from $Ax_{n-1}$. From those two propositions, we obtain for free an alternative proof of cut elimination for SBV or for Deep Pomset Logic, a result that first appeared in \cite{Strassburger03phd}: 

\begin{theorem}[Cut elimination for SBV, \cite{Strassburger03phd}] 
Let $t$ be a dicograph derived with $i{\uparrow}$ from an axiom $Ax_m$. Then $t$ can be derived from an axiom $Ax_n$ with $n\leq m$ without the rule $i{\uparrow}$. 
\end{theorem}

As can be foreseen in those lines, Deep Inference set the stage for new conceptions of cut and of cut elimination, 
studied 
by Guglielmi and Stra{\ss}burger; presenting their work on Deep Inference falls beyond the scope of  this chapter on pomset logic.

\section{Grammatical use} 
\label{grammar} 

Relations like dicographs have pleasant  algebraic properties but when it comes to combining trees as in grammatical derivations, it is better to view the trees in order to have some intuition. So we first present proof nets with links before defining a grammatical formalism. 

\subsection{Proof nets with links} 
\label{PNling} 

In order to define a grammar of pomset proof nets, if is easier to use proof nets with links (the links have been presented in Figure~\ref{rnblinks} which look like standard proof nets: the formula trees of the conclusions $T_1,\ldots, T_n$ with binary connectives ($\pa,\ts,\bef$) and axioms linking dual atoms, together with an \sp\ partial order on the conclusions $T_1,\ldots, T_n$. 

It is quite easy to turn a \dicogpn~ proof net into a pomset proof net  using folding of Subsection \ref{foldunfoldsequent} --- and vice-versa using unfolding.  
A  dicograph proof net $\pi=(B,R)$ with R being $S[T_1,\ldots,T_n]$ with $S$ containing no $\tssp$ symbol --- $S$ is an \sp\ order --- corresponds to a pomset proof net $\pi^\sp$ with conclusions $T^f_1,\ldots,T^f_n$ where $T^f_i$ is the formula corresponding to $T_i$ obtained by replacing an operation on dicograph 
$\widehat *$ with the corresponding multiplicative connective $*\in\{\ts,\bef,\pa\}$. 
There usually are many ways to write a dicograph $R$ as a term $S[T_1,\ldots,T_n]$  depending on the associativity of $\ts,\pa,\bef$, commutativity  of $\ts,\pa$ and the $n$ may vary when the outer most $\pasp$ and $\befsp$ are turned into $\pa$ and $\bef$ connectives or not (as it is the case for $\pa$ in usual proof nets for MLL). In case the outer most connective of $R$ is $\tssp$,  $\pi$ necessarily has a single conclusion, $R=T_1$, and $S$ is the trivial \sp\ order on one formula.

The transformation from $\pi$ to $\pi^\sp$ can be done ``little by little" by allowing ``intermediate" proof structures whose conclusion is a dicograph of \emph{formulas}. Such a proof structure is said to be correct whenever every \ae\ circuit contains a chord, the formula trees being bicoloured as in Figure \ref{rnblinks} --- in figure \ref{foldingStepByStep} $\pi_1$ is the \dicogpn\ proof net, while $\pi_4$ is a pomset proof net with links having a single conclusion.

\tikzset{vertex/.append style={inner sep=2pt}}

\begin{figure} 
\centering
\subfigure[{$\pi_1$}]{%
\begin{forest} proof structure, for tree ={grow''=105,l=0pt,l sep=4pt}, commutative,
  erase level=1
  [,
  [,%precedes
  [,name=a par acomp times c par ccomp,for tree={grow''=195},erase level=3
    [,%times
      [,for tree={grow''=-90}, name=a par acomp,erase level=5
      [,s sep=15mm [,name=a,label=-90:$\alpha$,fill=black]  [,name=acomp,label=-90:$\llneg{\alpha}$,fill=black] ]
      ]
      [,for tree={grow''=120}, name=c par ccomp,erase level=5
      [,s sep=15mm [,name=c,label=$\gamma$,fill=black] [,name=ccomp,label=$\llneg{\gamma}$,fill=black] ]
      ]
    ]
    ]
    [,for tree={grow''=60}, name=b par bcomp,erase level=3
  [,s sep=15mm [,name=bcomp,label=$\llneg{\beta}$,fill=black] [,name=b,label=$\beta$,fill=black]]]
    ]
    ]
    \draw[Blink] (a) to (acomp) ;
    \draw[Blink] (b) to (bcomp) ;
    \draw[Blink] (c) to (ccomp) ;
    \llop{\cutlink}{a,acomp}{c,ccomp}
    \llop{\rlink}{a,acomp,c,ccomp}{b,bcomp}
\end{forest}}
\\ 
%\hspace{2cm}
\subfigure[{$\pi_2$}]{%
\begin{forest} proof structure, for tree ={grow''=105,l=0pt,l sep=4pt}, commutative,
  erase level=1
  [,
  [,%precedes
  [,name=a par acomp times c par ccomp,for tree={grow''=195},erase level=3
    [,%times
      [,for tree={grow''=-90}, name=a par acomp,erase level=5
      [,s sep=15mm [,name=a,label=-90:$\alpha$,fill=black]  [,name=acomp,label=-90:$\llneg{\alpha}$,fill=black] ]
      ]
      [,for tree={grow''=120}, name=c par ccomp,erase level=5
      [,s sep=15mm [,name=c,label=$\gamma$,fill=black] [,name=ccomp,label=$\llneg{\gamma}$,fill=black] ]
      ]
    ]
    ]
    [,for tree={grow''=60}, name=b par bcomp,label={[font=\scriptsize]0:$\beta \llpar \llneg{\beta}$},fill=black
  [,par,s sep=15mm [,name=b,label=$\beta$] [,name=bcomp,label=$\llneg{\beta}$]]]
    ]
    ]
    \draw[Blink] (a) to (acomp) ;
    \draw[Blink] (b) to (bcomp) ;
    \draw[Blink] (c) to (ccomp) ;
    \llop{\cutlink}{a,acomp}{c,ccomp}
    \llop{\rlink}{a,acomp,c,ccomp}{b par bcomp}
\end{forest}
}\\
\subfigure[{$\pi_3$}]{%
\begin{forest} proof structure, for tree ={grow''=105,l=0pt,l sep=4pt}, commutative,  erase level=1
  [,
  [,%precedes
  [,name=a par acomp times c par ccomp,for tree={grow''=195},erase level=3
    [,%times
      [,for tree={grow''=-90}, name=a par acomp,label={[font=\scriptsize]180:$\alpha \llpar \llneg{\alpha}$},fill=black
      [,par,s sep=15mm [,name=a,label=-90:$\alpha$]  [,name=acomp,label=-90:$\llneg{\alpha}$] ]
      ]
      [,for tree={grow''=120}, name=c par ccomp,label={[font=\scriptsize]180:$\gamma \llpar \llneg{\gamma}$},fill=black
      [,par,s sep=15mm [,name=c,label=$\gamma$] [,name=ccomp,label=$\llneg{\gamma}$] ]
      ]
    ]
    ]
    [,for tree={grow''=60}, name=b par bcomp,label={[font=\scriptsize]0:$\beta \llpar \llneg{\beta}$},fill=black
  [,par,s sep=15mm [,name=b,label=$\beta$] [,name=bcomp,label=$\llneg{\beta}$]]]
    ]
    ]
    \draw[Blink] (a) to (acomp) ;
    \draw[Blink] (b) to (bcomp) ;
    \draw[Blink] (c) to (ccomp) ;
    \llop{\cutlink}{a par acomp}{c par ccomp}
    \llop{\rlink}{a par acomp,c par ccomp}{b par bcomp}
\end{forest}
}
\\ 
%\hspace{2cm}
\subfigure[{$\pi_4$}]{%
\begin{forest} proof structure, for tree ={grow''=105,l=0pt,l sep=4pt}, commutative
  [,label={[font=\scriptsize]-100:$((\alpha \llpar \llneg{\alpha})\llts (\gamma \llpar \llneg{\gamma})) \llprecedes (\beta \llpar \llneg{\beta})$},fill=black
  [,precedes
    [,for tree={grow''=195},name=a par acomp times c par comp
    [,times
      [,for tree={grow''=-90}, name=a par acomp
      [,par,s sep=15mm [,name=a,label=-90:$\alpha$]  [,name=acomp,label=-90:$\llneg{\alpha}$] ]
      ]
      [,for tree={grow''=120}, name=c par ccomp
      [,par,s sep=15mm [,name=c,label=$\gamma$] [,name=ccomp,label=$\llneg{\gamma}$] ]
      ]
    ]
    ]
    [,for tree={grow''=60}, name=b par bcomp
    [,par,s sep=15mm [,name=b,label=$\beta$] [,name=bcomp,label=$\llneg{\beta}$]]]
    ]
    ]
    \draw[Blink] (a) to (acomp) ;
    \draw[Blink] (b) to (bcomp) ;
    \draw[Blink] (c) to (ccomp) ;
\end{forest}%
}
 \caption{Folding a dicograph proof net  into an \sp\ proof net  step by step ($\pi_1,\pi_2,\pi_3,\pi_4$) --- the conclusions are the black vertices.} 
\label{foldingStepByStep} 
\end{figure} 
%\caption{Progressively turning a dicograph proof net $\pi_1$ into a pomset proof net $\pi_4$ with one conclusion via some intermediate proof nets  $\pi_2$ and $\pi_3$. The conclusions of the pomset proof net and of the intermediate proof nets are emphasised by filled black dots.}
%\label{foldingStepByStep} 

Let $\pi=(B,D[F_1,\ldots,F_p])$ with $D$  a dicograph on the formulas $F_1,\ldots,F_n$ be an intermediate proof structure . 
A folding of $\pi$ is a simply a folding of $D[F_1,\ldots,F_p]$ as defined in Subsection \ref{foldunfoldsequent} (two equivalent formulas $F_i \widehat{*} F_j$ are replaced in $D$ by one formula $F_i * F_j$). An unfolding of  $\pi$ is simply an unfolding of  $D[F_1,\ldots,F_p]$ as defined in Subsection \ref{foldunfoldsequent} (a formula  $F_i * F_j$  is replaced by two equivalent formulas $F_i \widehat{*} F_j$).

\begin{proposition}\label{foldunfoldcorrect} 
Let $\pi_f$ and $\pi_u$ be two intermediate proof structures, with $\pi_u$ being an unfolding of $\pi_f$ --- or $\pi_f$ being a folding of $\pi_u$. 
The two following properties are equivalent: 
\begin{itemize} 
\item $\pi_u$ is correct. 
\item $\pi_f$ is correct. 
\end{itemize} 
\end{proposition}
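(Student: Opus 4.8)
The plan is to reduce the statement to a single elementary folding step and then argue by a local inspection of alternate elementary circuits around the folded connective node. Indeed, every folding (resp.\ unfolding) is a finite composition of elementary ones, each combining two $\widehat{*}$-equivalent formulas $F_i\widehat{*}F_j$ of the conclusion dicograph into a single formula $F_i*F_j$ (resp.\ splitting it back), with $*\in\{\pa,\ts,\bef\}$; and ``$\pi_u$ is an unfolding of $\pi_f$'' is by definition the reverse of ``$\pi_f$ is a folding of $\pi_u$''. Since the statement to be proved is an equivalence, it therefore suffices to treat the case where $\pi_f$ is obtained from $\pi_u$ by one elementary folding, and one may use whichever of the two directions is convenient. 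Note at the outset that the atoms of $\pi_u$ and $\pi_f$ are the same, the axiom matching $B$ is untouched, and the conclusion vertices of all the other top-level formulas are untouched; only the two formula trees $F_i$, $F_j$ and their immediate environment in the ambient dicograph change.

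Now fix such a folding, let $r_i$, $r_j$ be the roots of $F_i$, $F_j$ in $\pi_u$, and let $n$ be the new root node of $F_i*F_j$ in $\pi_f$. Because $F_i$ and $F_j$ are $\widehat{*}$-equivalent in $\pi_u$, the two vertices $r_i$, $r_j$ have, towards the rest of the structure, exactly the same $R$-adjacencies, and between themselves precisely the $R$-connection prescribed by $\widehat{*}$ (an edge for $\ts$, an edge for $\pa$, a single arc $r_i\fle r_j$ for $\bef$, read off from the bicolouring of the corresponding link in Figure~\ref{RnBlinks}). Passing to $\pi_f$ deletes those connections and instead plugs $r_i$, $r_j$ in as the two premises of the fixed three-vertex \textbf{RnB} gadget of $*$ rooted at $n$, the external $R$-adjacencies formerly carried by both $r_i$ and $r_j$ being now carried by $n$ alone (in $\pi_f$, $r_i$ and $r_j$ are adjacent only inside the gadget). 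This is the only change, and it is the $\widehat{*}$-equivalence that makes ``route everything through $n$'' well defined.

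One then sets up the evident correspondence between alternate elementary circuits of $\pi_u$ and of $\pi_f$: a circuit avoiding the gadget is literally a circuit on both sides, while a circuit meeting the gadget must cross it along one of its finitely many admissible alternating sub-paths, which is replaced on the $\pi_u$ side by the corresponding direct step between $r_i$ and $r_j$ (sometimes a single edge or arc, sometimes nothing). The heart of the proof is to check, connective by connective, that under this correspondence a circuit is chordless in $\pi_f$ if and only if its partner is chordless in $\pi_u$: a potential chord is either an edge/arc between vertices untouched by the folding, in which case it survives verbatim; or an external adjacency of $r_i$/$r_j$, which becomes an adjacency of $n$ and conversely; or an edge/arc internal to the gadget --- and the gadgets of Figure~\ref{RnBlinks} are chosen exactly so that the gadget edge one would need as a chord is present precisely when the ``parallel alternative'' available in $\pi_u$ supplies a chord or collapses the circuit. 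The case $*=\ts$ is immediate (the two premises of a $\ts$-node behave like the fully interconnected operation $\tssp$); the cases $*=\pa$ (only a switchable connection) and $*=\bef$ (self-dual, one arc) are where one must be careful about arc direction and about which sub-paths are actually alternating.

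The main obstacle I expect is exactly that last, purely combinatorial verification: enumerating, for each of the three connectives, the ways an alternate elementary circuit can thread the gadget and checking \emph{in both directions} that no chordless circuit is created and none is destroyed; the rest --- the reduction to a single step and the circuit-rerouting correspondence --- is routine. I would deliberately avoid shortcutting via Theorem~\ref{semcorrect} together with the invariance of the experiments' results under folding, since Theorem~\ref{semcorrect} is itself established using the present proposition, so that route would be circular; the direct graph surgery is the honest argument here, and it is the one recorded in \cite{Ret98roma,Ret99romarr}.
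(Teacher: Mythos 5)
Your proposal follows essentially the same route as the paper's proof, which likewise reduces the statement to a case-by-case examination of the alternate elementary circuits that may appear and of the chords that may vanish under one folding/unfolding step, deferring the exhaustive connective-by-connective verification to \cite{Ret98roma,Ret99romarr}. One small slip: two $\simpa$-equivalent roots carry \emph{no} $R$-connection between them in $\pi_u$ (an edge is prescribed only by $\tssp$, an arc by $\befsp$), as your later phrase ``sometimes nothing'' correctly acknowledges; otherwise the reduction to a single elementary step, the circuit-rerouting correspondence, and your care to avoid circularity with Theorem~\ref{semcorrect} are all sound.
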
 

\begin{proof} 
This proof consists in a thorough examination of new \ael\ circuits that may appear during the transformation and of the edges that are chords and that may vanish during the transformation. \cite{Ret99romarr}. 
\end{proof} 

We now can give again the definition of pomset proof nets with link that appear in \cite{Ret93,Ret97tlca}: 

\begin{definition} 
A pomset proof structure with links is defined as a combination of links: every conclusion of a link is the premisse of at most one link, and every premisse of a link is the conclusion of exactly one link. FOrmulas that are not the premisse of any link are called conclusions of the proof net. Conclusions are conected with an $R$ \sp-order. 
\end{definition} 

Because of the shape of the RnB links the Criterion \ref{criterionHandsome}, \emph{every \ae-circuit contains a chord} becomes simpler: 
\begin{criterion} 
A pomset proof structure with links is correct whenever there is no \ael\ circuit.  
\end{criterion} 

A possible variant for pomset proof structures with links, consist in replacing axioms, that are a single B edge (cf. the links in Figure~\ref{rnblinks}) with a sequence of 
a B edge, and R symmetric edge and a B edge: 
\begin{center}
  \tikzset{vertex/.append style={inner sep=1.5pt}}
  \begin{tikzpicture}\pgfkeys{vertex/.append style={inner sep=1pt}}
  \node[vertex,label=180:$a$] (a) {} ;
  \node[vertex,label=0:$\llneg{a}$,right=3] (a comp) {} ;
  \BRBaxiom[3.5mm]{a}{a comp}{additional R edge}
\end{tikzpicture}
\tikzset{vertex/.append style={inner sep=2pt}}
\end{center}

This R edge which is not incident with any other R edge 
does not changes anything to the \ae\  
paths and circuits, nor to the correctness of the proof structure (the non trivial sense is proved in Proposition \ref{removing1}), but that way any B edge corresponds to a formula (while in handsome proof net, every vertex corresponds to an atom).

\subsection{Gammars with partial proof nets} 
\label{PNgrammar} 

In the nineties, Lecomte was aiming at extensions of the  Lambek grammars that would  handle relatively free word orders, discontinuous constituents 
and other tricky linguistic phenomena, but still within a logical framework --- as opposed to CCG which extends AB grammars with \emph{ad hoc} rewriting rules whose logical content is unclear. Grammars defined within a  logical framework have at least two advantages: rules remain general and the connection with semantics, logical formulas and lambda terms is a priori more transparent. Following a suggestion by Jean-Yves Girard,  Alain Lecomte contacted me 
just after I passed my PhD on pomset logic, so we proposed a kind of grammar with pomset logic. 
We  explored such a possiblity in \cite{LR95,LR96tar,LeRe96,LeRe97} and it was later improved by Sylvain  Pogodalla in \cite{Pog98}  (see also \cite{RetoreHDR}).  

We followed two guidelines: 
\begin{itemize}
    \item Words are associated not with formulas but with partial proof nets with a tree-like structure, in particular  they have a single output;
    \item word order is a partial order,  an \sp\ order described by the occurrences of the  $\bef$ connective in the proof net.
\end{itemize}  

An  analysis or \emph{parse structure} is a combination of the partial proof nets into a complete proof net with output $S$. The two ways to combine partial proof nets are by ``plugging" an hypothesis to the conclusion of another partial proof net, and to perform cuts between partial proof nets. 

Given that words label axioms, instead of having a single $B$ edge from $a \edge a^\perp$ 
we write a sequence of three edges, a $B$ edge, an $R$ edge, a $B$ edge, the middle one being labelled with the word
\begin{center}
  \tikzset{vertex/.append style={inner sep=1.5pt}}
  \begin{tikzpicture}\pgfkeys{vertex/.append style={inner sep=1pt}}
  \node[vertex,label=180:$a$] (a) {} ;
  \node[vertex,label=0:$\llneg{a}$,right=2] (a comp) {} ;
  \BRBaxiom[2.5mm]{a}{a comp}{word}
\end{tikzpicture}
\tikzset{vertex/.append style={inner sep=2pt}}
\end{center}
% $$a {\genfrac {}{}{2pt}{1}\qquad B}\! {\circ }\!{\genfrac {}{}{1pt}{1} {\ word\ } R}\! {\circ}\!{\genfrac {}{}{2pt}{1} \qquad B} a^\perp$$
this little variant changes nothing regarding the correctness of the proof net in terms of \ae\ paths.  

Rather than lengthy explanations, let us give two examples of a grammatical derivation in this framework. One may notice in the examples that the partial pomset proof nets that we use in the lexicon are of a restricted form: 

\begin{itemize} 
\item there are just two conclusions:
\begin{itemize} 
\item the output $b$  which is the syntactic category of the resulting phrase once  the required "arguments" have been provided; 
\item a conclusion $a^\perp \pa (X_1 \ts Y_1)\pa\cdots\pa(X_n \ts Y_n)$ without any $\ts$ connective in the $X_i$; 
\end{itemize} 
\item an axiom  connects $a^\perp$  in the conclusion with an $a$ in one of the $X_i$ --- with the corresponding word is the label of $a$; 
\end{itemize} 

In a first version we defined from the proof net an order between atoms (hence words) by ``there exists a directed path" from $a$ to $b$. However it is more convenient, in particular from a computational point of view, to label the proof net with \sp\ orders of words. Doing so is a computational improvement but those labels are fully determined by the proof net, they contain no  additional information. Here are the labelling rules: 
\begin{itemize} 
\item Initialisation: 
\begin{itemize} 
\item $a^\perp$ is labelled with the one point \sp\ order  consisting of the corresponding word; 
\item $X_i\ts Y_i$ is labelled with an empty \sp\ order.  
\end{itemize} 
\item Propagation: 
\begin{itemize} 
\item The two conclusions of a given axiom have the same label; 
\item One of the two premisses of a tensor link is labelled with the \sp\ order $R\pasp S$ the other by $R$ and the conclusion by $S$; 
\item The conclusion of a $\pa$ link is labelled $R\pasp S$ when the two premisses are labelled $R$ and $S$; 
\item The conclusion of a $\bef$ link is labelled $R\befsp S$ when the two premisses are labelled $R$ and $S$; \end{itemize} 
\end{itemize} 

The propagation rules always succeed because of the correctness criterion and of the tree like structure of the  partial proof nets. The propagation rules  yield a complete labelling of the proof net and the  \sp\ order that labels the output $S$ is the partial order over words. 

\begin{table}
    \centering
  \begin{tabular}{l@{\hspace{1cm}}l}
Pierre &
\begin{tikzpicture}[baseline=(current bounding box.center)]
  \node[vertex,label=180:$\llneg{\np}$] (npcomp pierre) {} ;
  \node[vertex,label=0:$\np$,right=2] (np pierre) {} ;
  \BRBaxiom[5mm]{npcomp pierre}{np pierre}{Pierre}
\end{tikzpicture}\\[0.8cm]
Marie &
\begin{tikzpicture}[baseline=(current bounding box.center)]
  \node[vertex,label=180:$\llneg{\np}$] (npcomp marie) {} ;
  \node[vertex,label=0:$\np$,right=2] (np marie) {} ;
  \BRBaxiom[5mm]{npcomp marie}{np marie}{Marie}
\end{tikzpicture}\\[0.8cm]
chanter&
\begin{tikzpicture}[baseline=(current bounding box.center)]
  \node[vertex,label=180:$\vinf$] (vinf chanter) {} ;
  \node[vertex,label=0:$\llneg{\vinf}$,right=2cm] (vinfcomp chanter) {} ;
  \BRBaxiom[5mm]{vinfcomp chanter}{vinf chanter}{chanter}
\end{tikzpicture}\\[1.5cm]
entend &
%\includegraphics{Pictures/entend.pdf}
%\hspace{-10.5cm} %%%%%%%%%%% PBHSPACE 
         \begin{forest} proof structure, with phantom node, commutative
  [,phantom
  [,label=0:$\llneg{v} \llpar ((\np \llprecedes v  \llprecedes (\np \llts v)) \llts \llneg{S})$
  [,par
  [,label=180:$\llneg{v}$,name=v comp]
  [,fit=rectangle
  [,times
  [,
  [,precedes
  [,label=180:$\np$,name=npsubj,baseline [] ]
  [,
  [,precedes
  [,label=180:$v$,name=v ]
  [,
  [,times
  [,label=180:$\np$,name=npobj [] ]
  [,label=135:$\vinf$,name=vinf [] ]
  ]
  ]
  ]
  ]
  ]
  ]
  [,label=-45:$\llneg{S}$,name=s comp]
  ]
  ]
  ]
  ]
  [,label=0:$S$,name=s,l*=3]
  ]
  \BRBaxiom[9.65cm]{v comp}{v}{entend}
  \BRBaxiom[5.75cm]{s comp}{s}{}
\end{forest}\\
\end{tabular}
\caption{A lexicon with partial pomset proof nets} 
\label{lexiquePEMC} 
\end{table}

\begin{figure}
    \centering
%Pierre 
%\includegraphics{Pictures/PierreEntendMarieChanter.pdf}
\begin{forest} proof structure, commutative, with phantom node
  [,phantom,fit=band,s sep=1.5cm
  [,label=0:$\llneg{\np}$,name=npobj comp,fit=band,l*=11]
  [,label=0:$\llneg{\np}$,name=npsubj comp,fit=band,l*=7]
  [,fit=band,tikz={\node [draw,black,inner sep=6mm,xshift=.8cm,dotted,ellipse,fit to =tree,label=above:entendre]  {};},label=-90:$\llneg{v} \llpar ((\np \llprecedes v  \llprecedes (\np \llts v)) \llts \llneg{S})$
  [,par
  [,label=180:$\llneg{v}$,name=v comp]
  [,fit=rectangle
  [,times
  [,
  [,precedes
  [,label=180:$\np$,name=npsubj]
  [,
  [,precedes
  [,label=180:$v$,name=v]
  [,
  [,times
  [,label=180:$\np$,name=npobj]
  [,label=135:$\vinf$,name=vinf]
  ]
  ]
  ]
  ]
  ]
  ]
  [,label=-45:$\llneg{S}$,name=s comp]
  ]
  ]
  ]
  ]
  [,fit=band,label=0:$S$,name=s,l*=3]
  [,label=0:$\llneg{\vinf}$,name=vinf comp,l*=11]
  ]
  \BRBaxiom[9.65]{v comp}{v}{entend}
  \BRBaxiom[5.75]{s comp}{s}{}
  \BRBaxiom[8.5][pos=.16]{npsubj comp}{npsubj}{Pierre}
  \BRBaxiom[11.5]{npobj comp}{npobj}{Marie}
  \BRBaxiom[11.5][pos=.35]{vinf comp}{vinf}{chanter}
\end{forest}
\caption{Analysis of a relatively free word order sentence --- order $Pierre\bef entend \bef (Marie \pa chanter)$} 
\label{PierreEntendMArieChanter} 
\end{figure} 

We give an example of a lexicon of an analysis of a relatively free word order phenomenon in French --- the lexicon is in Table~\ref{lexiquePEMC} and the analysis in Figure  \ref{PierreEntendMArieChanter}. One can say both "Pierre entend Marie chanter" (Pierre hears Mary singing) and "Pierre entend chanter Marie" (Pierre hears singing Mary).  Indeed when there is no object French accepts that the subject is after the verb, e.g. in relatives introduced by the relative pronoun "que/whom": ``Pierre que regarde Marie chante." (Pierre that Mary watches sings" and ``Pierre que Marie regarde chante." (Pierre that  Marie watches sings). Observe that there is a single analysis for the different possible word orders and not a different analysis for each word order.  

Using cuts, one is able, in addition to free word order phenomena,  to provide an account of discontinuous constituents, e.g. French negation ``ne \ldots pas". During cut elimination, the label splits into two parts so ``ne" and ``pas" go to their proper places, as shown in Figure \ref{NePas}. When cut is used, one may allow that incorrect lexical partial proof nets are associated with words, but after cut elimination the result, i.e. the linguistic analysis, ought to be a fully correct proof net; 

\begin{figure}
    \centering
\subfigure[
\begin{minipage}{\textwidth}The proof net made from the partial proof nets NE\ldots  PAS (discontinuous constituent) and from the partial proof net REGARDE, before cut-elimination.\end{minipage}]{%
  \begin{forest} proof structure, commutative,with phantom node,for tree ={l=15pt}
  [,phantom,s sep=1cm
  [,
  [,cut,s sep=1.3cm
  [,tikz={\node [draw,black,inner sep=1cm,xshift=-1cm,ellipse,dotted,fit to=tree,label={105:ne\ldots pas}]  {};},label={[name=lab]180:$\NE:\llneg{\NEG} \llpar \pas:\NEG$}
  [,par
  [,label=180:$\NE:\llneg{\NEG}$,name=ne neg comp] [,label=-85:$\pas:\NEG$,name=pas neg]
  ]]
  [,
  [,times,label=0:$W:\NEG\llts Z:\llneg{\NEG}$
  [,label=175:$W:\NEG$,name=w neg] [,label={-85:$Z:\llneg{\NEG}$},name=z neg comp]
  ]]
  ]
  ]
  [,tikz={\node [draw,black,inner sep=.7cm,xshift=-5mm,ellipse,dotted,fit to=tree,label={regarde}]  {};},label=-90:{$(X:\llneg{\NEG} \llpar X\_{\mathit{regarde}}:\llneg{v}) \llpar ((\np \llprecedes v \llprecedes Y:\NEG \llprecedes \np[\obj])\llts \llneg{S})$},fit=band
  [,par,s sep=2.7cm
  [[,par,s sep=1.5cm
  [,label=45:$X:\llneg{\NEG}$,name=x neg comp] [,label=0:$X\_{\mathit{regarde}}:\llneg{v}$,name=x v comp]
  ]]
  [[,times,s sep=1cm
  [[,precedes [,label=180:$\np$,name=np subj []][
  [,precedes [,label=180:$X\_{\mathit{regarde}}:v$,name=x v] [
  [,precedes [,label=165:$Y:\NEG$,name=y neg] [,label=0:{$\np [\obj]$},name=np obj []]
  ]
  ]
  ]
  ]]
  ]
  [,label=0:$\llneg{S}$,name=s comp]
  ]]
  ]]
  [,label=0:$S$,name=s,l*=5.6]
  ]
  \BRBaxiom[3.7]{ne neg comp}{pas neg}{$\langle \mathit{ne}, \mathit{pas} \rangle$}
  \BRBaxiom[3.7]{w neg}{x neg comp}{}
  \BRBaxiom[7.25]{z neg comp}{y neg}{}
  \BRBaxiom[6.3]{x v comp}{x v}{regarde}
  \BRBaxiom[3.5]{s comp}{s}{}
\end{forest}
}\\[-2ex] %% Pour ne pas que la légende couvre le numéro de page
\subfigure[
\begin{minipage}{\textwidth}The proof net analysing NE REGARDE PAS, after reduction, the three words are in the proper order.\end{minipage}]{%
\begin{forest} proof structure, commutative,with phantom node,for tree ={l=15pt}
  [,phantom,s sep=1.5cm
  [,label=-90:{$(\NE:\llneg{\NEG} \llpar \NE\_{\mathit{regarde}}:\llneg{v}) \llpar ((\np \llprecedes v \llprecedes \pas:\NEG \llprecedes \np[\obj])\llts \llneg{S})$},fit=band
  [,par,s sep=2.5cm
  [[,par
  [,label=180:$\NE:\llneg{\NEG}$,name=x neg comp] [,label=0:$\NE\_{\mathit{regarde}}:\llneg{v}$,name=x v comp]
  ]]
  [[,times
  [[,precedes [,label=180:$\np$,name=np subj []][
  [,precedes [,label=180:$\NE\_{\mathit{regarde}}:v$,name=x v] [
  [,precedes [,label=180:$Y:\NEG$,name=y neg] [,label=0:{$\np [\obj]$},name=np obj []]
  ]
  ]
  ]
  ]]
  ]
  [,label=0:$\llneg{S}$,name=s comp]
  ]]
  ]]
  [,label=0:$S$,name=s,l*=5.6]
  ]
  \BRBaxiom[7.2]{x neg comp}{y neg}{$\langle\NE,\pas\rangle$}
  \BRBaxiom[6.2]{x v comp}{x v}{regarde}
  \BRBaxiom[3.5]{s comp}{s}{}
\end{forest}
}
\caption{Handling discontinuous constituents in pomset proof nets}
\label{NePas} 
\end{figure}

It is difficult to say something on the generative capacity of  this grammatical formalism, because it produces (or recognises) \sp\ order of words and not chains of words --- and there are not so many such grammatical formalisms, en exception being \cite{LodayaWeil2001tcs}. 

\begin{theorem}[Pogodalla] 
Pomset grammars with a restricted form for partial pomset proof nets  yielding trees and total word orders is equivalent to Lexicalised Tree Adjoining Grammars \cite{Pog98}. 
\end{theorem}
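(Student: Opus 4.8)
The plan is to prove the equivalence by exhibiting two effective translations, one in each direction, and to check that they preserve the generated structures (hence in particular the string languages). The conceptual key is a dictionary between the two combination modes of the grammatical formalism of subsection~\ref{PNling} and the two operations of TAG: \emph{plugging} a conclusion of one partial proof net into a hypothesis of another corresponds to \emph{substitution} of an initial tree at a frontier node, and performing a \emph{cut} between two partial proof nets corresponds to \emph{adjunction} of an auxiliary tree at an internal node --- the splitting of the \sp-order label under cut-elimination (as in the ``ne\,\ldots\,pas'' derivation of figure~\ref{NePas}) is exactly the wrapping behaviour of an auxiliary tree around the material sitting at its foot. Since we have restricted the partial pomset proof nets so that every parse structure yields a \emph{tree} with a \emph{total} order on the words, the derived structure of the pomset grammar matches a TAG derived tree and we may aim at an equivalence of derived-tree languages, a fortiori of string languages (so that the string class obtained is the class of tree-adjoining languages).

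First I would carry out the direction from lexicalised TAG to pomset grammars. Given an LTAG $G$, to each elementary tree $\gamma$ with lexical anchor $w$ I associate a partial pomset proof net $P_\gamma$ and set $w\mapsto P_\gamma$ in the lexicon. An \emph{initial} tree with root labelled $A$ and frontier non-terminals $B_1,\ldots,B_k$ (read left to right) becomes a partial proof net of the restricted shape with output $A$, other conclusion $\n{a}\pa(X_1\ts Y_1)\pa\cdots\pa(X_k\ts Y_k)$, an axiom linking the output atom to an atom inside the appropriate $X_i$, and a $\bef$-skeleton recording $B_1\bef\cdots\bef B_k$ so that the labelling rules reproduce exactly the left-to-right yield of~$\gamma$; the substitution sites are the pairs $X_i\ts Y_i$. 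An \emph{auxiliary} tree with root and foot both labelled $A$ is encoded so that its distinguished root-to-foot spine carries an axiom on the $A$/$\n{A}$ pair that will be cut; adjunction of this tree at a node is then realised by plugging followed by a cut, and at cut-elimination the spine label splits into ``above the foot'' and ``below and around the foot'' in the correct linear positions. A routine induction on the derivation tree of $G$ then shows that every derivation producing a derived tree $t$ maps to a parse structure whose output conclusion is labelled by the total order $\mathrm{yield}(t)$, and conversely.

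Second, the converse direction from pomset grammars to LTAG. Here the hypotheses on the lexicon are used directly: a partial proof net $P$ in the lexicon has conclusions $b$ and $\n{a}\pa(X_1\ts Y_1)\pa\cdots\pa(X_n\ts Y_n)$ with no $\ts$ inside the $X_i$, and a distinguished axiom from $\n{a}$ to an $a$ occurring inside exactly one of the $X_i$. I read this as an elementary tree: the $\pa$/$\ts$-skeleton gives the internal tree structure, each $X_i\ts Y_i$ is a substitution-or-adjunction site, and the distinguished axiom traces the root-to-foot spine --- so $P$ is an \emph{auxiliary} tree precisely when that spine is non-trivial and an \emph{initial} tree otherwise; the $\bef$-connectives together with the propagation rules of the labelling fix the ordering of the daughters. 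Because the grammar is restricted to trees and total orders, no non-sequentialisable configuration of the kind in figure~\ref{counterex} can occur inside a parse structure, so a parse structure is genuinely built by iterated plugging and cuts, and I map it inductively to a TAG derivation, checking that plugging becomes substitution, a cut becomes adjunction, and the total word order on the output conclusion equals the yield of the derived tree.

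The main obstacle, and the step deserving the most care, is making the correspondence between cut and adjunction faithful in both directions. On the soundness side one needs a ``wrapping lemma'': for every placement of the foot node of the encoded auxiliary tree, the splitting of the \sp-order label produced by cut-elimination inserts the wrapped subtree at the foot and distributes the remaining material of the auxiliary tree to the left of and around the foot exactly as TAG adjunction prescribes, uniformly for the restricted lexical shape. On the completeness side one must check that \emph{every} admissible cut between two restricted partial proof nets corresponds to a legal adjunction --- in particular that root and foot labels always match, so that no spurious parse structures arise; this is exactly where the restriction that the distinguished atom $\n{a}$ has a single dual $a$ lying inside one $X_i$ (rather than split across several $X_i$ or entangled with a $\ts$) does the work. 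Once these wrapping lemmas are established, the two inductions and the yield computations are bookkeeping, and one obtains not merely weak equivalence but a bijection between TAG derivation trees and pomset-grammar parse structures compatible with the derived trees.
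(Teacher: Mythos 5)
The paper does not prove this statement: it is imported wholesale from Pogodalla \cite{Pog98}, so there is no internal argument to compare yours against. That said, your translation dictionary --- plugging of a conclusion into a hypothesis as substitution, cut as adjunction, with the splitting of the \sp-order label under cut-elimination realising the wrapping of an auxiliary tree around its foot --- is exactly the correspondence that this line of work (and the paper's own informal discussion of discontinuous constituents in figure \ref{NePas}, and its remark on encodings of LTAG in non-commutative linear logic) relies on. So the architecture of your proof is the right one.

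The difficulty is that what you have written is a plan rather than a proof, and the two places you yourself flag as ``deserving the most care'' are precisely where all of the mathematical content lives. The soundness-side wrapping lemma is plausible, but the completeness direction is genuinely delicate: you must show that \emph{every} correct parse structure of the restricted grammar decomposes as an iterated sequence of pluggings and cuts (i.e.\ that the restriction to tree-shaped lexical entries with total word order really does exclude non-sequentialisable configurations such as the one in figure \ref{counterex}), and that no admissible cut produces a parse structure without a TAG counterpart. Neither of these is established by the shape constraint on lexical entries alone --- it requires an argument about how the correctness criterion interacts with the restricted lexicon, which you assert but do not give. Finally, your closing claim of a \emph{bijection} between TAG derivation trees and parse structures is strictly stronger than the stated weak (derived-tree / string-language) equivalence and would need its own justification; as it stands it is an unsupported strengthening. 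In short: right approach, but the two wrapping lemmas and the decomposition of parse structures must actually be proved before this counts as a proof.
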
 

This is much more than languages 
that can be generated by Lambek grammars, that are context free. In both cases, parsing as proof search is NP complete -- trying all the possibilities in pomset grammar is in NP (and likely to be NP complete), and provability Lambek calculus has been shown to be NP complete \cite{Pen03} --- of course if the Lambek grammar is converted into an extremely large context free grammar using the result of Matti Pentus \cite{Pen93} parsing of Lambek grammars is polynomial, cubic or better in the number of words in the sentence. 

Especially when using cuts and tree-like partial proof net this calculus is close to several coding of LTAG in non commutative linear logic \`a la Lambek-Abrusci \cite{AFV96}.\footnote{The related work \cite{JK96} which also encodes TAGs in non commutative linear logic \`a la Lambek-Abrusci, presented with natural deduction, requires \emph{ad hoc} extensions of the non commutative linear logic like some crossing of the axioms which are excluded from those Lambek-Abruci logics \cite{Roo91,Ret96tal}.}

\section{Conclusion and perspective} 
\label{concl} 

We presented an overview of pomset logic with both published and unpublished results. Pomset logic is a variant of linear logic, as the Lambek calculus is, and it can be used for modelling grammar, in particular for natural language as the Lambek calculus can. Apart from this, as said in the introduction, Lambek calculus and pomset logic, are quite  different, although they are both non commutative variants of (multiplicative) linear logic. Lambek calculus is a non commutative restriction of intuitionistic linear logic, while pomset logic is a non commutative  extension of classical linear logic.

But perhaps the resemblance is more abstract than that. Indeed Lambek was surprised that with proof nets people intend to replace a syntactic calculus, an algebraic structure,  with graphical or geometrical objects. However for pomset logic, the best presentation is certainly the calculus of dicographs, which can be viewed as terms, and therefore belong to algebra. It is not surprising that Lambek preferred my algebraic correctness criterion that use coherence spaces \cite{Ret94rc,Ret97,RetoreHDR} (Theorem \ref{semcorrect} in this paper), to the double trip condition of Girard of \cite{Gir87}. 

This presentation is by no means the necrology of pomset logic. For instance, Slavnov recently proposed a  sequent calculus which is complete w.r.t. pomset proof nets \cite{Slavnov2019}.  In his sequent calculus, multisets of formulas are endowed with binary relations on sequences of $n$ conclusions, and ``before" $\bef$ is, with respect to Slavnov's calulus,  a collapse of two connectives namely a before $\bef$ which behaves like  a 
``times" $\ts$ and a before $\bef$ which behaves  like a ``par" $\pa$.  Stra{\ss}burger contributed to pomset logic with the counterexample in the paper,  but his major contributions to pomset logic   are the development of Deep Inference (including SBV that we discussed in this paper)  and the comparison between pomset logic defined with handsome proof nets,   Deep Inference \cite{GugStr01}, proofs without syntax \cite{HHS2019lics};  Stra{\ss}burger is  presently exploring new ideas on applications of  pomset logic  to safety and  privacy together with Horne.  Guglielmi in e.g. \cite{gug2017lyon} is tuning the syntactic  rules of a self dual modality $\flag$ which is to  $\bef$ what $?$ and $!$ are to $\otimes$ and $\parr$: $\flag A$ is linearly 
equivalent to $\flag A\bef \flag A$ and $A$ is a retract of $\flag A$
that we defined in coherence spaces  \cite{Ret94mod,Retore2020tlla}.  Nguyen is exploring the combinatorial properties of handsome proof nets for pomset logic \cite{nguyen2019proof}. All those topical research directions related to pomset logic are excellent reasons to keep on exploring the pomset logic approach to non commutativity in logic, which is, up to now, the unique alternative to the kind of non commutativity in logic introduced in 1958 by  Joachim Lambek (1922-2014).

\paragraph{Acknowledgement} \emph{
I would like to thank Claudia Casadio and Philip Scott who have been patient enough for this article to be included in the volume. I would also like to thank the reviewers and Philip Scott again because they helped me a lot to improve the first draft of this paper, both in its form and contents. Let me also thank Matteo Acclavio, Tito Nguyen, Sylvain Pogodalla and especially Lutz Stra{\ss}Burger for helpful discussions on handsome proof nets, pomset logic and deep inference. This chapter owes a lot to  Sylvain Pogodalla: without his help, no pictures, no  proofnets, no chapter!}

\bibliographystyle{acm} 

\bibliography{bigbiblio} 

%\clearpage
%\thispagestyle{empty}
%\hfill
%\clearpage

\end{document}